\documentclass[11pt]{mathdoc}
\usepackage[scaled=0.95]{helvet} %
\usepackage[normalem]{ulem}
\hypersetup{linkcolor=black}

\usepackage{mcnotation}

\usepackage{tikz}
\usepackage[shortlabels, inline]{enumitem}
\usepackage{amsfonts,mathtools,dsfont}
\usepackage{xcolor}
\usepackage{IEEEtrantools}
\usepackage{graphicx}
\usepackage{subcaption}
\usepackage{accents}

\usepackage{appendix}
\AddToHook{cmd/appendix/before}{\crefalias{section}{appendix}}
\AddToHook{cmd/appendix/before}{\crefalias{subsection}{appendix}}

\usepackage[noabbrev]{cleveref}
\crefname{assumption}{Assumption}{Assumptions}
\crefformat{equation}{(#2#1#3)}
\crefmultiformat{equation}{(#2#1#3)}{ and~(#2#1#3)}{, (#2#1#3)}{ and~(#2#1#3)}
\crefrangeformat{equation}{equations~(#3#1#4) to~(#5#2#6)}
\crefformat{assumption}{Assumption~#2#1#3}
\crefformat{figure}{Figure~#2#1#3}

\usepackage{acronym}
\acrodef{pro}[{\scshape PrO}]{predictively oriented}
\acrodef{hmc}[HMC]{Hamiltonian Monte Carlo}
\acrodef{mcmc}[MCMC]{Markov chain Monte Carlo}
\acrodef{wgf}[WGF]{Wasserstein gradient flow}
\acrodef{mfld}[MFLD]{mean-field Langevin dynamics}

\newtheorem{theorem}{Theorem}
\newtheorem{proposition}{Proposition}
\newtheorem{lemma}{Lemma}
\newtheorem{corollary}{Corollary}

\newtheorem{assumption}{Assumption}

\usepackage{xcolor}
\definecolor{procolour}{HTML}{4477AA}
\definecolor{gibbscolour}{HTML}{EE6677}
\definecolor{bayescolour}{HTML}{228833}

\title{%
Markov chain Monte Carlo for predictively oriented posteriors%
}
\keywords{predictively oriented posteriors; mean-field Langevin dynamics; Monte Carlo methods.}
\author[1]{Yann McLatchie}
\author[1]{Louis Sharrock}
\author[2]{David T. Frazier}
\author[1]{Jeremias Knoblauch}
\affil[1]{Department of Statistical Science, University College London}
\affil[2]{Department of Econometrics and Business Statistics, Monash University}
\makeatletter
\def\runauthor{McLatchie, Sharrock, Frazier, and Knoblauch}
\let\inserttitle\@title
\def\runtitle{\inserttitle}
\makeatother

\begin{document}
\maketitle
\begin{abstract}
    The predictively oriented posterior offers principled  uncertainty quantification, even under model misspecification.
    However, it does not admit an explicit density and therefore cannot be computed using classical Monte Carlo sampling algorithms.
    We remedy this by deriving an approximation to the predictively oriented posterior whose density can be evaluated point-wise, and whose approximation error decays rapidly.
    These results are  illustrated on case studies from epidemiology, spatial statistics, and low-energy nuclear physics.
\end{abstract}
\allowdisplaybreaks
\section{Introduction}
Bayes' theorem optimally updates prior beliefs on unknown parameters to posterior beliefs conditional on observed data \citep{zellner_optimal_1988}. 
When the assumed statistical model agrees with the process that generated the observed data, the Bayesian posterior leads to predictions that are also optimal \citep{aitchison_goodness_1975}.
The accuracy of these predictions, however, is tethered to the assumed model's ability to recover the true data generating process:
when it is misspecified, optimality is no longer guaranteed, and the standard Bayesian posterior ceases to be the natural object of interest.
In response to this, \ac{pro} posteriors recast parameter uncertainty as a function of the induced predictive distribution directly \citep{mclatchie_predictively_2025,shen_prediction-centric_2025,lai_predictive_2024}.
Under some regularity conditions, \ac{pro} posteriors predictively dominate the Bayes posterior and converge to the predictively optimal model.

Translating this theoretically appealing construction into practice has so far been limited by computation.
The \ac{pro} posterior is defined implicitly through a non-linear optimisation programme over the space of probability measures.
This non-linearity, which is central to the very definition of the \ac{pro} posterior, is the main computational obstruction:
the minimiser is characterised \textit{only} by an implicit fixed-point equation.
While variational approaches to the \ac{pro} posterior can sidestep this particular computational hurdle \citep{lai_predictive_2024,masegosa_learning_2020,morningstar_pacm_2022}, they 
introduce biases and approximation errors whose effects on inference cannot be easily understood or controlled \citep{wei_direct_2021}.
This induces a direct trade-off between computational complexity and accuracy, and 
can violate many of the theoretical benefits demonstrated by \citet{mclatchie_predictively_2025}.
Consequently, the main way to draw samples from \ac{pro} posteriors with theoretical guarantees has been using a \ac{wgf} of the entropy-regularised predictive score \citep{mclatchie_predictively_2025, shen_prediction-centric_2025}. 
The resulting dynamics 
depend at every time on the current law, and thus cannot be simulated directly.
In practice, this law is replaced by an empirical measure, resulting in an interacting particle system.
Whilst asymptotically exact as both time and particle number tend to infinity, 
this introduces a number of hyper-parameters
for which principled tuning guidance remains limited.
Similarly, mature convergence diagnostics of the kind available for \acl{mcmc} \citep[\acs{mcmc};][]{martin_arviz_2026,stan,burkner_posterior_2026} are currently lacking in this setting.

Our central contribution concerns the equilibrium of this particle-discretised \ac{wgf}:
for any fixed number of particles, this stationary distribution induces an explicit Gibbs form on an extended parameter space.
Leveraging this observation, we propose an approximation to the \ac{pro} posterior that circumvents the necessity of simulating the \ac{wgf}.
Instead, it can now be sampled directly by \acs{mcmc}, allowing for adaptive tuning and the use of gradient-free methods. 
The remainder of the paper develops this programme.
In~\Cref{sec:pro} we formalise the definitions of the Bayes, Gibbs, and \ac{pro} posteriors.
We then discuss the natural \ac{wgf} approach to computation for \ac{pro} posteriors in~\Cref{sec:wgf}, and how this leads to a tractable target for Monte Carlo methods in~\Cref{sec:mcmc}.
In~\Cref{sec:theory} we quantify the resulting approximation error, while 
~\Cref{sec:log-score} specialises computation and theory to the case of the logarithmic scoring rule and its tempered approximation.
Our computational approaches are illustrated in~\Cref{sec:experiments} across three settings where misspecification is scientifically consequential, including for spatial data, simulation-based models in physics, and epidemiology.
\section{The predictively oriented posterior}\label{sec:pro}
Let $x_{1:n}\in\mathcal{X}^n$ denote data sampled from the unknown true data-generating process $P_0$.
For a statistical model $\{P_\theta:\,\theta\in\Theta\}$
and a prior $\Pi \in \mathcal{P}(\Theta)$, whose densities with respect to the Lebesgue measure we denote by $p_\theta$ and $\pi$ respectively, the Bayes posterior is given by
\begin{equation*}
    \pi_n(\theta\mid x_{1:n}) \propto p_\theta(x_{1:n})\pi(\theta).
\end{equation*}
While traditionally motivated through Bayes' rule, the Bayes posterior can also be represented as the minimiser of the variational optimisation problem \citep{knoblauch_optimization_2022,alquier_user-friendly_2024}
\begin{equation}\label{eq:bayes-posterior-defn}
    \Pi_n = \argmin_{Q\in\mathcal{P}(\Theta)}\left\{- \int \log p_\theta(x_{1:n}) \dt Q(\theta)+{\kld{Q}{\Pi}}\right\}.
\end{equation}
The solution to this programme, $\Pi_n$, processes information optimally when the underlying model is well-specified \citep{zellner_optimal_1988,aitchison_goodness_1975}, but is brittle under misspecification \citep{owhadi_brittleness_2015,bissiri_general_2016,jewson_principles_2018, knoblauch_optimization_2022}.

To rectify this shortcoming, a substantial body of work has sought alternative belief updates which are robust to forms of model misspecification \citep{knoblauch_doubly_2018,altamirano_robust_2024,matsubara_robust_2022,altamirano_robust_2023,matsubara_generalized_2023}.
One such class of methods is to replace the logarithmic loss in 
\cref{eq:bayes-posterior-defn} with alternative loss functions.
An important example of this is scoring rules, functions $S:\mathcal{P}(\mathcal{X}) \times \mathcal{X} \to \mathbb{R}\cup\{\infty\}$ whose empirical counterpart evaluated on data we write as $S_n(P_\theta) = n^{-1}\sum_{i=1}^n S(P_\theta,\,x_i)$. 
Unlike other losses, proper scoring rules always correspond to an associated statistical divergence 
$d_{\mathcal{S}}(P,P^\prime) = \mathcal{S}(P, P^\prime) - \mathcal{S}(P^\prime, P^\prime)$ where $\mathcal{S}(P, P^\prime) = \Ecurl[Y\sim P^\prime]{S(P,Y)}$.

Replacing the log loss in \cref{eq:bayes-posterior-defn} by a scoring rule yields the Gibbs posterior $G_n$ induced by $S$ which is the solution to
\begin{equation}\label{eq:gibbs-posterior-defn}
    G_n = \argmin_{Q\in\mathcal{P}(\Theta)}\left\{\int S_n(P_\theta) \dt Q(\theta)+\frac{\kld{Q}{\Pi}}{\lambda_n}\right\},
\end{equation}
where $\lambda_n$ is referred to as the learning rate and serves to re-weight the influence of data and the prior \citep{mclatchie_predictive_2025,syring_calibrating_2019,holmes_assigning_2017}.
Under the logarithmic score, $S_n(P_\theta) = -n^{-1}\sum_{i=1}^n\log p_\theta(x_i)$, and the choice $\lambda_n = n$, this recovers the Bayes posterior of~\cref{eq:bayes-posterior-defn} as a special case.
Further, due to the Donsker--Varadhan lemma, the solution to~\cref{eq:gibbs-posterior-defn} admits an explicit density much like the Bayes posterior.

For a discrepancy $d$, Bayes and Gibbs posteriors are both geared towards uncertainty quantification for the parameter $\theta^\star = \arg\min_\theta d(P_\theta,\,P_0)$ that best describes the data-generating process, regardless of whether or not the model is well-specified \citep{miller_asymptotic_2021,kleijn_misspecification_2006}.
Under mild conditions on the score and prior, $G_n$ concentrates around $\theta^\star$ so that its corresponding predictive distribution, $P_{G_n}$, becomes indistinguishable from $P_{\theta^\star}$ \citep[see e.g.][Lemma 1]{mclatchie_predictive_2025}.
When the model is misspecified however, $P_{\theta^\star}\ne P_0$, and Gibbs posteriors can be overconfident and poorly calibrated relative to the true data-generating process \citep{grunwald_inconsistency_2017, mclatchie_predictive_2025}.

To remedy this, \citet{mclatchie_predictively_2025} introduced \ac{pro} posteriors.
Unlike Gibbs and Bayes posteriors, these \textit{directly} score the predictive distribution $P_Q = \int P_\theta\dt Q(\theta)$ implied by marginalising a given posterior $Q \in \mathcal{P}(\Theta)$ over the posited model, and are defined as the solution to the variational programme
\begin{equation}\label{eq:pro-posterior-defn}
    \Qn = \argmin_{Q\in\mathcal{P}(\Theta)}\left\{ S_n(P_Q) +\frac{\kld{Q}{\Pi}}{\lambda_n}\right\}.
\end{equation}
Under a mild coercivity condition on the prior (detailed in~\Cref{sec:existence-uniqueness-app}) and for scores which are continuous and bounded from below, this programme admits a unique minimiser.
Unlike Gibbs posteriors, this minimiser need not converge to a point mass when the model is misspecified \citep[Corollary 1]{mclatchie_predictively_2025}.
Unlike Bayes and Gibbs posteriors, however, this minimiser is only characterised implicitly via a fixed-point equation,
making it impossible to evaluate the density of $\Qn$ pointwise.
We thus cannot compute the \ac{pro} posterior with standard sampling algorithms.
\section{Wasserstein gradient flows}\label{sec:wgf}
Existing approaches \citep[e.g.,][]{shen_prediction-centric_2025,mclatchie_predictively_2025} instead minimise the objective in an asymptotically exact manner via its \ac{wgf}.
To elucidate this, first rewrite \cref{eq:pro-posterior-defn} as
\begin{equation*}
    \mathcal{F}_n(Q) = \lambda_n S_n(P_Q) - \int \log \pi(\theta) \,\mathrm{d}Q(\theta) + \int \log q(\theta) \mathrm{d}Q(\theta),
\end{equation*} 
where $q$ denotes the density of $Q\in\mathcal{P}(\Theta)$. 
The \ac{wgf} is an absolutely continuous curve of distributions $\{Q_t\}_{t\geq 0}$ which evolves in the direction of steepest descent of the map $Q\mapsto\mathcal{F}_n(Q)$ with respect to the 2-Wasserstein distance.
Its {probabilistic} 
representation is described by
the mean-field Langevin dynamics \citep{hu_mean-field_2021,chizat_mean-field_2022}
\begin{equation}\label{eq:dWFn}
    \mathrm{d}\theta_t  = -\left\{\lambda_n\nabla_{\mathrm{W}_2}\,S_n(P_{Q_t})(\theta_t) - \nabla_\theta\log{\pi}(\theta_t) \right\} \mathrm{d}t + \sqrt 2 \mathrm{d} B_t,  
\end{equation}
where $Q_t=\mathrm{Law}(\theta_t)$, $\nabla_{\mathrm{W}_2}\,S_n(P_{Q_t})(\theta_t)$ denotes the Wasserstein gradient of $S_n(P_{Q_t})$ with respect to $Q_t$ evaluated at $\theta_t$, and $B=\{B_t\}_{t\geq 0}$ is a standard Brownian motion.
The law $Q_t$ is generally intractable, and thus must be approximated. 
The standard approach is to replace $Q_t$ with an empirical measure of $m$ particles, resulting in the following interacting particle system: for $j=1,\dots,m$,
\begin{equation}\label{eq:particle-dWFn}
    \mathrm{d}\theta_{j,t} = -\left[\lambda_n\nabla_{\mathrm{W}_2}\,S_n\{P_{\widehat Q_{m,t}(\theta_{1:m,t})}\}(\theta_{j,t}) - \nabla_\theta\log{\pi}(\theta_{j,t}) \right]\mathrm{d}t + \sqrt 2 \mathrm{d} B_{j,t},
\end{equation}
where $\smash{\widehat{Q}_{m,t}(\theta_{1:m,t}) = m^{-1}\sum_{j=1}^m \delta_{\{\theta_{j,t}\}}}$,  with $\delta_{\{x\}}$ the Dirac delta at $x$.\footnote{For an implementation, see the \texttt{pymc-prop} package at \url{https://github.com/pymc-devs/pymc-prop}.}
Under suitable conditions, this empirical measure does indeed approximate the original measure of interest for sufficiently many particles: $\widehat{Q}_{m,t}(\theta_{1:m,t})\rightarrow Q_t$ as $m\to\infty$ \citep[e.g.,][]{suzuki_convergence_2023}.

While asymptotically exact as both $m$ and $t$ diverge, 
this sampling algorithm introduces several hyper-parameters that require careful tuning in practice.
The most important of these are the step size for the time-discretisation of the underlying stochastic differential equation \citep[see e.g.][Theorem 2]{suzuki_convergence_2023} and the number of particles $m$.
The literature offers little principled guidance on how these should be chosen in our setting.
Notably, adaptive schedules for related dynamics such as those in \citet{sharrock_tuning-free_2025} do not transfer immediately and, with the notable exception of \citet{chazal_computable_2025}, there are also very few diagnostic tools by which to judge the quality of any given choice. 

\section{Markov chain Monte Carlo}\label{sec:mcmc}
The interacting-particle system of~\cref{eq:particle-dWFn} admits another interpretation which is particularly useful for computation.
Namely, for any fixed number of particles $m$, it is an over-damped Langevin diffusion on the product space $\Theta^m$.
It thus admits an explicit Gibbs stationary distribution whose density can be evaluated pointwise. 
This unlocks sampling via standard Monte Carlo methods, and entirely bypasses the computational difficulties of \ac{wgf}s.
While this finite-particle Gibbs representation has previously been recognised in related settings \citep{chen2025uniform,kook2024sampling,shen_prediction-centric_2025}, we are the first to analyse it systematically in the \ac{pro} posterior setting.

To make this explicit, let $(\theta_1,\ldots,\theta_m)=\theta_{1:m}\in\Theta^m$ be a collection of particles defining an empirical measure $\widehat Q_m(\theta_{1:m})$ whose induced predictive distribution is $P_{\widehat{Q}_{m}(\theta_{1:m})} = m^{-1}\sum_{j=1}^m P_{\theta_j}$.
Each particle configuration $\theta_{1:m}$ then induces the finite-dimensional loss
\begin{equation}\label{eq:particle-loss}
    \Lnm(\theta_{1:m}) = S_n\{P_{\widehat{Q}_{m}(\theta_{1:m})}\} = \frac{1}{n} \sum_{i=1}^n S\left(\frac{1}{m}\sum_{j=1}^m P_{\theta_j},\,x_i\right).
\end{equation}
Whenever the gradient exists, the chain rule implies
$
    m\nabla_{\theta_j} \Lnm(\theta_{1:m}) = \nabla_{\mathrm{W}_2}\,S_n\{P_{\widehat Q_m(\theta_{1:m})}\}(\theta_j).
$
It follows that \cref{eq:particle-dWFn} is exactly the over-damped Langevin diffusion for the potential
\begin{equation*}
    \theta_{1:m} \mapsto m\lambda_n \Lnm(\theta_{1:m}) - \sum_{j=1}^m \log \pi(\theta_j).
\end{equation*}
Under some mild regularity conditions, the interacting particle system of~\cref{eq:particle-dWFn} thus admits the stationary Gibbs measure \citep[Proposition 4.6]{pavliotis_stochastic_2014},
\begin{equation}\label{eq:particle-gibbs-density}
    q_{n,m}(\theta_{1:m}\mid x_{1:n}) \propto \exp\{-m\lambda_n \Lnm(\theta_{1:m})\} \pi^{\otimes m}(\theta_{1:m}).
\end{equation}
Since this is nothing more than a Gibbs measure on the extended space $\Theta^m$, the conditions required for its existence are weaker than those required for the \ac{pro} posterior of Section~\ref{sec:pro} (see \Cref{sec:existence-uniqueness-app}).
Moreover, it can also be expressed as the unique optimum of the linear variational objective \citep[Lemma 2.2]{alquier_user-friendly_2024}
\begin{equation}\label{eq:particle-objective}
    Q_{n,m}=\argmin_{Q_m\in\mathcal{P}(\Theta^m)} \int \Lnm(\theta_{1:m}) \dt Q_m(\theta_{1:m}) + \frac{\kld{Q_m}{\Pi^{\otimes m}}}{m\lambda_n}.
\end{equation}
In this sense, the particle approximation converts the original measure-valued objective in~\cref{eq:pro-posterior-defn} into an ordinary linear variational objective on the extended space $\Theta^m$.

To approximate the \ac{pro} posterior $Q_n$, it remains to relate $Q_{n,m}\in\mathcal{P}(\Theta^m)$ to a suitable distribution in $\mathcal{P}(\Theta)$.
To this end, we define the \emph{particle \ac{pro} posterior} as
\begin{equation*}
    \barQnm = \int \frac{1}{m}\sum_{j=1}^m \delta_{\{\theta_j\}} \dt \Qnm(\theta_{1:m}) = \frac{1}{m}\sum_{j=1}^m \Qnm^{(j)}
\end{equation*}
where $\Qnm^{(j)}(\theta) = \int \Qnm(\theta_{1:m})\dt\theta_{-j}$ denotes the common $j$-th one-particle marginal of $\Qnm$.
Given $B$ samples $\{\theta_{1:m}^{(b)}\}_{b=1}^B$ from $\Qnm$, we can compute expectations with respect to $\barQnm$ by averaging over both the $B$ draws and the $m$ particle coordinates.
For the predictive, this yields the approximation $P_{\barQnm}\approx (mB)^{-1}\sum_{j=1}^m\sum_{b=1}^B P_{\theta_j^{(b)}}$.

This perspective opens the door to drastically simplified computation:
target the density in \cref{eq:particle-gibbs-density} directly with \acs{mcmc} rather than simulating the \ac{wgf} of~\cref{eq:particle-dWFn}, and average over particles.
Doing so has numerous advantages.
First, in \acs{mcmc}, the step size can be automatically set during a warm-up phase to meet a target acceptance rate \citep{roberts_weak_1997, hoffman_no-u-turn_2014}. 
Second, there exists a mature suite of convergence diagnostics for Monte Carlo samplers \citep{gelman_inference_1992, geyer_practical_1992}.
Further, a Metropolis--Hastings correction renders $\Qnm$ the exact invariant distribution, so that the time-discretisation bias of the flow is removed rather than merely reduced.
Finally, the approach is more lenient in its requirements on $\Lnm$:
in situations where the Gibbs density of~\cref{eq:particle-gibbs-density} is non-differentiable, gradient-free schemes such as random-walk Metropolis--Hastings \citep{metropolis_equation_1953, hastings_monte_1970} can be used.
Naturally, the computational convenience of this approach comes at the cost of an approximation error.
{We now quantify its magnitude and rate of decay as a function of the number of particles $m$.}

\section{Theoretical guarantees}\label{sec:theory}
\subsection{Finite-particle bias}\label{sec:approx-bias}
In this section we show that the particle \ac{pro} posterior $\barQnm$ is close to the exact \ac{pro} posterior $Q_n$ under a mild convexity assumption that is satisfied for most scores of interest.
\begin{assumption}\label{ass:convex}
    The scoring rule $S(\cdot,x)$ is convex in its first argument for all $x\in\mathcal{X}$, and its corresponding divergence $d_{\mathcal{S}}(\cdot,\,\cdot)$ is jointly convex in both arguments.
\end{assumption}
Let $Q^{\otimes m}=\bigotimes_{j=1}^{m}Q$ denote the $m$-fold product measure of $Q$,
and define the \textit{finite-particle bias} as
\begin{equation}\label{eq:beta-defn}
    \beta_{n,m}(Q)
    =
    \int  \Lnm(\theta_{1:m}) \dt Q^{\otimes m}(\theta_{1:m})
    -
    S_n(P_Q).
\end{equation}
This term quantifies the difference between the finite-particle objective of~\cref{eq:particle-objective} and the exact \ac{pro} objective in~\cref{eq:pro-posterior-defn}. 
Our first result is to show that this bias is always non-negative. 
This is essentially immediate from Jensen's inequality; see~\Cref{sec:non-negativity} for a formal proof.
{
\begin{lemma}\label{lem:beta-nonnegative}
    Under \Cref{ass:convex}, $\beta_{n,m}(Q)\ge 0$ for all $Q\in\mathcal{P}(\Theta)$.
\end{lemma}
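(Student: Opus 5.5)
The plan is to apply Jensen's inequality pointwise in the data, using convexity of $S(\cdot,x)$ in its first argument (\Cref{ass:convex}). Fix $Q\in\mathcal{P}(\Theta)$ and write $\beta_{n,m}(Q)=n^{-1}\sum_{i=1}^n \Delta_i$, where
\begin{equation*}
    \Delta_i = \int S\Bigl(\frac{1}{m}\sum_{j=1}^m P_{\theta_j},\,x_i\Bigr)\dt Q^{\otimes m}(\theta_{1:m}) - S(P_Q,\,x_i).
\end{equation*}
It therefore suffices to show $\Delta_i\ge 0$ for each $i$. The joint divergence convexity in \Cref{ass:convex} is not needed here; only convexity in the first argument is used.

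The key observation is that the mixture of the random empirical predictive recovers $P_Q$:
\begin{equation*}
    \int \frac{1}{m}\sum_{j=1}^m P_{\theta_j}\dt Q^{\otimes m}(\theta_{1:m}) = \frac{1}{m}\sum_{j=1}^m\int P_{\theta_j}\dt Q(\theta_j) = P_Q,
\end{equation*}
since every one-particle marginal of $Q^{\otimes m}$ equals $Q$. So $P_Q$ is the barycentre (expectation) of the $\mathcal{P}(\mathcal{X})$-valued random element $\theta_{1:m}\mapsto m^{-1}\sum_j P_{\theta_j}$ under $Q^{\otimes m}$. Jensen's inequality for the convex functional $P\mapsto S(P,x_i)$ then gives $S(P_Q,x_i)\le \int S(m^{-1}\sum_j P_{\theta_j},x_i)\dt Q^{\otimes m}$, which is $\Delta_i\ge0$. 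Summing over $i$ completes the argument.

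The main obstacle is justifying Jensen's inequality for measure-valued random elements, since convexity is only assumed along finite mixtures. The simplest route is to observe that a convex function on a convex set satisfies Jensen's inequality for finitely supported mixing measures. One then extends this by approximation: approximate $Q$ by finitely supported measures $Q^{(k)}$ so that both $P_{Q^{(k)}}\to P_Q$ and the integrals converge, using the continuity and lower boundedness of $S$ assumed for existence. A lower-semicontinuity argument on $P\mapsto S(P,x)$ handles the left side, and dominated or monotone convergence handles the right side. If the integral is $+\infty$, the inequality holds trivially.

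Alternatively, one can invoke the standard infinite-dimensional Jensen inequality for a convex, lower-semicontinuous functional on the convex set of probability measures. This applies because the mixture integral is a Bochner/Pettis-type barycentre in the space of finite signed measures. I would state this as the formal justification and keep the approximation argument as a remark.
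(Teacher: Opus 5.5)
Your argument is essentially the paper's: it shows $\mathbb{E}_{Q^{\otimes m}}\{\widehat Q_m(A)\}=Q(A)$ for every measurable $A$, so the random empirical predictive has mean $P_Q$, and then applies Jensen's inequality to the convex map $P\mapsto S_n(P)$. Working datum by datum, as you do, is equivalent, and indeed only the first half of \Cref{ass:convex} is used.

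The paper invokes Jensen directly with no limiting argument, so your justification remark goes beyond it. If you keep the approximation route, note two things: it imports continuity and lower-boundedness hypotheses that are not in the lemma, and dominated or monotone convergence does not by itself control integrals against the changing measures $Q^{(k)\otimes m}$. For the concrete scores, the cleanest justification is scalar (or Hilbert-space) Jensen applied through the linear maps $P\mapsto p(x)$ or $P\mapsto \mu(P)$.
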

}
It follows that the finite-particle loss in \eqref{eq:particle-objective} is an upper bound on the exact \ac{pro} objective in~\cref{eq:pro-posterior-defn}. 
We now assume that this bias is vanishing at a controllable rate in the number of particles. 
It is worth emphasising that this is not a restrictive condition. 
In particular, it is satisfied in the two most common cases of interest:
the logarithmic score, under a technical moment condition, and kernel scoring rules;
see~\Cref{lemma:log-score-beta,lemma:di-beta} in~\Cref{app:additional-results} respectively. 
\begin{assumption}\label{ass:non-trivial-misspecification-exact}
    The finite-particle bias for all $Q\in\mathcal{P}(\Theta)$ is vanishing in the number of particles: 
    $\beta_{n,m}(Q) \lesssim M_n(Q)/m$,
    for some $M_n(Q)>0$.
\end{assumption}
We next establish convergence of the marginal posterior ${\barQnm}$ towards the underlying \ac{pro} posterior ${\Qn}$, at a rate controlled by the finite-particle bias. 
\begin{lemma}\label{lemma:kl-closeness}
    Suppose that~\Cref{ass:convex} holds and that \cref{eq:pro-posterior-defn} admits a unique minimiser.
    Then, 
    $d_{\mathrm{KL}}(\barQnm;\,\Qn)\leq\lambda_n\beta_{n,m}(\Qn)$.
    Additionally, if~\Cref{ass:non-trivial-misspecification-exact} holds at $Q_n$, then 
    $d_{\mathrm{KL}}(\barQnm;\,\Qn)\lesssim\lambda_n / m$.
\end{lemma}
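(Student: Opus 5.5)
The plan is to sandwich the optimal value of the finite-particle objective \cref{eq:particle-objective} between the exact \ac{pro} objective evaluated at $\barQnm$ and at $\Qn$. I then turn the resulting gap into a Kullback--Leibler bound through a strong-convexity (``Pythagorean'') inequality for the \ac{pro} objective around its minimiser. Write
\[
\mathcal{J}_n(Q)=S_n(P_Q)+\kld{Q}{\Pi}/\lambda_n
\]
for the objective in \cref{eq:pro-posterior-defn}, and
\[
\mathcal{J}_{n,m}(Q_m)=\int \Lnm\,\dt Q_m+\kld{Q_m}{\Pi^{\otimes m}}/(m\lambda_n)
\]
for the one in \cref{eq:particle-objective}.

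\emph{Upper bound.} Since $\Qnm$ minimises $\mathcal{J}_{n,m}$, we may compare it with the product measure $\Qn^{\otimes m}$. The tensorisation identity $\kld{\Qn^{\otimes m}}{\Pi^{\otimes m}}=m\,\kld{\Qn}{\Pi}$ and the definition \cref{eq:beta-defn} then give
\[
\mathcal{J}_{n,m}(\Qnm)\le \mathcal{J}_{n,m}(\Qn^{\otimes m})=\mathcal{J}_n(\Qn)+\beta_{n,m}(\Qn).
\]

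\emph{Lower bound.} For the loss term, note that
\[
\int P_{\widehat Q_m(\theta_{1:m})}\,\dt\Qnm(\theta_{1:m})=P_{\barQnm}.
\]
Convexity of $S(\cdot,x)$ (\Cref{ass:convex}) and Jensen's inequality therefore give $\int\Lnm\,\dt\Qnm\ge S_n(P_{\barQnm})$; this is the same argument as in \Cref{lem:beta-nonnegative}. For the entropy term, I use two facts. The first is superadditivity of KL with respect to a product reference measure, $\kld{\Qnm}{\Pi^{\otimes m}}\ge\sum_{j}\kld{\Qnm^{(j)}}{\Pi}$. The second is convexity of $\kld{\cdot}{\Pi}$, which yields $\sum_j\kld{\Qnm^{(j)}}{\Pi}\ge m\,\kld{\barQnm}{\Pi}$. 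Combining these, $\mathcal{J}_{n,m}(\Qnm)\ge\mathcal{J}_n(\barQnm)$, and hence
\[
\mathcal{J}_n(\barQnm)-\mathcal{J}_n(\Qn)\le\beta_{n,m}(\Qn).
\]

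\emph{From objective gap to KL.} It remains to show that
\[
\mathcal{J}_n(Q)-\mathcal{J}_n(\Qn)\ge\kld{Q}{\Qn}/\lambda_n
\]
for every $Q$; setting $Q=\barQnm$ then gives the first claim. Let $\Phi(Q)=S_n(P_Q)$, which is convex in $Q$ because $Q\mapsto P_Q$ is linear and $S$ is convex in its first argument. Let $\phi_n=\delta\Phi/\delta Q(\Qn)$ be its first variation at $\Qn$. The first-order optimality condition for the unique minimiser reads $q_n\propto\pi\exp(-\lambda_n\phi_n)$. From it one obtains the identity
\[
\kld{Q}{\Pi}-\kld{\Qn}{\Pi}=\kld{Q}{\Qn}-\lambda_n\int\phi_n\,\dt(Q-\Qn).
\]
The gap therefore equals $\Phi(Q)-\Phi(\Qn)-\int\phi_n\,\dt(Q-\Qn)+\kld{Q}{\Qn}/\lambda_n$. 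The bracketed convexity remainder is non-negative, which gives the claim.

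\emph{Main obstacle.} The delicate step is this last one: justifying the existence of the first variation and the Gibbs-type fixed-point characterisation of $\Qn$ without extra smoothness. If the first variation is not available, I would work directly along the segment $Q_t=(1-t)\Qn+tQ$. Minimality gives $\mathcal{J}_n(Q_t)\ge\mathcal{J}_n(\Qn)$, and convexity bounds $\Phi(Q_t)\le(1-t)\Phi(\Qn)+t\Phi(Q)$. For the KL term I would use the exact identity
\[
(1-t)\kld{\Qn}{\Pi}+t\,\kld{Q}{\Pi}-\kld{Q_t}{\Pi}=(1-t)\kld{\Qn}{Q_t}+t\,\kld{Q}{Q_t}.
\]
Combining these, dividing by $t$ and letting $t\downarrow0$, so that $\kld{Q}{Q_t}\to\kld{Q}{\Qn}$ by lower semicontinuity and monotone convergence, yields the same inequality.

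The second statement is then immediate: by \Cref{ass:non-trivial-misspecification-exact} at $\Qn$, $\lambda_n\beta_{n,m}(\Qn)\lesssim\lambda_n M_n(\Qn)/m$.
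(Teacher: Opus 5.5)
Your proposal is correct and follows the paper's proof essentially step for step. Your sandwich is \Cref{lemma:bound}: Jensen for the loss term, superadditivity and convexity of the relative entropy, and the product competitor $\Qn^{\otimes m}$. Your gap-to-KL step is the Bregman identity of \Cref{prop:entropy-curvature}, obtained from the fixed-point form of $\Qn$.

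Your fallback along the segment $(1-t)\Qn+tQ$ is also sound, and it is slightly stronger. It yields $\mathcal{J}_n(Q)-\mathcal{J}_n(\Qn)\geq\kld{Q}{\Qn}/\lambda_n$ from minimality of $\Qn$ and convexity of $Q\mapsto S_n(P_Q)$ alone. It therefore does not need the first-variation and fixed-point hypotheses that the paper's route implicitly requires beyond the lemma's stated assumptions.
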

\Cref{lemma:kl-closeness} says that the particle \ac{pro} posterior is close to the ideal \ac{pro} posterior in Kullback--Leibler divergence.
Moreover, under~\Cref{ass:non-trivial-misspecification-exact}, this error vanishes rapidly in $m$.
This result can be further extended to show convergence between $k$-marginals of $\Qnm$ and the $k$-product law of $\Qn$; see~\Cref{sec:kl-closeness-supp}.
These results are significant in their own right: they mirror existing propagation of chaos type results \citep[e.g.,][]{kook2024sampling}, but are here obtained under weaker conditions.
We refer to
\Cref{sec:poc-supp} for a more detailed discussion on this point.

\subsection{Asymptotic properties}
The results above provide asymptotic guarantees for the particle \ac{pro} posterior in the limit as $m\rightarrow\infty$.
We now analyse its convergence in $n$.
In the following, for a given sequence of random variables $\{X_n\}_{n\in\mathbb{N}}$ and scalars $\{a_n\}_{n\in\mathbb{N}}$, we write $X_n\lesssim a_n$ to mean that $P_0(\lim_{n\to\infty}X_n/a_n \leq C) = 1$ for some constant $C > 0$.
Further, if both $X_n\lesssim a_n$ and $a_n\lesssim X_n$ then we write $X_n \asymp a_n$.
We denote expectations taken with respect to $P_0$ by $\mathbb{E}_0$.

To prove concentration of our approximation's predictive distribution to the population optimal predictive,
we leverage a sufficient condition which was previously used in \citet{mclatchie_predictively_2025} to derive foundational theoretical guarantees for \ac{pro} posteriors.
\begin{assumption}\label{ass:entropy}
    Let $r_n\downarrow0$ be a decreasing positive sequence, then for $\mathcal{P}_\Pi = \{Q\in\mathcal{P}(\Theta):Q\ll\Pi\}$,
    \begin{equation*}
        \mathbb{E}_0 \sup_{Q\in\mathcal{P}_\Pi} \vert S_n(P_Q) - \Ecurl[0]{S_n(P_Q)} \vert \lesssim r_n.
    \end{equation*}
\end{assumption}
That is, we require the empirical score to concentrate around its expectation at rate $r_n$ uniformly on $\mathcal{P}_\Pi$.
This assumption is known to hold with $r_n\asymp \log(n) / n^{1/2}$ in the case of kernel scoring rules with bounded kernels, and for the log score under some further restrictions \citep[see][]{mclatchie_predictively_2025}. 
Additionally, we assume the existence of a predictively optimal distribution $Q^\star$ that satisfies mild regularity conditions.
\begin{assumption}\label{ass:concentration_existence}
    There exists $Q^{\star}\in\arg\inf_{Q\in\mathcal{P}(\Theta)} d_{\mathcal{S}}(P_Q,\,P_0),$ such that $\kld{Q^\star}{\Pi}<\infty$, and $\mathbb{E}_0\{ M_n(Q^\star)\}<\infty$.
\end{assumption}
That is, there exists a measure $Q^\star$ which is population-optimal in terms of its induced predictive distribution $P_{Q^\star}$, which is absolutely continuous with respect to the prior $\Pi$, and for which the finite-particle bias of~\cref{eq:beta-defn} is finite in expectation.
With this, our first result shows that, even though $\barQnm$ is an approximation to the \ac{pro} posterior $Q_n$, it produces eventually indistinguishable predictive distributions as the number of particles is chosen to be sufficiently large.
\begin{theorem}\label{thm:predictive-concentration}
    Under~\Cref{ass:convex,ass:non-trivial-misspecification-exact,ass:entropy,ass:concentration_existence}, for $n$ sufficiently large but finite, taking $\lambda_n\asymp n^{1/2}/\log(n)$,
    \begin{equation*}
        \Ecurl[0]{d_{\mathcal{S}}(P_{\barQnm}, P_0)} \lesssim 
        d_{\mathcal{S}}(P_{Q^\star}, P_0) 
        + \max\left\{\frac{\log(n)}{n^{1/2}},\,r_n\right\}+ \frac{1}{m}.
    \end{equation*}
\end{theorem}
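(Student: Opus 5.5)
We prove the bound by a standard oracle-inequality argument applied directly to the finite-particle variational problem \cref{eq:particle-objective}, using $Q^{\star\otimes m}$ as the comparator.

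\emph{Step 1 (Jensen reduction).} By \Cref{ass:convex}, $S(\cdot,x)$ is convex, so $S_n(P_{\barQnm}) \le \int \Lnm \dt \Qnm$, since $P_{\barQnm}=\int P_{\widehat Q_m(\theta_{1:m})}\dt\Qnm$. This is the argument behind \Cref{lem:beta-nonnegative}, now applied to a non-product law. Similarly, $\kld{\barQnm}{\Pi}\le \frac1m\sum_j \kld{\Qnm^{(j)}}{\Pi}\le \frac1m\kld{\Qnm}{\Pi^{\otimes m}}$, by convexity of KL and the subadditivity of KL for product references. The second inequality may not even be needed.

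\emph{Step 2 (optimality of $\Qnm$).} Since $\Qnm$ minimises \cref{eq:particle-objective}, comparing with $Q^{\star\otimes m}$ gives
\begin{equation*}
\int \Lnm\dt\Qnm + \frac{\kld{\Qnm}{\Pi^{\otimes m}}}{m\lambda_n}\le \int\Lnm \dt Q^{\star\otimes m} + \frac{\kld{Q^\star}{\Pi}}{\lambda_n} = S_n(P_{Q^\star}) + \beta_{n,m}(Q^\star)+\frac{\kld{Q^\star}{\Pi}}{\lambda_n}.
\end{equation*}
Here we used $\kld{Q^{\otimes m}}{\Pi^{\otimes m}} = m\,\kld{Q}{\Pi}$. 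Dropping the nonnegative KL term on the left and combining with Step 1 yields
\begin{equation*}
S_n(P_{\barQnm}) \le S_n(P_{Q^\star}) + \beta_{n,m}(Q^\star) + \frac{\kld{Q^\star}{\Pi}}{\lambda_n}.
\end{equation*}

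\emph{Step 3 (from empirical to population scores).} Note that $\barQnm\ll\Pi$, since $\Qnm\ll\Pi^{\otimes m}$, so $\barQnm\in\mathcal{P}_\Pi$; also $Q^\star\in\mathcal{P}_\Pi$ by \Cref{ass:concentration_existence}. Write $\mathcal{S}(P,P_0)=\mathbb{E}_0 S_n(P)$ for fixed $P$. Then
\begin{equation*}
\mathcal{S}(P_{\barQnm},P_0) \le S_n(P_{\barQnm}) + \sup_{Q\in\mathcal{P}_\Pi}\lvert S_n(P_Q)-\mathbb{E}_0 S_n(P_Q)\rvert,
\end{equation*}
and analogously for $Q^\star$. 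Subtracting $\mathcal{S}(P_0,P_0)$ from both sides converts the scores into divergences $d_{\mathcal S}$. Taking $\mathbb{E}_0$ and applying \Cref{ass:entropy} to the two supremum terms gives a contribution of $2r_n$. By \Cref{ass:non-trivial-misspecification-exact} and \Cref{ass:concentration_existence}, $\mathbb{E}_0\beta_{n,m}(Q^\star)\lesssim \mathbb{E}_0 M_n(Q^\star)/m\lesssim 1/m$. This requires $\mathbb{E}_0 M_n(Q^\star)$ to be bounded uniformly in $n$, which I would read into the assumption.

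\emph{Step 4 (choice of $\lambda_n$).} Finally, $\kld{Q^\star}{\Pi}/\lambda_n\asymp \log(n)/n^{1/2}$ for the given $\lambda_n$. Collecting terms gives the claimed bound $d_{\mathcal S}(P_{Q^\star},P_0)+\max\{\log n/n^{1/2},r_n\}+1/m$.

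\emph{Main obstacle.} The delicate step is Step 3. The empirical-process control must be applied at the data-dependent measure $\barQnm$, which is why the uniform supremum over $\mathcal{P}_\Pi$ in \Cref{ass:entropy} is essential; we must also check that $\barQnm$ indeed lies in $\mathcal{P}_\Pi$. A further point is interchanging expectation with the random $\beta_{n,m}(Q^\star)$, whose bound depends on the data through $M_n$. Joint convexity of $d_{\mathcal S}$ is not needed beyond Step 1, except possibly to handle $\mathcal{S}(P_0,P_0)$ finiteness.
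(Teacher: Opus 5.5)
Your proposal is correct and is essentially the paper's proof: Steps 1--2 reproduce \Cref{lemma:bound} at $Q=Q^\star$ (Jensen for $S_n(P_{\barQnm})$, then optimality of $\Qnm$ against the product competitor $Q^{\star\otimes m}$, with the Kullback--Leibler term dropped). The paper then likewise applies \Cref{ass:entropy} at the data-dependent $\barQnm$, uses \Cref{ass:non-trivial-misspecification-exact,ass:concentration_existence} to get $\mathbb{E}_0\beta_{n,m}(Q^\star)\lesssim 1/m$, and chooses $\lambda_n$ as you do. The only difference is cosmetic: the paper uses $\mathbb{E}_0 S_n(P_{Q^\star})=\mathcal{S}(P_{Q^\star},P_0)$ directly for the deterministic $Q^\star$, so it pays $r_n$ rather than your $2r_n$. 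Your checks that $\barQnm\in\mathcal{P}_\Pi$ and that $\mathbb{E}_0 M_n(Q^\star)$ must be bounded uniformly in $n$ are points the paper leaves implicit.
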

This result compares the divergence between the predictive distribution induced by the particle \ac{pro} posterior
and the true data-generating process.
The bound decomposes into three interpretable contributions:
the first term is the irreducible predictive error of the model class; 
the second represents the finite sample error and corresponds to the parametric rate $\log(n)/n^{1/2}$ under relatively mild regularity conditions;
and the last is the bias due to the approximation of $\Qn$ with $\barQnm$.
Critically, the first two terms are unaffected by the approximation, and also appear in the 
bounds
derived by \citet{mclatchie_predictively_2025}:
relative to the generalisation guarantees of exact \ac{pro} posteriors, our approximation pays only an incremental penalty vanishing rapidly in $m$.
This is particularly notable since the user directly controls the choice of $m$.

When concentration of the particle \ac{pro} posterior itself is of interest, it is possible to translate these results on the predictive distribution back to the posterior.
This is made formal in~\Cref{sec:posterior-concentration}.

\section{The logarithmic score}\label{sec:log-score}

\subsection{Connection to mixture modelling}
In the case of the logarithmic score, the particle \ac{pro} posterior based on $m$ particles bears some similarity to a Bayesian mixture model.
Namely, taking $\alpha = {m\lambda_n}/{n}$ the particle \ac{pro} posterior of~\cref{eq:particle-gibbs-density} becomes 
\begin{equation}\label{eq:replica-log-target}
    q_{n,m}^{(\alpha)}
    (\theta_{1:m}\mid x_{1:n}) \propto \prod_{i=1}^{n}\left\{\frac{1}{m}\sum_{j=1}^{m}p_{\theta_j}(x_i)\right\}^\alpha
    \pi^{\otimes m}
    (\theta_{1:m}).
\end{equation}
That is, it is the $\alpha$-tempered Bayes posterior based on an equally-weighted mixture likelihood with density $m^{-1}\sum_{j=1}^{m}p_{\theta_j}$ whose parameters are endowed with the fully-factorised prior $\Pi^{\otimes m}(\theta_{1:m})$.
This relates to the intuition that the \ac{pro} posterior of~\cref{eq:pro-posterior-defn} is an estimate of a nonparametric mixture over model parameters.
In turn, our approximation of it with $m$ particles can be heuristically understood as an atomic approximation of that nonparametric mixture.
This similarity with standard Bayesian techniques allows us to leverage more classical theoretical and computational tools for \ac{pro} posteriors,
which we present below.

There remain, however, some fundamental conceptual differences to mixture modelling which are worth clarifying.
Unlike in standard mixture modelling, here we do not need to model the mixture, and there is no notion of latent allocation in our setting.
Similarly, in our case there are no concerns with identifiability.
In this sense, despite some superficial similarities to mixture modelling, the objective of~\cref{eq:replica-log-target} does not share the interpretations standard to that literature.

\subsection{Parallel particle updating}
\label{sec:replica-gibbs}
Building on~\cref{eq:replica-log-target}, we now identify a latent-variable representation that enables parallel updates in the case where $\alpha=m\lambda_n/n$ is a positive integer.
For each $x_i$ we sample $\alpha$ artificial replicas $\tilde x_{1:\alpha}$ and corresponding allocation variables $z_{i,s}$ for $i=1,\ldots,n,\,s=1,\ldots,\alpha$ according to the hierarchical model
\begin{equation}\label{eq:replica-augmentation}
    \theta_j\sim\Pi, \quad
    z_{i,s}\sim\mathrm{Uniform}\{1,\ldots,m\}, \quad
    \tilde x_{i,s}\mid z_{i,s},\theta_{1:m}\sim P_{\theta_{z_{i,s}}}.
\end{equation}
Although this construction introduces $n\alpha$ allocation variables, they enter the parameter conditionals only through the counts
\begin{equation*}
    c_{i,j}=\sum_{s=1}^\alpha\mathbf{1}_{\{z_{i,s}=j\}}.
\end{equation*}
Write the count vectors $c_i=(c_{i,1},\ldots,c_{i,m})$, and let 
$
    \widetilde Q_{n,m,\alpha}(\theta_{1:m},c_{1:n}\mid x_{1:n}) 
$
denote the joint posterior distribution over $\theta_{1:m}$ and the count vectors $c_{1:n}$, conditioned on $\alpha$ replicates of the observations $x_{1:n}$.
The following result gives its full conditional distributions.
\begin{proposition}\label{prop:replica-gibbs}
    Under the augmented model in~\cref{eq:replica-augmentation}, the marginal distribution of $\theta_{1:m}$ under $\widetilde Q_{n,m,\alpha}$ is exactly $\Qnm$ in~\cref{eq:replica-log-target}.
    Moreover, conditional on $\theta_{1:m}$, the count vectors $c_i$ are independent and 
    \begin{equation}\label{eq:replica-count-conditional}
        c_i\mid\theta_{1:m},x_{1:n}
        \sim
        \operatorname{Multinomial} \left(\alpha;\rho_{i,1},\ldots,\rho_{i,m}\right),
        \quad
        \rho_{i,j}
        =
        \frac{p_{\theta_j}(x_i)}{\sum_{\ell=1}^m p_{\theta_\ell}(x_i)}.
    \end{equation}
    Conditional on the counts, the particles are independent with
    \begin{equation}\label{eq:replica-component-conditional}
         \tilde q_{n,m,\alpha}
        (\theta_j\mid c_{1:n},x_{1:n})
        \propto
        \left\{\prod_{i=1}^n p_{\theta_j}(x_i)^{c_{i,j}}\right\}
         \pi 
        (\theta_j).
    \end{equation}
\end{proposition}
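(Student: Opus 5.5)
The plan is to write down the joint density of the augmented model in \cref{eq:replica-augmentation}, conditioned on the event $\tilde x_{i,s}=x_i$ for all $i$ and $s$, and then read off its marginal and full conditionals by direct algebra. Conditioning on the replicas equalling the observations, the unnormalised joint density of $(\theta_{1:m},z)$ is
\begin{equation*}
    \pi^{\otimes m}(\theta_{1:m})\prod_{i=1}^n\prod_{s=1}^\alpha \frac{1}{m}\,p_{\theta_{z_{i,s}}}(x_i),
\end{equation*}
where $z=(z_{i,s})_{i\le n,\,s\le\alpha}$. Since $\alpha$ is a positive integer, the product over $s$ is finite and this is a well-defined density on $\Theta^m\times\{1,\ldots,m\}^{n\alpha}$.

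For the marginal of $\theta_{1:m}$, sum out every $z_{i,s}$. The summand factorises over the pairs $(i,s)$, so each sum gives $m^{-1}\sum_{j=1}^m p_{\theta_j}(x_i)$. The result is $\pi^{\otimes m}(\theta_{1:m})\prod_i\{m^{-1}\sum_j p_{\theta_j}(x_i)\}^\alpha$, which is exactly \cref{eq:replica-log-target}. The normalising constants therefore agree whenever one of them is finite, and the existence of $\Qnm$ guarantees this.

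For the count conditional, note that given $\theta_{1:m}$ the joint density factorises over $(i,s)$. Hence the $z_{i,s}$ are conditionally independent, with $\Pr(z_{i,s}=j\mid\theta_{1:m})\propto p_{\theta_j}(x_i)$, which is $\rho_{i,j}$. Summing $\alpha$ i.i.d.\ categorical draws with these probabilities gives $c_i\sim\operatorname{Multinomial}(\alpha;\rho_{i,1},\ldots,\rho_{i,m})$. Independence across $i$ follows because $c_i$ depends only on the block $\{z_{i,s}\}_s$.

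For the particle conditional, rewrite the likelihood through the counts as $\prod_i\prod_s p_{\theta_{z_{i,s}}}(x_i)=\prod_{j=1}^m\prod_{i=1}^n p_{\theta_j}(x_i)^{c_{i,j}}$. The joint density of $(\theta_{1:m},c_{1:n})$ is obtained by summing over all $z$ compatible with the counts. This sum only contributes the multinomial coefficients $\prod_i\alpha!/\prod_j c_{i,j}!$ together with the factor $m^{-n\alpha}$, and neither depends on $\theta$. The conditional of $\theta_{1:m}$ given $c_{1:n}$ is therefore proportional to $\prod_j\{\prod_i p_{\theta_j}(x_i)^{c_{i,j}}\}\pi(\theta_j)$. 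This is a product of the factors in \cref{eq:replica-component-conditional}, so the particles are conditionally independent.

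The only delicate point is making the notion of ``conditioning on $\alpha$ replicates of the observations'' rigorous. I would define $\widetilde Q_{n,m,\alpha}$ directly as the normalised version of the displayed joint density, so that it is the posterior under the hierarchical model with the replicas fixed at the data. The remaining steps are bookkeeping, in particular checking that the compatible-allocation count factor is free of $\theta$.
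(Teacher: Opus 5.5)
Your proposal is correct and follows essentially the same route as the paper. Both arguments sum out each allocation to recover the equal-weight mixture raised to the power $\alpha$, apply Bayes' rule to the conditionally independent categorical allocations and aggregate them into multinomial counts, and group the likelihood factors by component to obtain factorised particle conditionals; your explicit sum over the allocations compatible with given counts, showing the multinomial coefficients and the factor $m^{-n\alpha}$ are free of $\theta_{1:m}$, handles conditioning on $c_{1:n}$ (rather than on the allocations themselves) slightly more carefully than the paper does.
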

The resulting blocked 
Gibbs
sampler alternates between independent draws of the $n$ count vectors and independent, parallel updates of the $m$ particles.
The $n\alpha$ individual allocations therefore never need to be stored or updated.
For an exponential family model $P_\theta$ equipped with a conjugate prior, the particle conditionals in~\cref{eq:replica-component-conditional} are also closed-form densities (detailed in~\Cref{sec:exponential-family-updates}).
Meanwhile, in the absence of conjugacy, one can instead sample from a Markov kernel invariant for the conditionals, leading to a Metropolis-within-Gibbs scheme.

\begin{figure}[t!]
    \centering
    \includegraphics[width=\linewidth]{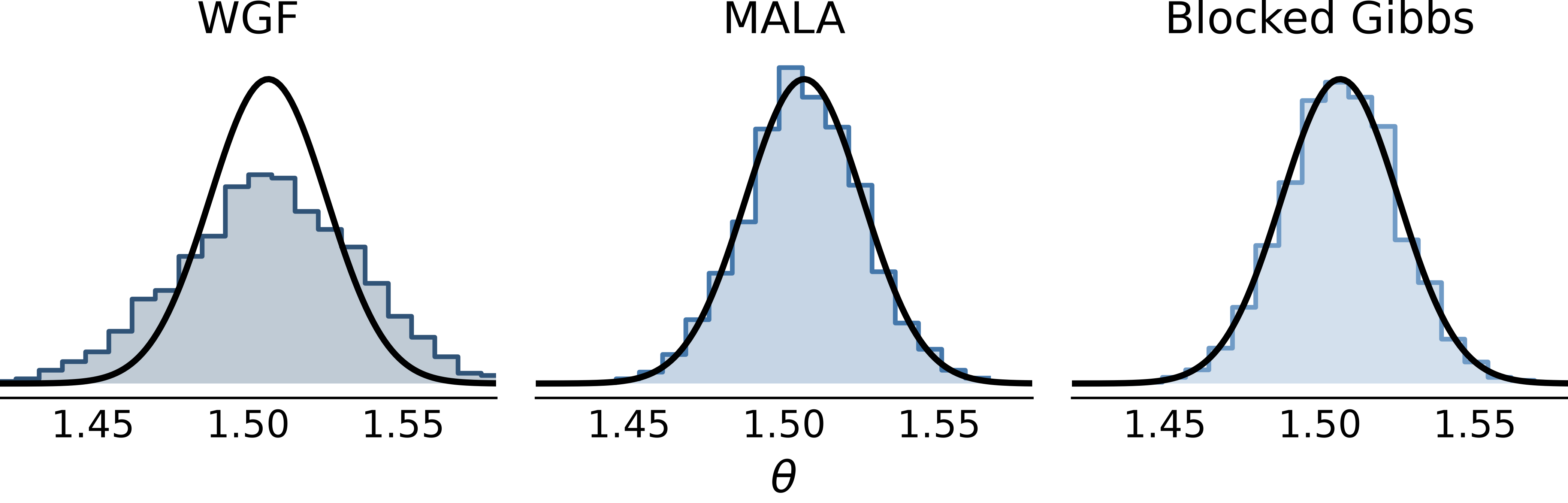}
    \caption{The exact marginal posterior density (solid black curve) alongside a histogram of posterior samples under three different samplers.
    Only the upper mode is shown for clarity.
    }
    \label{fig:exact-finite-marginal}
\end{figure}
To illustrate this computational approach we fit the Gaussian $P_{\theta}=\normal(\theta,\sigma^2)$ with $\sigma^2=0.2^2$ fixed and known to $n=100$ observations sampled from $P_0 = 0.5 \normal(-1.5,\sigma^2) + 0.5 \normal(1.5,\sigma^2)$. 
For the prior $\Pi = \normal(0,\tau^2)$ with $\tau^2=5^2$, and defining $N_j=\sum_{i=1}^n c_{i,j}$ and $T_j=\sum_{i=1}^n c_{i,j}x_i$, the parameter update for the blocked Gibbs sampler becomes
\begin{equation*}
    \theta_j\mid c_{1:n},x_{1:n}
    \sim
    \normal(\mu_j,v_j^2),
    \quad
    v_j^2 = \left(\tau^{-2}+N_j\sigma^{-2}\right)^{-1},
    \quad
    \mu_j = v_j^2\sigma^{-2}T_j
    \quad(j=1,\ldots,m).
\end{equation*}
A complete sweep then consists only of $n$ multinomial draws and $m$ Gaussian draws.
We use only $m=2$ particles to allow for numerical computation of the marginal density $\dt\barQnm(\theta)$.
In~\Cref{fig:exact-finite-marginal} we compare this marginal posterior density with draws from the blocked Gibbs sampler, from time-discretisation of the \ac{wgf}, and from the Metropolis adjusted Langevin algorithm (MALA) all targetting~\cref{eq:particle-gibbs-density}.
Further experimental details are provided in~\Cref{sec:numerical-demonstration-supp}.
Here, as expected, MALA removes the bias due to time-discretisation of the \ac{wgf}.
The posterior draws from the blocked Gibbs algorithm also align closely with $\dt\barQnm(\theta)$.
We investigate the relative computational efficiency of the blocked Gibbs sampler over MCMC in~\Cref{sec:guerry}.
\subsection{Tempered approximations}\label{sec:temp}
Since $m^{-1}\sum_{j=1}^{m}p_{\theta_j}$ is a density function, 
the Gibbs measure $Q_{n,m}^{(\alpha)}$ in~\cref{eq:replica-log-target} defines a tempered posterior based on the equally-weighted mixture likelihood over $\Theta^m$. 
Taking $\alpha\in(0,1)$ implies that the parallel updating scheme developed above cannot be applied. 
However, for $\alpha\in(0,1)$ the resulting posterior can be interpreted as a tempered posterior. 
By leveraging the structure of tempered posteriors, we can establish generalisation guarantees of the approximation $\barQnm^{(\alpha)}$ when $\alpha\in(0,1)$.
{While these guarantees will be similar to~\Cref{thm:predictive-concentration}, which hold for \textit{any} value of $\alpha$, taking $\alpha\in(0,1)$ allows us to substantially weaken the required regularity conditions.}

To facilitate this result, we impose
a mild technical condition from the literature on variational inference \citep[see][]{cherief-abdellatif_asymptotics_2025} which ensures a finite fourth moment for the normalised weights $p_\theta(X) / \int p_\theta(X)\dt Q^\star(\theta)$ for $X \sim P_0$.
This assumption, given formally as~\Cref{ass:log_score} in~\Cref{app:additional-results}, is sufficient to control the finite-particle bias in line with~\Cref{ass:non-trivial-misspecification-exact} (see~\Cref{lemma:log-score-beta} in~\Cref{app:additional-results}).
The derived generalisation bound then relies on R\'enyi's $\alpha$-divergence \citep{renyi_measures_1960}, which 
for two measures $P,Q\in\mathcal{P}(\mathcal{X})$ which are absolutely continuous with respect to a third measure $\mu\in\mathcal{P}(\mathcal{X})$ is defined as 
\begin{equation}\label{eq:renyi}
    d_\alpha(P;\,Q) = \frac{1}{(\alpha-1)}\log \int \left(\frac{\dt P}{\dt\mu}\right)^\alpha\left(\frac{\dt Q}{\dt\mu}\right)^{1-\alpha}\dt \mu.
\end{equation}
Notably, this discrepancy converges to the Kullback--Leibler divergence pointwise as $\alpha\rightarrow1$. 
Equipped with this, we finally obtain the following bound in the case of the logarithmic score.
\begin{theorem}\label{thm:predictive-concentration-kl}
    Under~\Cref{ass:log_score}, for a fixed $\alpha\in(0,1)$ and some $C>0$,
    \begin{equation*}
        \mathbb{E}_0\left[ d_\alpha\left\{{P_{\barQnm^{(\alpha)}}};\,{P_0}\right\} \right]\leq \frac{\alpha}{1-\alpha}\left[ \kld{P_0}{P_{Q^\star}} +\frac{C}{m}\right]+\frac{m\,\kld{Q^\star}{\Pi}}{n(1-\alpha)}.
    \end{equation*}
\end{theorem}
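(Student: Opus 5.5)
The plan is to treat $\Qnm^{(\alpha)}$ as an $\alpha$-tempered Bayes posterior on the extended space $\Theta^m$ and import the standard PAC-Bayes/tempered-posterior generalisation bound for Rényi divergences (Bhattacharya, Pati and Yang; Alquier and Ridgway). The model is the mixture $p_{\theta_{1:m}} = m^{-1}\sum_j p_{\theta_j}$, the prior is $\Pi^{\otimes m}$, and there are $n$ i.i.d. observations from $P_0$. The starting point is the Donsker--Varadhan identity: for any $\rho\in\mathcal{P}(\Theta^m)$,
\begin{equation*}
\mathbb{E}_0\left[\int d_\alpha(P_{\theta_{1:m}};P_0)\,\mathrm{d}\Qnm^{(\alpha)}\right]\le \frac{\alpha}{1-\alpha}\int \kld{P_0}{P_{\theta_{1:m}}}\,\mathrm{d}\rho+\frac{\kld{\rho}{\Pi^{\otimes m}}}{n(1-\alpha)}.
\end{equation*}
This follows from $\mathbb{E}_0\prod_i (p_{\theta_{1:m}}/p_0)^\alpha(x_i)=\exp\{-n(1-\alpha)d_\alpha(P_{\theta_{1:m}};P_0)\}$ together with Fubini, the change-of-measure inequality, and Jensen, exactly as in the tempered-posterior literature. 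Note that this step needs no regularity beyond absolute continuity, which is why $\alpha<1$ buys weaker conditions.

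Next I will pass from the left-hand side to $\barQnm^{(\alpha)}$ using convexity. Since $\alpha\in(0,1)$, the map $P\mapsto d_\alpha(P;P_0)$ is convex in its first argument: $\int p^\alpha p_0^{1-\alpha}$ is concave in $p$, and $t\mapsto \log(t)/(\alpha-1)$ is convex and decreasing. Moreover, $P_{\barQnm}=\int P_{\theta_{1:m}}\,\mathrm{d}\Qnm$ by the definition of $\barQnm$. Jensen's inequality then gives $d_\alpha(P_{\barQnm^{(\alpha)}};P_0)\le\int d_\alpha(P_{\theta_{1:m}};P_0)\,\mathrm{d}\Qnm^{(\alpha)}$.

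For the right-hand side I choose $\rho=(Q^\star)^{\otimes m}$. Tensorisation gives $\kld{\rho}{\Pi^{\otimes m}}=m\,\kld{Q^\star}{\Pi}$, which produces the last term. For the KL term, write
\begin{equation*}
\int \kld{P_0}{P_{\theta_{1:m}}}\,\mathrm{d}(Q^\star)^{\otimes m}=\kld{P_0}{P_{Q^\star}}+\beta_{n,m}(Q^\star)
\end{equation*}
in its population version, where the extra term is $\mathbb{E}_{X\sim P_0}\int -\log\{p_{\theta_{1:m}}(X)/p_{Q^\star}(X)\}\,\mathrm{d}(Q^\star)^{\otimes m}$. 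The decomposition is exact because $p_0$ cancels. This extra term is precisely the (expected) finite-particle bias for the log score at $Q^\star$.

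The main obstacle is bounding this bias by $C/m$ under \Cref{ass:log_score}, which is the content of \Cref{lemma:log-score-beta}; I would invoke it directly. If I had to prove it, set $W_j=p_{\theta_j}(X)/p_{Q^\star}(X)$, which has mean $1$ given $X$, and let $\bar W=m^{-1}\sum_j W_j$. Then use $-\log \bar W\le (1-\bar W)+(1-\bar W)^2/\bar W$ and control the quadratic term by Cauchy--Schwarz: $\mathbb{E}(1-\bar W)^4=O(m^{-2})$ by the finite fourth moment, and the negative moments of $\bar W$ stay bounded, giving $O(1/m)$. Combining the three steps yields the stated bound.
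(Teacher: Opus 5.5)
Your proposal is correct and follows essentially the paper's route. It uses an Alquier--Ridgway-type Donsker--Varadhan bound for the $\alpha$-tempered mixture posterior on $\Theta^m$, evaluated at $(Q^\star)^{\otimes m}$ with KL tensorisation, then splits the result into $\kld{P_0}{P_{Q^\star}}$ plus $\mathbb{E}_0\{\beta_{n,m}(Q^\star)\}$ controlled by \Cref{lemma:log-score-beta}, and finally uses convexity of $d_\alpha$ in its first argument to pass to $\barQnm^{(\alpha)}$. The paper also restricts the prior to the set where $d_\alpha<\infty$, so that the exponential-moment identity is well defined; you cover this only by appeal to the tempered-posterior literature.

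One caution on your optional sketch of the bias bound: bounded negative moments of $\bar W$ do not follow from \Cref{ass:log_score}, which controls only fourth moments of $W$ and $\log W$. Invoking \Cref{lemma:log-score-beta} directly, as you propose, is the right choice.
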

This bound has the same constituent parts as in~\Cref{thm:predictive-concentration},
and similarly suggests a rate at which the number of particles should grow to deliver an asymptotically tight bound.
Interestingly, choosing $m\asymp n^{1/2}$ shows that the  predictive distribution implied by our approximation concentrates at the usual parametric rate of $n^{-1/2}$ under significantly weaker assumptions than the  results derived for the \textit{exact} \ac{pro} posterior in \cite{mclatchie_predictively_2025}. 
In particular, the latter  required control on an  empirical process over a subset of $\mathcal{P}(\Theta)$, while the above  only used a moment assumption (see  Assumption \ref{ass:log_score} in~\Cref{app:additional-results}).
\section{Applications}\label{sec:experiments}
\subsection{Compartmental modelling of simulated epidemiological data}\label{sec:sir}
\begin{figure}[t!]
    \centering
    \includegraphics[width=\linewidth]{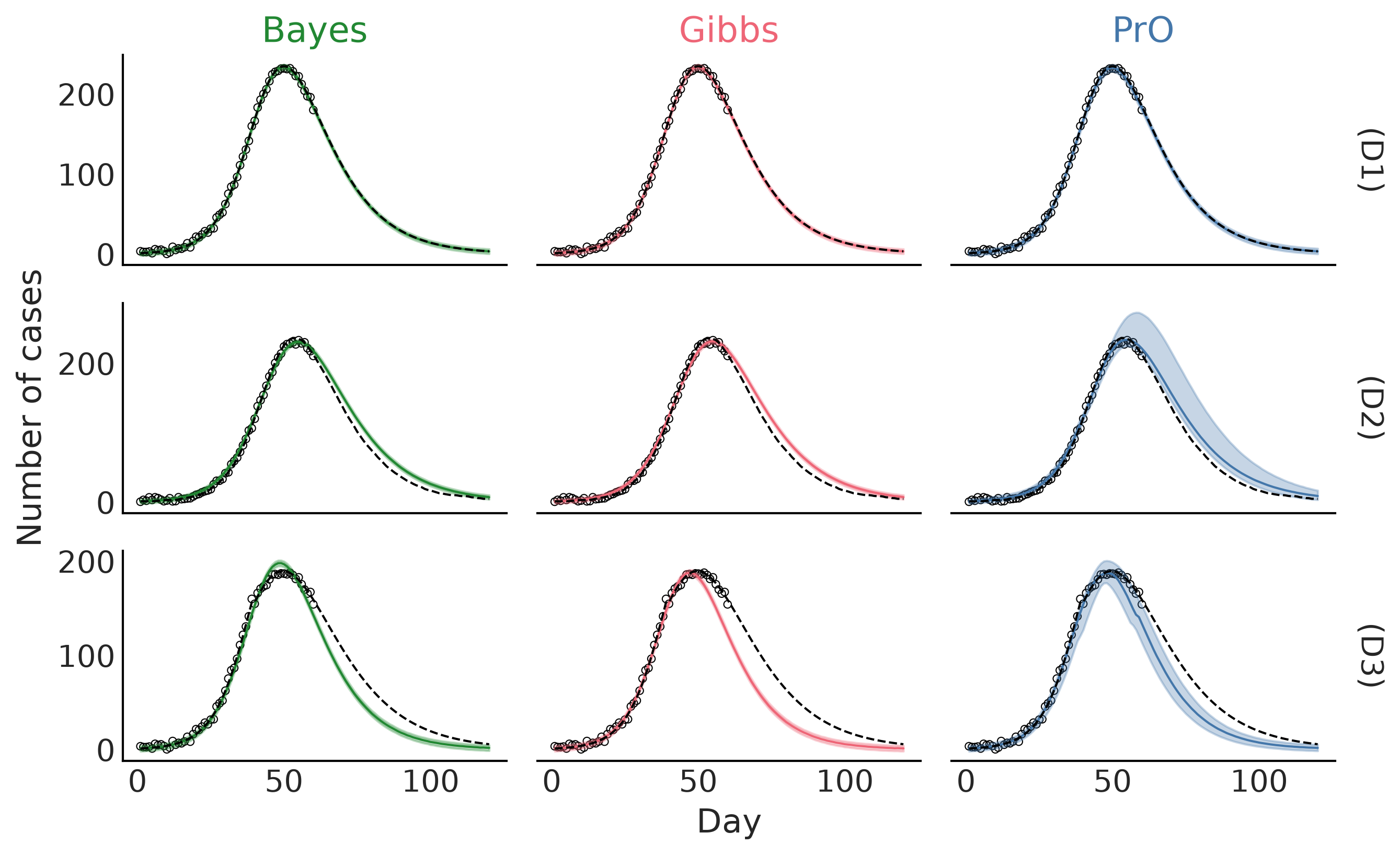}
    \caption{\textit{Simulated epidemiological data examples}\/.
    The dashed black line represents the mean of the data-generating process while the observed data are shown in hollow black circles.
    These are overlaid with the mean (solid lines) and $68\%$ credible region (ribbons) of the \textcolor{bayescolour}{\bfseries\sffamily Bayes}, \textcolor{gibbscolour}{\bfseries\sffamily Gibbs}, and \textcolor{procolour}{\bfseries\sffamily \ac{pro}} posterior predictive distributions.
    }
    \label{fig:sir-all-predictives}
\end{figure}
We first consider an example from simulation-based modelling:
when the simulator is misspecified the Bayes posterior concentrates and in turn its predictive can be grossly over-confident \citep{shen_prediction-centric_2025};
the \ac{pro} posterior by contrast expresses uncertainty in line with the discrepancy between the simulator and the true data-generating process.

The standard compartmental model in epidemiology characterises individuals as flowing between the states of susceptible ($S$), infected ($I$), and recovered ($R$) according to the ordinary differential equations
\begin{equation}\label{eq:sir-diff-eq}
    \frac{\dt S}{\dt t} = -\frac{\beta}{N} I S, \quad
    \frac{\dt I}{\dt t} = \frac{\beta}{N} I S - \gamma I, \quad
    \frac{\dt R}{\dt t} = \gamma I,
\end{equation}
which are parametrised by
the \textit{infection rate} $\beta$ and the \textit{recovery rate} $\gamma$.
Their ratio $R_0 = \beta / \gamma$ is known as the \textit{reproduction number}, and characterises the infection dynamics.

Write $g_{t}(\theta;\,y_0) = (S_t,\,I_t,\,R_t) \in\mathbb{R}^3$ for the solution of~\cref{eq:sir-diff-eq} under $\theta = (\beta,\gamma)\in\mathbb{R}^2$ at time $t = 1,\,\ldots,T$, and given the initial conditions $y_0$ which are assumed to be fixed and known.
We consider three simulated data settings in which we inject additive Gaussian noise to these latent solutions, $y_t = g_t(\theta^\star; \,y_0) + \eta_t$, where $\eta_t \sim \normal(0,\,\sigma^2 \mathrm{I}_3)$ for $\sigma^2 = 3^2$, and for some $\theta^\star$:
\begin{enumerate}[label={\sffamily(D\arabic*)},ref={\sffamily D\arabic*},nosep,topsep=1em]
    \item 
    data are sampled according to~\cref{eq:sir-diff-eq} as described above;
    \label{dgp:well-specified}
    \item 
    data are sampled according to~\cref{eq:sir-diff-eq} with an additional Gaussian noise driving the infected compartment,
    \begin{equation*}
        \mathrm{d}I_t = \left(\frac{\beta}{N} I_t S_t - \gamma I_t\right)\mathrm{d}t + \varepsilon \mathrm{d} B_t,
    \end{equation*}
    where $\varepsilon = 0.4$ and $B=\{B_t\}_{t\geq 0}$ is a standard Brownian motion; and,\label{dgp:sde}
    \item data are sampled according to~\cref{eq:sir-diff-eq} with the underlying infection rate $\beta_t^\star$ varying around a change point $t_\mathrm{cp}$, so that $\beta_t^\star = 0.25$ for all $t < t_\mathrm{cp}$ and $\beta_t^\star = 0.2$ for all $t \geq t_\mathrm{cp}$.
    \label{dgp:lockdown}
\end{enumerate}
Irrespective of how the data were generated, we fit the same model in each setting.
The model assumes a constant population so  that for some $N\in\mathbb{N}$, it holds that $N=S_t+I_t+R_t$ at all $t$.
In practice, we only observe the number of infected individuals, $I_t$, at each time step,
and we fit these data to the observational model 
\begin{equation*}%
    p_\theta(I_{1:T}\mid y_0) = \prod_{t=1}^{T}\frac{1}{\sqrt{2\pi\hat \sigma^2}}\exp\left[-\frac{\{ I_t - \hat I_t(\theta;\,y_0)\}^{2}}{2\hat \sigma^{2}}\right],
\end{equation*}
where $\hat I_t(\theta;\,y_0)$ denotes the infected component of $g_t(\theta;\,y_0)$.
Under misspecification, $\sigma$ absorbs model discrepancy as well as observation noise \citep{brynjarsdottir_learning_2014};
we treat $\hat \sigma$ as fixed and set it to a plug-in estimate obtained from the residuals of a least-squares fit.
We compute three posteriors corresponding to this model:
the standard Bayes posterior 
\cref{eq:bayes-posterior-defn}, a Gibbs posterior \cref{eq:gibbs-posterior-defn} using the maximum mean discrepancy as $S_n$, and a particle \ac{pro} posterior
\cref{eq:particle-gibbs-density} also under the maximum mean discrepancy.
For all three, sampling is performed with Hamiltonian Monte Carlo and the \ac{pro} posterior was fit with $m=16$ particles. 

The posterior predictive distributions overlaid with observed data are shown in~\Cref{fig:sir-all-predictives}.
Naturally, this model is well-specified only in the~\ref{dgp:well-specified} regime. 
In this case, all three posterior predictive distributions converge on the true data-generating law.
In contrast, the model fails to account for the extra stochasticity induced in the~\ref{dgp:sde} regime. 
As a result, the Bayes and Gibbs posteriors overconfidently infer a predictive distribution which is a poor descriptor of the true data-generating process.
The \ac{pro} posterior delivers a more faithful quantification of the uncertainty in the data.
The~\ref{dgp:lockdown} regime 
describes public health interventions such as lockdowns:
following an initially high reproduction rate,
the infection rate decreases after an intervention at time $t_{\operatorname{cp}}$.
Naturally, this 
is not accounted for in the 
model
and so the Bayes and Gibbs posteriors are concentrated around the single reproduction number best suited to describe \textit{all} of the observed data.
The \ac{pro} posterior however can produce a predictive distribution which is less confidently wrong.
It does so by inducing a parameter posterior which has two modes that capture the pre- and post-lockdown reproduction numbers respectively (shown in~\Cref{sec:sir-supp}).

\subsection{Spatial crime modelling}\label{sec:guerry}
\begin{figure}[t!]
    \centering
    \includegraphics[width=\linewidth]{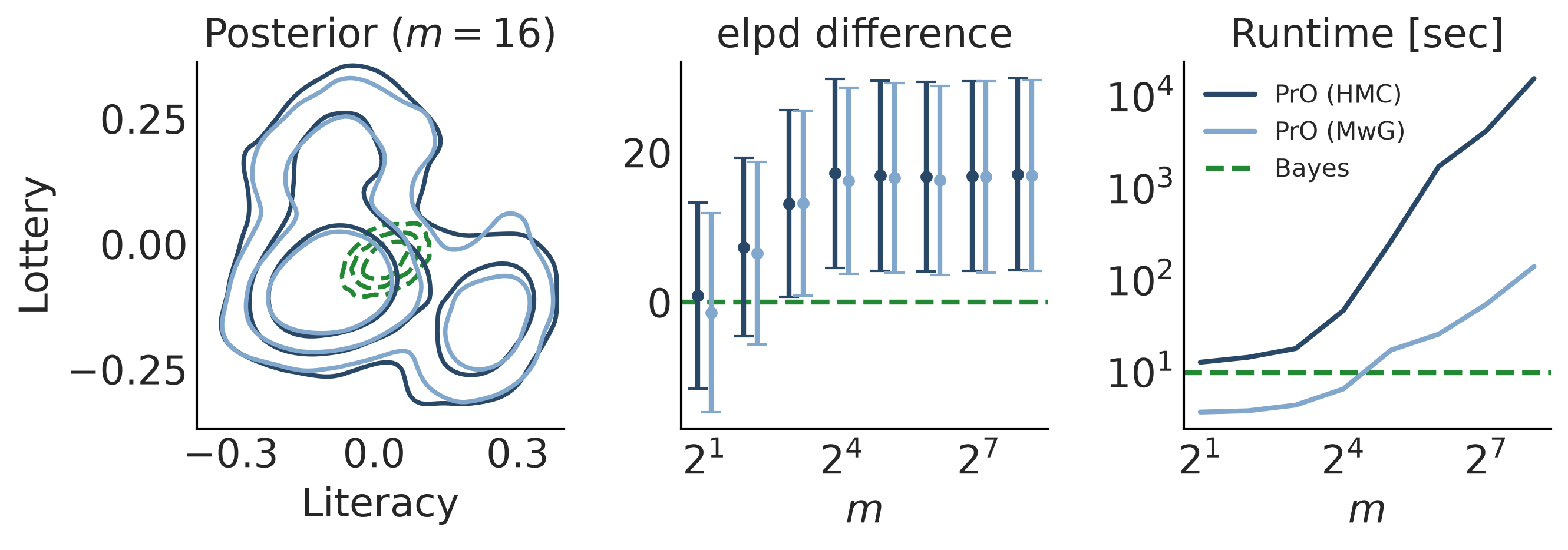}
    \caption{\textit{Moral statistics in 1830s France}\/. 
    The left-most facet shows the \textcolor{bayescolour}{\bfseries\sffamily Bayes} posterior and two \textcolor{procolour}{\bfseries\sffamily \ac{pro}} posteriors (fit with Hamiltonian Monte Carlo, HMC, and Metropolis-within-Gibbs, MwG, respectively) over two parameters.
    The middle facet shows the mean and standard error of the leave-one-out expected log predictive density (elpd) difference to the Bayes posterior of the \ac{pro} posteriors as a function of $m$:
    values greater than zero indicate better performance than the Bayes posterior.
    The right-most facet compares the average runtime to get $4{,}000$ draws across folds of the three posteriors as a function of $m$.
    }
    \label{fig:guerry}
\end{figure}
While our theoretical developments did not explicitly account for dependent data structures, we next show that the particle \ac{pro} posterior performs well on a spatial prediction example.
\cite{guerry_essai_1833} records statistics across France's $n=85$ regions.
For each region $i=1,\dots 85$, these include crime counts $y_i$ and seven socio-economic indicators $x_i$ such as literacy, wealth, and lottery participation. 
The population of the $i$-th region is denoted $\mathrm{pop}_i$.
Taking inspiration from \citet{friendly_guerry_2007}, we denote $y_{-i} = \sum_{j\neq i}y_j$ and model crime counts given the parameters $\theta = (\beta_0,\beta,b)\in\mathbb{R}^9$ as
\begin{equation*}
    p_\theta(y_i\mid x_i,\,y_{-i}) = \frac{\lambda_i^{y_i}e^{-\lambda_i}}{y_i!},\qquad 
    \log \lambda_i = \beta_0 + x_i^\top \beta + \log E_i + b\, \ell_i,
\end{equation*}
where $E_i = \mathrm{pop}_i( \sum_{j=1}^{n} y_j \big/ \sum_{j=1}^{n} \mathrm{pop}_j)$ is the expected count under population scaling, 
and acts as an offset term.
The spatial dependence enters through the lag term
\begin{equation*}
    \ell_i = \sum_{j\ne i} \tilde{w}_{ij}\, \frac{y_j}{E_j},
    \qquad
    \tilde{w}_{ij} = \frac{w_{ij}}{\sum_{k=1}^{n} w_{ik}},
  \end{equation*}
where $w_{ij}=1$ if regions $i$ and $j$ share a border or vertex, $0$ otherwise, and $w_{ii}=0$.
As a result, the coefficient $b$ measures the strength of spatial autocorrelation: 
$b>0$ means a region's risk rises with the average risk of its neighbours, and $b<0$ implies the opposite.
We infer both a standard Bayes posterior and a \ac{pro} posterior based on the log score.

The left-most panel of \Cref{fig:guerry} 
compares
the Bayes posterior over the coefficients associated with literacy and lottery to that of two \ac{pro} posteriors:
one fit with Hamiltonian Monte Carlo as implemented in the \texttt{blackjax} package \citep{cabezas_blackjax_2024}, and another with the blocked Gibbs sampler developed in~\Cref{sec:replica-gibbs}.
While the Bayes posterior is tightly concentrated, the \ac{pro} posteriors reflect a higher degree of uncertainty and are multimodal, with both samplers recovering similar multimodalities.
This more complete articulation of posterior uncertainty yields better predictions: 
the \ac{pro} posteriors attain significantly higher leave-one-region-out predictive log scores relative to the Bayes posterior as $m$ grows, as predicted by the results of~\Cref{sec:theory}.
In the middle panel, for instance, we find that already when $m = 16$ the \ac{pro} posteriors both perform significantly better than the Bayes posterior.
The right-most panel compares the runtime performance of the two samplers.
The results are encouraging:
the blocked Gibbs sampler provides an efficiency gain of two orders of magnitude.
In particular, while Hamiltonian Monte Carlo takes approximately four hours to draw $4{,}000$ samples at $m = 256$, the blocked Gibbs algorithm only takes around two minutes.
Even more notably still, at $m = 16$, the \ac{pro} posterior in the left-most panel of~\Cref{fig:guerry} fit with the blocked Gibbs sampler takes a similar amount of time as the standard Bayesian posterior.

\subsection{Low-energy nuclear physics}\label{sec:physics}
\begin{figure}
    \centering
    \includegraphics[width=\linewidth]{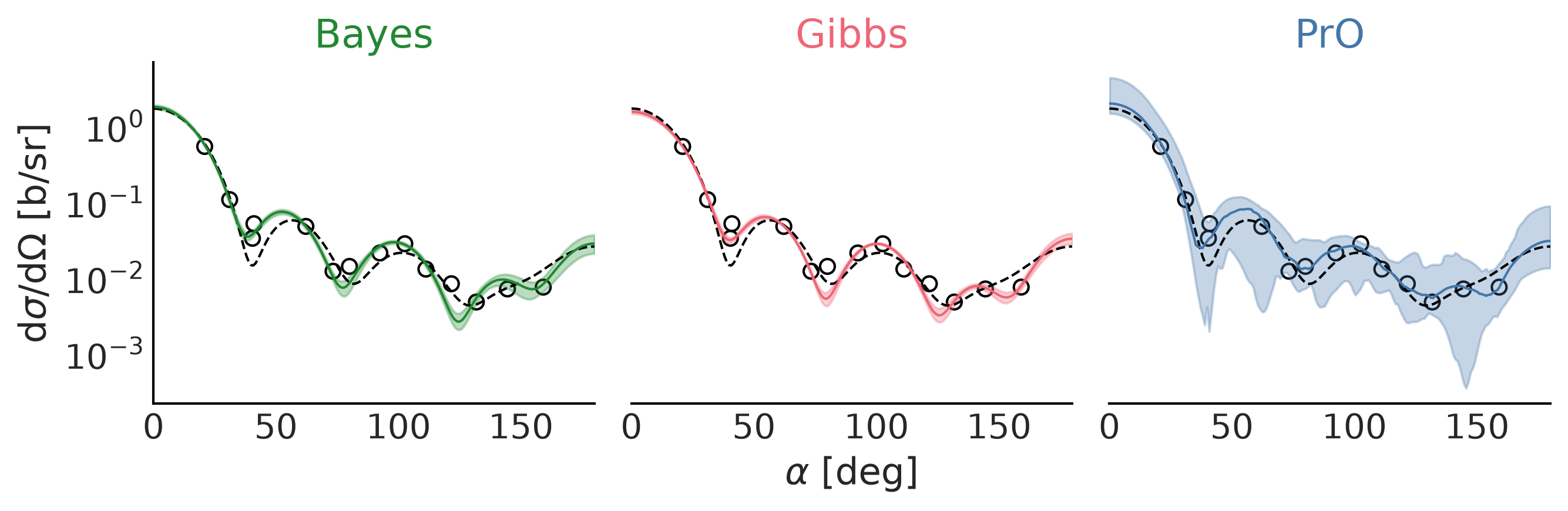}
    \caption{%
    \textit{Optical calibration of $\,^{40} \rm{Ca}(n,el)$ at 14.1 MeV.}
    In the dashed black line we show the best fit solution of \citet{koning_local_2003}  compared to the mean (solid lines) and $68\%$ credible region (ribbons) of the \textcolor{bayescolour}{\bfseries\sffamily Bayes}, \textcolor{gibbscolour}{\bfseries\sffamily Gibbs}, and \textcolor{procolour}{\bfseries\sffamily \ac{pro}} posterior predictive distributions.
    The observed data are shown in hollow black circles.}
    \label{fig:calibration-predictives}
\end{figure}
Finally, we consider a setting in which the \ac{wgf} is unavailable due to the likelihood being non-differentiable, and thus alternative sampling schemes are required.
This example concerns the interaction between a neutron projectile and the atomic nucleus it is shot into as part of a nuclear reaction.
Specifically, 
a neutron beam of incident energy $14.1\,\mathrm{MeV}$ is directed at a thin target of isotopically purified ${}^{40}\mathrm{Ca}$, and the angular distribution of scattered neutrons---also referred to as the differential cross section $\mathrm{d}\sigma/\mathrm{d}\Omega$---
is recorded across a range of scattering angles \citep{mcdonald_scattering_1964}.
In principle, this scattering could be modelled exactly as a $41$-body problem \citep{hergert_guided_2020} but such a model is prohibitively expensive. 
Physicists instead reduce the reaction dynamics to an effective two-body problem through a parametric model called an optical potential \citep{thompson_nuclear_2009}.

For $i = 1,\ldots,n$ let $y_i$ denote the differential cross section $\mathrm{d}\sigma/\mathrm{d}\Omega$ at scattering angle $\alpha_i$ recorded by \citet{mcdonald_scattering_1964}.
Each measurement comes with a variance term $s_i^2 > 0$ which is estimated from the measurement equipment and treated as fixed throughout.
The interaction can be described by a Woods--Saxon optical potential \citep[Section 4.1.1]{thompson_nuclear_2009} with parameters $\theta = (V_v, W_v, R_v, a_v, W_d, R_d, a_d)$
\citep{koning_local_2003}. 
We write $g(\alpha_i;\theta)$ for the solution to the Schr\"odinger equation for the optical potential parametrised by $\theta$.
The observations are independent conditionally on their angles;
we follow the convention in the literature and adopt the conditionally Gaussian observational model \citep{beyer_bandframework_2025,beyer_JITR_2024}
\begin{equation*}
    p_\theta(y_{1:n}\mid \alpha_{1:n}) = \prod_{i=1}^{n} \frac{1}{\sqrt{2\pi s_i^{2}}} \exp\left[-\frac{\{y_i - g(\alpha_i;\theta)\}^{2}}{2 s_i^{2}}\right].
\end{equation*}
Two features of this model complicate inference. 
First, it is a deliberate approximation of the true data generating process and therefore clearly misspecified \citep[see also][]{phillips_BAND_2021}.
Second, 
the forward map $\theta\mapsto g(\cdot;\theta)$ is non-differentiable. 
This rules out use of standard gradient-based samplers such as interacting particle systems derived from a \ac{wgf}~\cref{eq:particle-dWFn} and gradient-based Monte Carlo samplers targetting~\cref{eq:particle-gibbs-density}.
Fortunately, 
gradient-free samplers such as random-walk Metropolis--Hastings are well established for targetting~\cref{eq:particle-gibbs-density} instead.
Further, the parameters $(R_v, a_v, R_d, a_d)\in\mathbb{R}_+^4$ are constrained to be positive. 
In order to evolve the \ac{wgf} corresponding to the exact \ac{pro} posterior in this case, one would need to define it with respect to a specialised geometry that respects these constraints \citep{gu_mirror_2026}.
A simple log-transform, on the other hand, makes this trivial for the random-walk Metropolis--Hastings sampler.

\begin{figure}[t!]
    \centering
    \includegraphics[width=\linewidth]{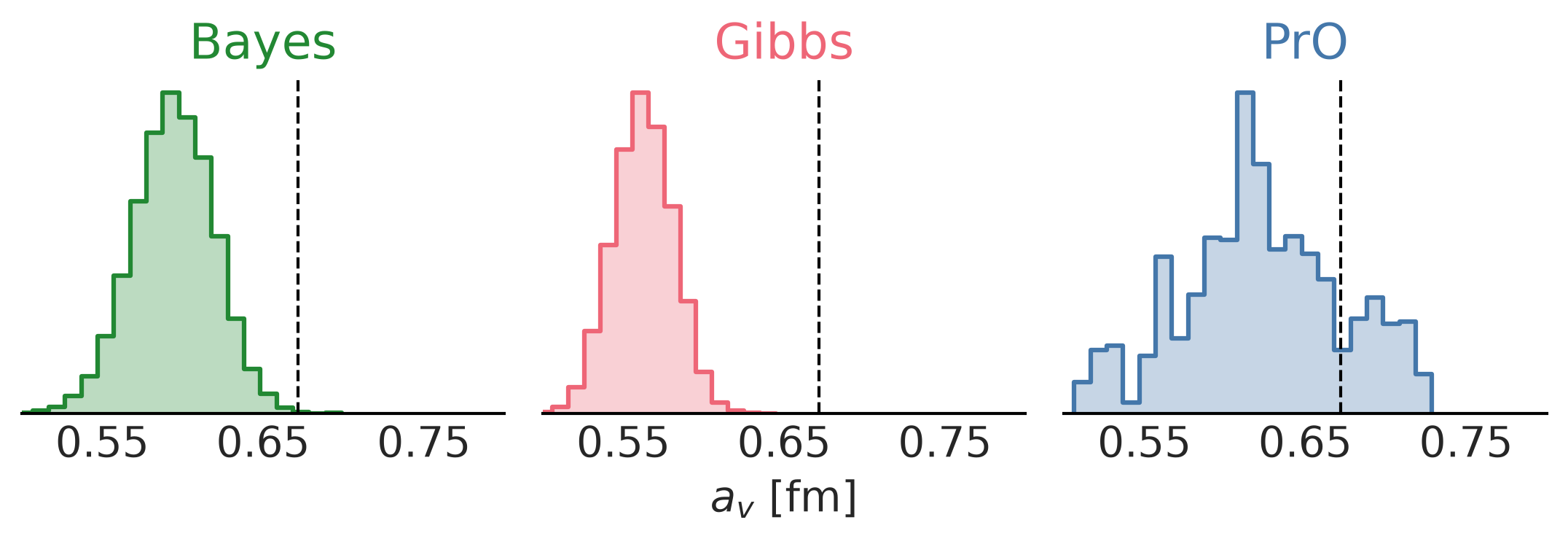}
    \caption{%
    \textit{Optical calibration of $\,^{40} \rm{Ca}(n,el)$ at 14.1 MeV.}
    The histograms represent draws from \textcolor{bayescolour}{\bfseries\sffamily Bayes}, \textcolor{gibbscolour}{\bfseries\sffamily Gibbs}, and \textcolor{procolour}{\bfseries\sffamily \ac{pro}} posterior distributions over the $a_v$ parameter.
    The dashed black lines indicate the point estimate of \citet{koning_local_2003}.}
    \label{fig:calibration-posteriors}
\end{figure}
Empirically, we find that both the Bayes and Gibbs predictive distributions are systematically overconfident, which mirrors the findings of \citet{pruitt_uncertainty-quantified_2023}.
In~\Cref{fig:calibration-predictives}, for instance, the Bayes posterior predictive's $68\%$ credible interval covers only half of the observations. 
Even where the predictive mean tracks the general shape of the data, the predictive spread almost certainly understates the uncertainty in the underlying nuclear reaction. 
Replacing the likelihood with the maximum mean discrepancy \citep{cherief-abdellatif_mmd-bayes_2019} is a natural response to misspecification:
bounded kernels make the posterior robust to extreme observations. 
But the optical potential does not describe all the data well at any parameter value, and the deficiency is systematic rather than due to outliers.
The same pathology as in the Bayes posterior is thus also present in the Gibbs posterior.
By contrast, the \ac{pro} posterior yields substantially more faithful predictive uncertainty. 
This improved predictive uncertainty is a product of the \ac{pro} posterior's parameter uncertainties, which happen to align much more closely with the parameter uncertainties as understood within the physics literature.
An instructive example for this is the diffuseness parameter $a_v$, which controls how sharp the optical potential well is. 
Roughly speaking, small $a_v$ describe a billiard ball-like nucleus with a sharp edge, while larger $a_v$ suggest a softer nuclear surface with less clearly defined edges. 
Historic estimates for $a_v$ for the reaction in question are often in the  range $0.6$--$0.8\,\mathrm{fm}$, which closely aligns with the parameter uncertainty communicated by the \ac{pro} posterior \citep{hodgson_nuclear_1971}.
In contrast, \Cref{fig:calibration-posteriors} illustrates that both Bayes and Gibbs posteriors concentrate near $0.6\,\mathrm{fm}$ but place substantial mass below the accepted range and almost none near the point estimate of \citet{koning_local_2003}, suggesting a sharper nuclear surface than is generally accepted in the literature.
\section{Discussion}\label{sec:discussion}
This paper transforms the \ac{pro} posterior from a theoretically grounded inferential object to a practically computable one.
Our results justify sampling from the \ac{pro} posterior using standard \acs{mcmc} methods, and constitute a step change in the practicality of the \ac{pro} framework.
While the observation that the particle discretisation of the mean-field Langevin dynamics possesses a tractable limiting measure that may be sampled directly is not specific to the \ac{pro} posterior, our generalisation guarantees are, and our finite-particle guarantees require much weaker conditions than classical propagation of chaos results.

Several limitations merit comment. 
The extended target $\Qnm$ is defined on a space whose dimension grows with the number of particles, so the cost of mixing on this space must be weighed against the Monte Carlo error in the number of particles; 
these are distinct objectives, and our theoretical results speak to the second while our numerical experiments bear on the first. 
Future work may look to establish heuristics for choosing $m$ based on a finite computational budget and desired accuracy.
Similarly, our theoretical analysis omits covariates;
if these can be addressed, there is also the opportunity to extend the blocked Gibbs sampler to generalised linear models \citep{chen_conjugate_2003}.

\subsection*{Acknowledgments}
We are grateful for enlightening discussions with Prof. Filomena Nunes and Dr. Kyle Beyer which greatly improved this manuscript.
YM is supported by EP/V521917/1 and DTF by DP250101069.
\bibliography{main}
\newpage
\appendix
\counterwithin{figure}{section}
\counterwithin{assumption}{section}
\counterwithin{lemma}{section}
\counterwithin{theorem}{section}
\counterwithin{equation}{section}
\section{Proofs of main results}\label{sec:proofs}
\subsection{Proof of~\Cref{lem:beta-nonnegative}}\label{sec:non-negativity}
\begin{proof}
    For any measurable $A\subseteq\Theta$,
    \begin{equation*}
        \widehat Q_m(A) = \frac{1}{m}\sum_{j=1}^m \mathbf{1}(\theta_j\in A)
    \end{equation*}
    so that 
    \begin{IEEEeqnarray*}{rCl}
        \Ecurl[Q^{\otimes m}]{\widehat Q_m(A)} &=& \Ecurl[Q^{\otimes m}]{\frac{1}{m}\sum_{j=1}^m \mathbf{1}(\theta_j\in A)} \\
        &=& \frac{1}{m}\sum_{j=1}^m\Ecurl[Q^{\otimes m}]{\mathbf{1}(\theta_j\in A)} \\
        &=& \frac{1}{m}\sum_{j=1}^m\mathbb{P}_{Q^{\otimes m}}(\theta_j\in A) \\
        &=& \frac{1}{m}\sum_{j=1}^m\mathbb{P}_{Q}(\theta\in A) \\
        &=& \frac{1}{m}\sum_{j=1}^m Q(A) \\
        &=& Q(A),
    \end{IEEEeqnarray*}
    where the second line is by the linearity of expectation and the third by definition of expectations of indicator functions.
    Hence $\E[Q^{\otimes m}]{\widehat Q_m}=Q$, and since $P\mapsto S_n(P)$ is convex,
    \begin{equation*}
    S_n(P_Q)
    =
    S_n\left\{P_{\E[Q^{\otimes m}]{\widehat Q_m}}\right\}
    \le
    \Ecurl[Q^{\otimes m}]{S_n\bigl(P_{\widehat Q_m}\bigr)}
    =
    \int \Lnm(\theta_{1:m}) \dt Q^{\otimes m}(\theta_{1:m}).
    \end{equation*}
    This is exactly $\beta_{n,m}(Q)\ge 0$.
\end{proof}
\subsection{Proof of~\Cref{lemma:kl-closeness}}
In order to establish the stated result, we first present the following lemma.
\begin{lemma}\label{prop:entropy-curvature}
    Suppose that~\Cref{ass:convex} and the conditions of~\Cref{lemma:existence-uniqueness-full-pro} hold, and that the first variation in~\cref{eq:pro-self-gibbs} is finite $\Pi$-almost surely.
    Define 
    \begin{equation*}
        g_n(\theta)
        =
        \left.\frac{1}{n}\sum_{i=1}^n\frac{\delta S(P_Q, x_i)}{\delta Q}\right|_{Q=\Qn}(\theta),
    \end{equation*}
    and
    \begin{equation}\label{eq:pro-score-bregman}
        \mathcal{B}_n(Q,\Qn)
        =
        S_n(P_Q)-S_n(P_{\Qn})
        -
        \int_\Theta g_n(\theta)\dt(Q-\Qn)(\theta).
    \end{equation}
    Then $\mathcal{B}_n(Q,\Qn)\geq0$, and for every $Q\in\mathcal{P}_\Pi(\Theta)$ for which the terms below are finite, 
    \begin{multline}\label{eq:pro-entropy-curvature}
        \left\{
        S_n(P_Q)
        +
        \frac{\kld{Q}{\Pi}}{\lambda_n}
        \right\}
        -
        \left\{
        S_n(P_{\Qn})
        +
        \frac{\kld{\Qn}{\Pi}}{\lambda_n}
        \right\}
        =
        \mathcal{B}_n(Q,\Qn)
        +
        \frac{\kld{Q}{\Qn}}{\lambda_n}.
    \end{multline}
\end{lemma}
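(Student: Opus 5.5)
The plan has two parts: non-negativity of $\mathcal{B}_n$ from convexity, and the identity (\ref{eq:pro-entropy-curvature}) from the fixed-point (self-consistent Gibbs) characterisation of $\Qn$ referenced as \cref{eq:pro-self-gibbs} in \Cref{lemma:existence-uniqueness-full-pro}. I take that characterisation to read
\begin{equation*}
\frac{\dt\Qn}{\dt\Pi}(\theta)=\frac{\exp\{-\lambda_n g_n(\theta)\}}{Z_n},\qquad Z_n=\int_\Theta \exp\{-\lambda_n g_n\}\dt\Pi .
\end{equation*}
Here $g_n$ is finite $\Pi$-a.s. by hypothesis, so $\Qn$ and $\Pi$ are mutually absolutely continuous.

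\emph{Non-negativity.} Under \Cref{ass:convex}, $Q\mapsto S(P_Q,x)$ is convex, since $Q\mapsto P_Q$ is linear and $S(\cdot,x)$ is convex; averaging over $i$ gives convexity of $Q\mapsto S_n(P_Q)$.
\begin{itemize}
\item Take $Q_\epsilon=\Qn+\epsilon(Q-\Qn)$ for $\epsilon\in(0,1]$.
\item Convexity gives $S_n(P_{Q_\epsilon})-S_n(P_{\Qn})\le \epsilon\{S_n(P_Q)-S_n(P_{\Qn})\}$.
\item Divide by $\epsilon$ and let $\epsilon\downarrow0$. By the definition of the first variation, the left side tends to $\int g_n\dt(Q-\Qn)$.
\item This yields $\mathcal{B}_n(Q,\Qn)\ge0$.
\end{itemize}
The one technical point is that the directional derivative exists and equals $\int g_n\dt(Q-\Qn)$ for the $Q$ considered. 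For $Q$ where $\int g_n \dt Q$ is finite, I would invoke the definition of the first variation directly. Monotonicity of difference quotients of a convex function guarantees the limit exists in $[-\infty,\infty)$, so no extra uniformity is needed.

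\emph{Identity.} This is an algebraic computation. For $Q\in\mathcal{P}_\Pi(\Theta)$ with finite terms, write
\begin{equation*}
\kld{Q}{\Pi}=\kld{Q}{\Qn}+\int \log\frac{\dt\Qn}{\dt\Pi}\dt Q=\kld{Q}{\Qn}-\lambda_n\int g_n\dt Q-\log Z_n,
\end{equation*}
and similarly $\kld{\Qn}{\Pi}=-\lambda_n\int g_n\dt\Qn-\log Z_n$. Subtracting and dividing by $\lambda_n$ gives
\begin{equation*}
\frac{\kld{Q}{\Pi}-\kld{\Qn}{\Pi}}{\lambda_n}=\frac{\kld{Q}{\Qn}}{\lambda_n}-\int g_n\dt(Q-\Qn).
\end{equation*}
Adding $S_n(P_Q)-S_n(P_{\Qn})$ to both sides and recognising the definition (\ref{eq:pro-score-bregman}) of $\mathcal{B}_n$ yields (\ref{eq:pro-entropy-curvature}).

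The main obstacle is integrability bookkeeping: splitting $\kld{Q}{\Pi}$ requires $\int g_n\dt Q$ and $\int g_n \dt\Qn$ to be finite, so that no $\infty-\infty$ appears. I would handle this by restricting to $Q$ for which the displayed terms are finite, which is exactly the lemma's proviso. I would also note that $\log Z_n$ is finite because $\Qn$ is a well-defined probability measure with $\kld{\Qn}{\Pi}<\infty$ by \Cref{lemma:existence-uniqueness-full-pro}.
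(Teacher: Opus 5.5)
Your proposal is correct and follows the paper's proof essentially step for step. The identity comes from writing $\log(\mathrm{d}\Qn/\mathrm{d}\Pi)=-\lambda_n g_n-\log Z_n$ via \cref{eq:pro-self-gibbs} (your $\log Z_n$ is the paper's constant $c_n$) and splitting $\kld{Q}{\Pi}$ through $\Qn$, while non-negativity of $\mathcal{B}_n$ is the subgradient inequality for the convex map $Q\mapsto S_n(P_Q)$. The only difference is that you derive that inequality directly from the one-sided directional derivative along $\Qn+\epsilon(Q-\Qn)$, whereas the paper cites Lemma 4.1 of \citet{hu_mean-field_2021}, whose proof is the same argument.
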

\begin{proof}
        By~\cref{eq:pro-self-gibbs}, there exists a constant \(c_n\in\mathbb{R}\) such that, for $\Pi$-almost every $\theta$, 
    \begin{equation}\label{eq:pro-fixed-point-log-density}
        \log\frac{\dt\Qn}{\dt\Pi}(\theta)
        =
        -\lambda_n g_n(\theta)-c_n.
    \end{equation}
    Since \(g_n\) is finite \(\Pi\)-almost surely, \(\Qn\) and \(\Pi\) are mutually absolutely continuous, and hence every \(Q\in\mathcal{P}_\Pi(\Theta)\) is absolutely continuous with respect to \(\Qn\).
    By direct computation, we then have
    \begin{IEEEeqnarray*}{rCl}
        \kld{Q}{\Pi}-\kld{\Qn}{\Pi}
        &=&
        \kld{Q}{\Qn}
        +
        \int_\Theta
        \log\frac{\dt\Qn}{\dt\Pi}(\theta)
        \dt(Q-\Qn)(\theta)
        \\
        &=&
        \kld{Q}{\Qn}
        -
        \lambda_n
        \int_\Theta g_n(\theta)\dt(Q-\Qn)(\theta),
    \end{IEEEeqnarray*}
    where the constant $c_n$ vanishes because $Q$ and $\Qn$ are probability measures.
    Dividing both sides by $\lambda_n$ and adding $S_n(P_Q)-S_n(P_{\Qn})$ establishes~\cref{eq:pro-entropy-curvature}.
    Since the map $Q\mapsto P_Q$ is linear and $P\mapsto S_n(P)$ is convex under~\Cref{ass:convex}, the map $Q\mapsto S_n(P_Q)$ is convex.
    The subgradient inequality \citep[Lemma 4.1]{hu_mean-field_2021} applied at $\Qn$ therefore yields $\mathcal{B}_n(Q,\Qn)\geq0$.
\end{proof}
We also require the additional lemma below, which allows us to compare the objective of~\cref{eq:pro-posterior-defn} induced by the common finite-particle marginal with that of any posterior predictive distribution.
\begin{lemma}\label{lemma:bound}
    Let $\mathcal{P}_\Pi(\Theta)=\{Q\in\mathcal{P}(\Theta):Q\ll\Pi\}$.
    Under~\Cref{ass:convex}, for a prior $\Pi\in\mathcal{P}(\Theta)$ and any $Q\in\mathcal{P}_\Pi(\Theta)$ such that $Q^{\otimes m}\ll \Pi^{\otimes m}$,
    \begin{equation*}
        S_{n}(P_{\barQnm}) + \frac{\kld{\barQnm}{\Pi}}{\lambda_n} \leq S_n(P_{Q}) + \frac{\kld{Q}{\Pi}}{\lambda_n} + \beta_{n,m}(Q).
    \end{equation*}
    In particular, 
    \begin{equation*}
        0\leq \left\{S_{n}(P_{\barQnm}) + \frac{\kld{\barQnm}{\Pi}}{\lambda_n}\right\} - \left\{ S_n(P_{\Qn}) + \frac{\kld{\Qn}{\Pi}}{\lambda_n} \right\} \leq \beta_{n,m}(\Qn).
    \end{equation*}
\end{lemma}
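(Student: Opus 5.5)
The argument runs the chain
\[
Q^{\otimes m}\;\longrightarrow\;\Qnm\;\longrightarrow\;\barQnm .
\]
We first use the optimality of $\Qnm$ for the linear objective in \cref{eq:particle-objective}, evaluated at the product competitor $Q^{\otimes m}$, which is admissible since $Q^{\otimes m}\ll\Pi^{\otimes m}$. We then pass from the extended space back to $\Theta$ through convexity of the score and subadditivity of the Kullback--Leibler divergence over marginals.

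\textbf{Step 1.} Write $\mathcal{J}_m(Q_m)=\int\Lnm\,\dt Q_m+\kld{Q_m}{\Pi^{\otimes m}}/(m\lambda_n)$. Optimality of $\Qnm$ gives $\mathcal{J}_m(\Qnm)\le\mathcal{J}_m(Q^{\otimes m})$. Tensorisation gives $\kld{Q^{\otimes m}}{\Pi^{\otimes m}}=m\,\kld{Q}{\Pi}$. By the definition of $\beta_{n,m}$ in \cref{eq:beta-defn},
\[
\mathcal{J}_m(Q^{\otimes m})=S_n(P_Q)+\beta_{n,m}(Q)+\kld{Q}{\Pi}/\lambda_n .
\]

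\textbf{Step 2.} We lower-bound $\mathcal{J}_m(\Qnm)$ by the single-particle objective at $\barQnm$.

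For the loss term, $\Lnm(\theta_{1:m})=S_n(P_{\widehat Q_m(\theta_{1:m})})$, and $\int P_{\widehat Q_m}\,\dt\Qnm=P_{\barQnm}$ by linearity and the definition of $\barQnm$. Convexity of $P\mapsto S_n(P)$ (\Cref{ass:convex}) and Jensen's inequality, as in the proof of \Cref{lem:beta-nonnegative} but now under $\Qnm$ rather than a product law, then give $\int\Lnm\,\dt\Qnm\ge S_n(P_{\barQnm})$.

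For the entropy term, we use the standard superadditivity of relative entropy against a product reference measure:
\[
\kld{\Qnm}{\Pi^{\otimes m}}\ge\sum_{j=1}^m\kld{\Qnm^{(j)}}{\Pi}.
\]
This follows from the chain rule together with $\kld{\cdot}{\cdot}\ge0$, or equivalently from the nonnegativity of mutual information. Convexity of $\kld{\cdot}{\Pi}$ then gives $m^{-1}\sum_j\kld{\Qnm^{(j)}}{\Pi}\ge\kld{\barQnm}{\Pi}$. Exchangeability of $\Qnm$ makes all the marginals equal, so this last step is trivial, but it is not needed. Dividing by $m\lambda_n$ yields $\kld{\Qnm}{\Pi^{\otimes m}}/(m\lambda_n)\ge\kld{\barQnm}{\Pi}/\lambda_n$. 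Combining with Step 1 gives the first display.

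\textbf{Step 3.} For the second display, take $Q=\Qn$, which is admissible because $\Qn\ll\Pi$ and hence $\Qn^{\otimes m}\ll\Pi^{\otimes m}$. The upper bound is the first display. The lower bound $0\le\cdots$ is the minimising property of $\Qn$ in \cref{eq:pro-posterior-defn}, applied to the competitor $\barQnm$.

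\textbf{Main obstacle.} The delicate step is the entropy superadditivity together with the measure-theoretic bookkeeping. We must check that $\Qnm\ll\Pi^{\otimes m}$, which holds because $\Qnm$ is a Gibbs density relative to $\Pi^{\otimes m}$. We must also handle possibly infinite terms, which we do by assuming finiteness or noting that the inequality is trivial otherwise. The cleanest route is to write
\[
\kld{\Qnm}{\Pi^{\otimes m}}=\kld{\Qnm}{\textstyle\bigotimes_j\Qnm^{(j)}}+\sum_j\kld{\Qnm^{(j)}}{\Pi},
\]
which is valid whenever the marginal divergences are finite, and then drop the nonnegative first term.
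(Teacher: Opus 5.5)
Your proof is correct and follows essentially the paper's argument: Jensen's inequality for the convex score gives $\int \Lnm\,\mathrm{d}\Qnm \ge S_n(P_{\barQnm})$; the chain-rule/superadditivity decomposition of the relative entropy of $\Qnm$ against $\Pi^{\otimes m}$, followed by convexity in the first argument, handles the entropy term; and optimality of $\Qnm$ against the product competitor $Q^{\otimes m}$, with tensorisation, gives the comparison. The only differences are that you order the steps differently and that you spell out the lower bound in the second display via minimality of $\Qn$, which the paper leaves implicit.
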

\begin{proof}
    Recall that $\barQm = \int \widehat Q_m(\theta_{1:m}) \dt Q_m(\theta_{1:m})$, then by the convexity of $P\mapsto S_n(P)$ for all $Q_m\in\mathcal{P}(\Theta^m)$,
    \begin{IEEEeqnarray}{rCl}
        S_n(P_{\barQm}) &=& S_n\left\{P_{\int \widehat Q_m(\theta_{1:m}) \dt Q_m(\theta_{1:m})}\right\} \nonumber \\
        &\leq& \int  S_n\left\{P_{\widehat Q_m(\theta_{1:m})}\right\} \dt Q_m(\theta_{1:m}) \nonumber \\
        &=& \int \Lnm(\theta_{1:m}) \dt Q_m(\theta_{1:m}), \label{eq:Sn-bound}
    \end{IEEEeqnarray}
    where the third line follows by the definition of the finite-particle predictive score.
    Elsewhere, we have by standard information processing results that
    \begin{IEEEeqnarray}{rCl}
        \kld{Q_m}{\Pi^{\otimes m}} &=& \kldcurl{Q_m}{\bigotimes_{j=1}^m Q_m^{(j)}} + \sum_{j=1}^m \kldcurl{Q_m^{(j)}}{\Pi} \nonumber \\
        &\geq& \sum_{j=1}^m \kldcurl{Q_m^{(j)}}{\Pi} \nonumber \\
        &=& \frac{m}{m} \sum_{j=1}^m \kldcurl{Q_m^{(j)}}{\Pi} \nonumber \\
        &\geq& m\, \kldcurl{\frac{1}{m}\sum_{j=1}^m Q_m^{(j)}}{\Pi} \nonumber \\
        &=& m\, \kld{\barQm}{\Pi},\label{eq:kl-bound}
    \end{IEEEeqnarray}
    where the second line follows from the non-negativity of the Kullback--Leibler divergence, the third multiplies by one, the fourth by the convexity of the Kullback--Leibler divergence in its first argument, and the final line by definition of $\barQm$.
    As such, we can combine~\cref{eq:Sn-bound,eq:kl-bound} to write 
    \begin{equation*}
        S_n(P_{\barQm}) + \frac{\kld{\barQm}{\Pi}}{\lambda_n} \leq \int \Lnm(\theta_{1:m}) \dt Q_m(\theta_{1:m}) + \frac{\kld{Q_m}{\Pi^{\otimes m}}}{m\cdot\lambda_n}.
    \end{equation*}
    And since this holds for any $Q_m\in\mathcal{P}(\Theta^m)$, it holds for $\Qnm$:
    the minimiser of the right hand side:
    \begin{IEEEeqnarray*}{rCl}
        S_n(P_{\barQnm}) + \frac{\kld{\barQnm}{\Pi}}{\lambda_n} &\leq& \int \Lnm(\theta_{1:m}) \dt \Qnm(\theta_{1:m}) + \frac{\kld{\Qnm}{\Pi^{\otimes m}}}{m\cdot\lambda_n} \\
        &=& \inf_{Q_m\in\mathcal{P}(\Theta^m)} \left[\int \Lnm(\theta_{1:m}) \dt Q_m(\theta_{1:m}) + \frac{\kld{Q_m}{\Pi^{\otimes m}}}{m\cdot\lambda_n}\right] \\
        &\leq& \int \Lnm(\theta_{1:m}) \dt Q^{\otimes m}(\theta_{1:m}) + \frac{\kld{Q^{\otimes m}}{\Pi^{\otimes m}}}{m\cdot\lambda_n} \\
        &=& \int \Lnm(\theta_{1:m}) \dt Q^{\otimes m}(\theta_{1:m}) + \frac{\kld{Q}{\Pi}}{\lambda_n},
    \end{IEEEeqnarray*}
    where the third line now applies an arbitrary product measure constructed from $Q\in\mathcal{P}(\Theta)$ and the fourth line follows from the properties of the Kullback--Leibler divergence for product measures.
    Finally, apply the definition of $\beta_{n,m}(Q)$ to attain
    \begin{equation*}
        S_n(P_{\barQnm}) + \frac{\kld{\barQnm}{\Pi}}{\lambda_n} \leq S_n(P_{Q}) + \frac{\kld{Q}{\Pi}}{\lambda_n} + \beta_{n,m}(Q),
    \end{equation*}
    as desired.
\end{proof}
\begin{proof}[of~\Cref{lemma:kl-closeness}]
    Applying~\Cref{prop:entropy-curvature} with $Q=\barQnm$, using the non-negativity of $\mathcal{B}_n(\barQnm,\Qn)$, and then applying~\Cref{lemma:bound} gives
    \begin{IEEEeqnarray*}{rCl}
        \frac{\kld{\barQnm}{\Qn}}{\lambda_n}
        &\leq&
        \left\{
        S_n(P_{\barQnm})
        +
        \frac{\kld{\barQnm}{\Pi}}{\lambda_n}
        \right\}
        -
        \left\{
        S_n(P_{\Qn})
        +
        \frac{\kld{\Qn}{\Pi}}{\lambda_n}
        \right\}\\
        &\leq&
        \beta_{n,m}(\Qn).
    \end{IEEEeqnarray*}
    This proves the relative-entropy bound. 
    The final display follows immediately from~\Cref{ass:non-trivial-misspecification-exact} since $\beta_{n,m}(\Qn)\lesssim 1/m$ for a fixed $n$.
\end{proof}
    An application of Pinsker's inequality provides a version of the stated result in total variation distance:
    \begin{equation*}
        \tvd{\barQnm}{\Qn}
        \leq
        \sqrt{\frac{\lambda_n C_n}{2m}}.
    \end{equation*}
    That is, the ideal \ac{pro} posterior and the marginal of the finite-particle \ac{pro} posterior are asymptotically indistinguishable insofar as integrating bounded continuous functions.
\subsection{Proof of~\Cref{thm:predictive-concentration}}
\begin{proof}
    By~\Cref{lemma:bound}, after dropping the non-negative Kullback--Leibler term of the left-hand side, and applying $Q^\star = \argmin_{Q\in\mathcal{P}(\Theta)} \mathcal{S}(P_Q,\,P_0)$,
    \begin{equation*}
        S_n(P_{\barQnm}) \leq S_n(P_{Q^\star}) + \frac{\kld{Q^\star}{\Pi}}{\lambda_n} + \beta_{n,m}(Q^\star).
    \end{equation*}
    Write 
    \begin{equation*}
        \Delta_n = \sup_{Q\in\mathcal{P}_\Pi} \vert S_n(P_Q) - \mathcal{S}(P_Q,\,P_0)|,
    \end{equation*}
    so that
    \begin{equation*}
        \mathcal{S}(P_{\barQnm},\,P_0) \leq S_n(P_{Q^\star}) + \frac{\kld{Q^\star}{\Pi}}{\lambda_n} + \beta_{n,m}(Q^\star) + \Delta_n.
    \end{equation*}
    Setting $\lambda_n \asymp n^{1/2} / \log(n)$ and taking expectations of both sides with respect to $P_0$ yields
    \begin{equation*}
        \Ecurl[0]{\mathcal{S}(P_{\barQnm},\,P_0)} \lesssim \Ecurl[0]{S_n(P_{Q^\star})} + \frac{\log(n)}{n^{1/2}} + \Ecurl[0]{\beta_{n,m}(Q^\star)} + \E[0]{\Delta_n}.
    \end{equation*}
    Under~\Cref{ass:non-trivial-misspecification-exact,ass:concentration_existence}, $\Ecurl[0]{\beta_{n,m}(Q^\star)}\lesssim m^{-1}$, so that by the definition of $\mathcal{S}(\cdot,\,P_0)$ and~\Cref{ass:entropy}
    \begin{IEEEeqnarray*}{rCl}
        \Ecurl[0]{\mathcal{S}(P_{\barQnm},\,P_0)} &\lesssim& \mathcal{S}(P_{Q^\star},\,P_0) + \frac{\log(n)}{n^{1/2}} + \frac{1}{m} + r_n \\
        &\lesssim& \mathcal{S}(P_{Q^\star},\,P_0) + \frac{1}{m} + \max\left\{\frac{\log(n)}{n^{1/2}}, r_n \right\}.
    \end{IEEEeqnarray*}
    Subtracting $\mathcal{S}(P_0,\,P_0)$ from both sides and applying the definition of $d_\mathcal{S}$ completes the proof alongside the definition of $Q^\star$. 
\end{proof}
\subsection{Proof of~\Cref{thm:predictive-concentration-kl}}
\begin{proof}
    The proof is inspired by the argument in Theorem 2.1 of \cite{alquier2020concentration}, where we use density notation throughout to simplify the derivations.
    Radon--Nikodym derivatives can be used with virtually no change. 
    First define
    \begin{equation*}
        \overline{P}_{\theta_{1:m}}=\frac{1}{m}\sum_{j=1}^{m} P_{\theta_j}=P_{\widehat{Q}(\theta_{1:m})}.
    \end{equation*}
    Then, we can write
    \begin{IEEEeqnarray*}{rl}
        \EE_0 & \left[e^{-\alpha \sum_{i=1}^{n}\log \left\{p_0(x_i) / \overline{p}_{\theta_{1:m}}(x_i)\right\}} \right]\\
        &\qquad=\EE_0 \left[\prod_{i=1}^{n} e^{-\alpha \log \left\{p_0(x_i) / \overline{p}_{\theta_{1:m}}(x_i)\right\}} \right] \\
        &\qquad=\EE_0 \left(\prod_{i=1}^{n} \exp \log \left[\left\{\frac{p_0(x_i)}{ \overline{p}_{\theta_{1:m}}(x_i)}\right\}^{-\alpha}\right] \right) \\
        &\qquad=\EE_0 \left(\prod_{i=1}^{n} \exp \log \left[\left\{\frac{ \overline{p}_{\theta_{1:m}}(x_i)}{p_0(x_i)}\right\}^{\alpha}\right] \right) \\
        &\qquad=\idotsint \left\{\frac{\overline{p}_{\theta_{1:m}}(x_1)}{p_0(x_1)}\right\}^\alpha\cdots\left\{\frac{\overline{p}_{\theta_{1:m}}(x_n)}{p_0(x_n)}\right\}^\alpha p_0(x_1)\cdots p_0(x_n) \dt x_1\cdots \dt x_n\\
        &\qquad=\int\cdots\int \overline{p}_{\theta_{1:m}}^\alpha(x_1)p_0(x_1)^{1-\alpha}\cdots \overline{p}_{\theta_{1:m}}^\alpha(x_n)p_0(x_n)^{1-\alpha}  \dt x_1\cdots \dt x_n \\
        &\qquad=\exp\bigg[\left\{\frac{\alpha - 1}{\alpha - 1}\right\}\\
        &\qquad\qquad \times \log\left\{\int\cdots\int \overline{p}_{\theta_{1:m}}^\alpha(x_1)p_0(x_1)^{1-\alpha}\cdots \overline{p}_{\theta_{1:m}}^\alpha(x_n)p_0(x_n)^{1-\alpha}  \dt x_1\cdots \dt x_n\right\}\bigg] \\
        &\qquad=\exp\{-(1-\alpha) d_\alpha(\overline{P}_{\theta_{1:m}}^{\otimes n};\,P_0^{\otimes n})\} \\
        &\qquad=\exp\{-n(1-\alpha) d_\alpha(\overline{P}_{\theta_{1:m}};\,P_0)\},
    \end{IEEEeqnarray*}
    where $d_\alpha$ is the R\'enyi $\alpha$-divergence as defined in~\cref{eq:renyi}.
    Hence, we have that 
    \begin{equation*}%
        \Esqr[0]{e^{-\alpha \sum_{i=1}^{n}\log \left\{p_0(x_i) / \overline{p}_{\theta_{1:m}}(x_i)\right\}}}=e^{- n(1-\alpha) d_\alpha(\overline{P}_{\theta_{1:m}};\,P_0)} .
    \end{equation*}
    Now, for any given value of $\theta_{1:m}$, it may be that $d_\alpha(\overline{P}_{\theta_{1:m}};\,P_0)=\infty$.
    Decompose the prior into
    \begin{IEEEeqnarray*}{rCl}          
        \dt\Pi^{\otimes m}(\theta_{1:m}) &=& \dt \Pi_1^{\otimes m}(\theta_{1:m})+\dt\Pi_2^{\otimes m}(\theta_{1:m})\\
        &=&\dt\Pi^{\otimes m}(\theta_{1:m})\mathbf{1}_{\{d_\alpha(\overline{P}_{\theta_{1:m}};\,P_0)<\infty\}}+\dt\Pi^{\otimes m}(\theta_{1:m})\mathbf{1}_{\{d_\alpha(\overline{P}_{\theta_{1:m}};\,P_0)=\infty\}}.
    \end{IEEEeqnarray*}
    If $\Pi=\Pi_2$, then $\Pi$-almost surely we have $\left\{d_\alpha(\overline{P}_{\theta_{1:m}};\,P_0)=\infty\right\}$ and $\sum_{i=1}^{n}\log \left\{p_0(x_i) / \overline{p}_{\theta_{1:m}}(x_i)\right\}=\infty$, which must then also hold $Q$-almost surely for any $Q\ll\Pi^{\otimes m}$. 
    In this case, the result is vacuously satisfied. 
    Assume now that $\Pi\ne\Pi_2$. 
    Redefine the prior as the renormalised version of $\Pi_1^{\otimes m}$ for all measurable $A\subseteq\Theta^m$.
    \begin{equation*}
        \widetilde\Pi_m(A) = \frac{\Pi_1^{\otimes m}(A)}{\Pi_1^{\otimes m}\left\{\theta_{1:m}\in\Theta^m:\,d_\alpha(\overline{P}_{\theta_{1:m}};\,P_0)<\infty\right\}}
    \end{equation*}  
    On the support of our renormalised prior ${\widetilde\Pi_m}$, we have 
    \begin{equation}\label{eq:equality}
        1=\Esqr[0]{e^{-\alpha \sum_{i=1}^{n}\log \left\{p_0(x_i) / \overline{p}_{\theta_{1:m}}(x_i)\right\}+ n(1-\alpha) d_\alpha(\overline{P}_{\theta_{1:m}};\,P_0)}} .
    \end{equation}
    For $\theta_{1:m}$ in the support of ${\widetilde\Pi_m}$, define
    \begin{equation*}
        r_{n,m}(\theta_{1:m}) = -\alpha \sum_{i=1}^{n}\log \left\{p_0(x_i) / \overline{p}_{\theta_{1:m}}(x_i)\right\} + n(1-\alpha) d_\alpha(\overline{P}_{\theta_{1:m}};\,P_0)
    \end{equation*}
    and the apply the Donsker--Varadhan lemma to it so that
    \begin{equation}\label{eq:dv-renyi}
        \log \int e^{r_{n,m}(\theta_{1:m})}\dt{\widetilde\Pi_m}(\theta_{1:m}) = \sup_{Q_m\in\mathcal{P}(\Theta^m)}\left\{\int r_{n,m}(\theta_{1:m})\dt Q_m(\theta_{1:m})-\kld{Q_m}{{\widetilde\Pi_m}}\right\}.
    \end{equation}
    Integrating~\cref{eq:equality} with respect to the (normalised) prior, applying Fubini's theorem, and substituting~\cref{eq:dv-renyi} into it leads to
    \begin{IEEEeqnarray*}{rCl}
        1 &=& \int \Esqr[0]{e^{-\alpha \sum_{i=1}^{n}\log \left\{p_0(x_i) / \overline{p}_{\theta_{1:m}}(x_i)\right\} + n(1-\alpha) d_\alpha(\overline{P}_{\theta_{1:m}};\,P_0)}} \dt{\widetilde\Pi_m}(\theta_{1:m}) \\
        &=& \Esqr[0]{\int e^{-\alpha \sum_{i=1}^{n}\log \left\{p_0(x_i) / \overline{p}_{\theta_{1:m}}(x_i)\right\} + n(1-\alpha) d_\alpha(\overline{P}_{\theta_{1:m}};\,P_0)} \dt{\widetilde\Pi_m}(\theta_{1:m})} \\
        &=& \EE_0 \exp \Bigg (\sup_{Q_m\in\mathcal{P}(\Theta^m)}\bigg[ -\alpha \int \sum_{i=1}^{n}\log \left\{\frac{p_0(x_i)}{\overline{p}_{\theta_{1:m}}(x_i)}\right\} \dt Q_m(\theta_{1:m}) \\
        && \qquad\qquad  + n(1-\alpha) \int d_\alpha(\overline{P}_{\theta_{1:m}};\,P_0) \dt Q_m(\theta_{1:m}) - \kld{Q_m}{{\widetilde\Pi_m}} \bigg] \Bigg)
    \end{IEEEeqnarray*}
    Now, take logs and apply Jensen's inequality to obtain
    \begin{multline*}
        0 \geq \EE_0\sup_{Q_m\in\mathcal{P}(\Theta^m)}\bigg[ -\alpha \int \sum_{i=1}^{n}\log \left\{\frac{p_0(x_i)}{\overline{p}_{\theta_{1:m}}(x_i)}\right\} \dt Q_m(\theta_{1:m}) \\ + n(1-\alpha) \int d_\alpha(\overline{P}_{\theta_{1:m}};\,P_0) \dt Q_m(\theta_{1:m}) - \kld{Q_m}{{\widetilde\Pi_m}} \bigg].
    \end{multline*}
    In particular, for $Q_m = \Qnm^{(\alpha)}$ we have
    \begin{multline*}
        0 \geq \EE_0\bigg[ -\alpha \int \sum_{i=1}^{n}\log \left\{\frac{p_0(x_i)}{\overline{p}_{\theta_{1:m}}(x_i)}\right\} \dt \Qnm^{(\alpha)}(\theta_{1:m}) \\ + n(1-\alpha) \int d_\alpha(\overline{P}_{\theta_{1:m}};\,P_0) \dt \Qnm^{(\alpha)}(\theta_{1:m}) - \kldcurl{\Qnm^{(\alpha)}}{{\widetilde\Pi_m}} \bigg].
    \end{multline*}
    Re-arranging this we have 
    \begin{IEEEeqnarray*}{rl}
        n(1-\alpha)\,&\EE_0\bigg\{\int d_\alpha(\overline{P}_{\theta_{1:m}};\,P_0)\dt \Qnm^{(\alpha)}(\theta_{1:m})\bigg\} \\
        &\leq \EE_0\left[\alpha\int \sum_{i=1}^{n}\log \left\{\frac{p_0(x_i)}{\overline{p}_{\theta_{1:m}}(x_i)}\right\}\dt\Qnm^{(\alpha)}(\theta_{1:m})+{\kldcurl{\Qnm^{(\alpha)}}{{\widetilde\Pi_m}}}\right]\\
        &=\EE_0\Bigg(\inf_{Q_m\in\mathcal{P}(\Theta^m)}\bigg[\alpha\int \sum_{i=1}^{n}\log \left\{\frac{p_0(x_i)}{\overline{p}_{\theta_{1:m}}(x_i)}\right\}\dt Q_m(\theta_{1:m})+{\kldcurl{Q_m}{{\widetilde\Pi_m}}}\bigg]\Bigg)
    \end{IEEEeqnarray*}
    where the equality uses the definition of $\Qnm^{(\alpha)}$.
    Applying Jensen's inequality to the right hand side yields
    \begin{multline}\label{eq:renyi-jensen}
        n(1-\alpha)\,\EE_0\bigg\{\int d_\alpha(\overline{P}_{\theta_{1:m}};\,P_0)\dt \Qnm^{(\alpha)}(\theta_{1:m})\bigg\} \\
        \le \inf_{Q_m\in\mathcal{P}(\Theta^m)}\left(\alpha\int \sum_{i=1}^{n} \Esqr[0]{\log \left\{\frac{p_0(x_i)}{\overline{p}_{\theta_{1:m}}(x_i)}\right\}}\dt Q_m(\theta_{1:m})+{\kld{Q_m}{{\widetilde\Pi_m}}}\right).
    \end{multline}
    By the definition of the renormalised prior $\widetilde\Pi_m$ 
    \begin{IEEEeqnarray}{rCl}
        \kld{Q^{\otimes m}}{\widetilde\Pi_m}&=& \kld{Q^{\otimes m}}{\Pi_1^{\otimes m}}+\log \Pi^{\otimes m}_1\left\{\theta_{1:m}\in\Theta^m:\,d_\alpha(\overline{P}_{\theta_{1:m}};\,P_0)<\infty\right\} \nonumber \\
        &\leq& \kld{Q^{\otimes m}}{\Pi^{\otimes m}} 
        \label{eq:restricted-prior-factorised}
    \end{IEEEeqnarray}
    Thus, evaluating~\Cref{eq:renyi-jensen} at $Q_m={Q^\star}^{\otimes m}$, we find
    \begin{IEEEeqnarray*}{rl}
        \EE_0\bigg\{\int& d_\alpha(\overline{P}_{\theta_{1:m}};\,P_0)\dt \Qnm^{(\alpha)}(\theta_{1:m})\bigg\} \\
        &\leq \frac{\alpha}{n(1-\alpha)} \int \sum_{i=1}^{n} \Esqr[0]{\log \left\{\frac{p_0(x_i)}{\overline{p}_{\theta_{1:m}}(x_i)}\right\}}\dt {Q^\star}^{\otimes m}(\theta_{1:m})+\frac{\kld{{Q^\star}^{\otimes m}}{{\widetilde\Pi_m}}}{n(1-\alpha)} \\
        &\le \frac{\alpha}{n(1-\alpha)} \int \sum_{i=1}^{n} \Esqr[0]{\log \left\{\frac{p_0(x_i)}{\overline{p}_{\theta_{1:m}}(x_i)}\right\}}\dt {Q^\star}^{\otimes m}(\theta_{1:m})+\frac{\kld{{Q^\star}^{\otimes m}}{{\Pi^{\otimes m}}}}{n(1-\alpha)} \\
        &= \frac{\alpha}{n(1-\alpha)} \int \sum_{i=1}^{n} \Esqr[0]{\log \left\{\frac{p_0(x_i)}{\overline{p}_{\theta_{1:m}}(x_i)}\right\}}\dt {Q^\star}^{\otimes m}(\theta_{1:m})+\frac{m\,\kld{{Q^\star}}{\Pi}}{n(1-\alpha)} \\
        &= \frac{\alpha}{1-\alpha} \int \Esqr[0]{\log \left\{\frac{p_0(X)}{\overline{p}_{\theta_{1:m}}(X)}\right\}}\dt {Q^\star}^{\otimes m}(\theta_{1:m})+\frac{m\,\kld{{Q^\star}}{\Pi}}{n(1-\alpha)}\\
        &= \frac{\alpha}{1-\alpha} \int \EE_0\bigg[\log \left\{\frac{p_0(X)}{\overline{p}_{\theta_{1:m}}(X)}\right\} - \log \left\{\frac{p_0(X)}{p_{Q^\star}(X)}\right\}\bigg]\dt {Q^\star}^{\otimes m}(\theta_{1:m}) \\
        &\qquad + \frac{\alpha}{1-\alpha}\int \bigg[\log \left\{\frac{p_0(X)}{p_{Q^\star}(X)}\right\} \bigg]\dt {Q^\star}^{\otimes m}(\theta_{1:m}) +\frac{m\,\kld{{Q^\star}}{\Pi}}{n(1-\alpha)} \\
        &= \frac{\alpha}{1-\alpha} \int \EE_0\bigg[\log \left\{\frac{p_0(X)}{\overline{p}_{\theta_{1:m}}(X)}\right\} - \log \left\{\frac{p_0(X)}{p_{Q^\star}(X)}\right\}\bigg]\dt {Q^\star}^{\otimes m}(\theta_{1:m}) \\
        &\qquad + \frac{\alpha}{1-\alpha}\kld{P_0}{P_{Q^\star}} +\frac{m\,\kld{{Q^\star}}{\Pi}}{n(1-\alpha)} \\
        &= \frac{\alpha}{1-\alpha} \Ecurl[0]{\beta_{n,m}(Q^\star)} + \frac{\alpha}{1-\alpha}\kld{P_0}{P_{Q^\star}} +\frac{m\,\kld{{Q^\star}}{\Pi}}{n(1-\alpha)},
    \end{IEEEeqnarray*}
    where the first line divides both sides by $n(1 - \alpha)$, the second follows from the fact that $\kld{{Q^\star}^{\otimes m}}{\widetilde\Pi_m{}}\le \kld{{Q^\star}^{\otimes m}}{\Pi^{\otimes m}}$ by~\cref{eq:restricted-prior-factorised}, the third line uses the properties of the Kullback--Leibler divergence, the fourth collapses the sum, the fifth adds zero, the sixth applies the definition of the Kullback--Leibler between $P_0$ and $P_{Q^\star}$, and the last line follows by the definition of $\beta_{n,m}(Q^\star)$.
    Under~\Cref{ass:log_score}, by \Cref{lemma:log-score-beta}, we have that $\Ecurl[0]{\beta_{n,m}(Q^\star)} \lesssim 1 / m$  so that
    \begin{IEEEeqnarray*}{rl}
        \EE_0\bigg\{\int& d_\alpha(\overline{P}_{\theta_{1:m}};\,P_0)\dt \Qnm^{(\alpha)}(\theta_{1:m})\bigg\} \\
        &\leq \frac{\alpha}{1-\alpha} \Ecurl[0]{\beta_{n,m}(Q^\star)} + \frac{\alpha}{1-\alpha}\kld{P_0}{P_{Q^\star}} +\frac{m\,\kld{{Q^\star}}{\Pi}}{n(1-\alpha)} \\
        &\lesssim \frac{\alpha}{m(1-\alpha)} + \frac{\alpha}{1-\alpha}\kld{P_0}{P_{Q^\star}} +\frac{m\,\kld{{Q^\star}}{\Pi}}{n(1-\alpha)}.
    \end{IEEEeqnarray*}
    To obtain the stated result, recall that the R\'enyi divergence is convex in its first argument for $\alpha\in(0,1)$.
    Hence, for any $Q_m\in\mathcal{P}(\Theta^m)$,  
    \begin{IEEEeqnarray*}{rCl}
        d_\alpha(P_{\overline{Q}_{m}};\,P_0) &=& d_\alpha\left\{ P_{\int \widehat{Q}(\theta_{1:m}) \dt Q(\theta_{1:m})};\,P_0 \right\}\le \int d_\alpha(\overline{P}_{\theta_{1:m}};\,P_0)\dt Q_m(\theta_{1:m}).
    \end{IEEEeqnarray*}
    Applying $Q_m=\Qnm$ to the previous bound gives us
    \begin{equation*}
        \Esqr[0]{d_\alpha\{P_{\overline{Q}_{n,m}^{(\alpha)}};\,P_0\}} \lesssim \frac{\alpha}{m(1-\alpha)} + \frac{\alpha}{1-\alpha}\kld{P_0}{P_{Q^\star}} +\frac{m\,\kld{{Q^\star}}{\Pi}}{n(1-\alpha)},
    \end{equation*}
    whence the definition of $Q^\star$ provides us with the stated result.
\end{proof}
\subsection{Proof of~\Cref{prop:replica-gibbs}}
\begin{proof}
We first show that the marginal distribution of $\theta_{1:m}$ under $\widetilde Q_{n,m,r}$ is exactly $\Qnm$ in~\cref{eq:replica-log-target}.
Fix $\theta_{1:m}$ and consider a single replicated datum $\tilde x_{i,s}$, before conditioning on its value.
Since $z_{i,s}$ is uniform on $\{1,\ldots,m\}$ and $\tilde x_{i,s}\mid z_{i,s},\theta_{1:m}\sim P_{\theta_{z_{i,s}}}$, the law of total probability gives
\begin{equation*}
    p_{\tilde x_{i,s}\mid\theta_{1:m}}(x)
    =\sum_{j=1}^m\mathrm{pr}(z_{i,s}=j)\,p_{\theta_j}(x)
    =\frac{1}{m}\sum_{j=1}^m p_{\theta_j}(x),
\end{equation*}
that is, the equal-weight mixture density of the $m$ components.
By construction, $\tilde x_{i,1:r} = (\tilde x_{i,1},\ldots,\tilde x_{i,r})$ depend on $\theta_{1:m}$ only through their own labels $z_{i,1:r}$, which are mutually independent; 
hence $\tilde x_{i,1:r}$ are conditionally independent given $\theta_{1:m}$.
Since all $r$ of them are conditional on the same value $x_i$, their joint likelihood contribution is the $r$-fold product of the single-slot term above,
\begin{equation*}
    \prod_{s=1}^r p_{\tilde x_{i,s}\mid\theta_{1:m}}(x_i)
    =
    \left\{\frac{1}{m}\sum_{j=1}^m p_{\theta_j}(x_i)\right\}^{r}.
\end{equation*}
The blocks of replicates for different $i$ are independent of one another by the same argument, so multiplying over $i=1,\ldots,n$ gives the joint likelihood of the entire array of replicated data,
\begin{equation*}
    \prod_{i=1}^n
    \left\{
        \frac{1}{m}\sum_{j=1}^m p_{\theta_j}(x_i)
    \right\}^{r}.
\end{equation*}
Multiplying this likelihood by the prior density $\prod_{j=1}^m\pi(\theta_j)$ gives the unnormalised joint density of $\theta_{1:m}$ and the conditioning event,
\begin{equation*}
    \left\{\prod_{j=1}^m\pi(\theta_j)\right\}
    \prod_{i=1}^n
    \left\{
        \frac{1}{m}\sum_{j=1}^m p_{\theta_j}(x_i)
    \right\}^{r}.
\end{equation*}
Since $r=m\lambda_n/n$, this expression is term-for-term identical to the unnormalised density of $\Qnm$ in \cref{eq:replica-log-target}.
As both are densities with respect to the same dominating measure $\Pi^{\otimes m}$, their normalised versions coincide, so the marginal posterior of $\theta_{1:m}$ under the hierarchical model is exactly $\Qnm$.

Given $\theta_{1:m}$ and $\tilde x_{i,s}=x_i$, Bayes' rule gives
\begin{equation*}
    \mathrm{pr}(z_{i,s}=j\mid\theta_{1:m}, \tilde x_{i,s}=x_i)
    =\frac{p_{\theta_j}(x_i)}{\sum_{\ell=1}^m p_{\theta_\ell}(x_i)}
    =\rho_{i,j}.
\end{equation*}
For each fixed $i$, the variables $z_{i,1:r}$ are conditionally independent and have the same categorical distribution.
Their aggregate counts therefore satisfy \cref{eq:replica-count-conditional}.
The allocation variables are also conditionally independent across $i$, proving the conditional independence of $c_{1:n}$.

Finally, conditional on the allocations, grouping the likelihood factors assigned to component $j$ gives
\begin{equation*}
    \left\{\prod_{i=1}^n p_{\theta_j}(x_i)^{c_{i,j}}\right\}
    {\pi}(\theta_j).
\end{equation*}
This expression depends on no other component parameter, so the conditional posterior factorises over $j=1,\ldots,m$.
The same expression depends on the allocations only through their aggregate counts.
\end{proof}
\section{Additional results}
\label{app:additional-results}
\subsection{Exchangeability of the finite-particle Gibbs measure}
\begin{lemma}\label{lemma:exchangeable-marginals}
    {The minimiser, $\Qnm$, of \cref{eq:particle-objective} is exchangeable:
    for any $m$ and any $i,j,\,\Qnm^{(i)}=\Qnm^{(j)}$.}
\end{lemma}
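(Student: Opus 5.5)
The argument rests on two facts: the finite-particle loss $\Lnm$ and the product prior $\Pi^{\otimes m}$ are both invariant under permutations of the particle coordinates, and the minimiser of \cref{eq:particle-objective} is unique. For a permutation $\sigma$ of $\{1,\ldots,m\}$, write $T_\sigma(\theta_{1:m})=(\theta_{\sigma(1)},\ldots,\theta_{\sigma(m)})$. The loss in \cref{eq:particle-loss} depends on $\theta_{1:m}$ only through the empirical measure $\widehat Q_m(\theta_{1:m})=m^{-1}\sum_j\delta_{\{\theta_j\}}$, which is permutation invariant. Hence $\Lnm\circ T_\sigma=\Lnm$. Likewise, $\Pi^{\otimes m}$ is invariant under the pushforward $T_\sigma$.

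I would then show that the objective in \cref{eq:particle-objective} is invariant under pushforward by $T_\sigma$. Let $\mathcal{J}(Q_m)$ denote that objective and take any $Q_m\in\mathcal{P}(\Theta^m)$. The change of variables formula gives $\int\Lnm\,\mathrm{d}(T_\sigma)_\#Q_m=\int\Lnm\circ T_\sigma\,\mathrm{d}Q_m=\int\Lnm\,\mathrm{d}Q_m$. Since $T_\sigma$ is a measurable bijection with measurable inverse, the Kullback--Leibler divergence is invariant under it, so
\begin{equation*}
\kld{(T_\sigma)_\#Q_m}{\Pi^{\otimes m}}=\kld{(T_\sigma)_\#Q_m}{(T_\sigma)_\#\Pi^{\otimes m}}=\kld{Q_m}{\Pi^{\otimes m}}.
\end{equation*}
Together these give $\mathcal{J}\{(T_\sigma)_\#Q_m\}=\mathcal{J}(Q_m)$ for every $Q_m$.

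Applying this at $Q_m=\Qnm$ shows that $(T_\sigma)_\#\Qnm$ also attains the minimum. The objective is linear in $Q_m$ plus a strictly convex KL term, so its minimiser is unique; equivalently, it is the Gibbs measure \citep[Lemma 2.2]{alquier_user-friendly_2024}. Therefore $(T_\sigma)_\#\Qnm=\Qnm$ for every $\sigma$. One can also read this off directly from the density in \cref{eq:particle-gibbs-density}, since that density is a symmetric function of $\theta_{1:m}$. Finally, take $\sigma$ to be the transposition of $i$ and $j$. The $i$-th marginal of $(T_\sigma)_\#\Qnm$ is the $j$-th marginal of $\Qnm$, which gives $\Qnm^{(i)}=\Qnm^{(j)}$.

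The only delicate step is justifying uniqueness and the KL invariance in full measure-theoretic generality. The objective can be infinite, so the argument needs the minimiser to exist with a finite value, which holds by the existence conditions in \Cref{sec:existence-uniqueness-app}. Alternatively, one can avoid uniqueness altogether and argue from the explicit symmetric density $q_{n,m}$ with respect to the symmetric reference measure $\Pi^{\otimes m}$.
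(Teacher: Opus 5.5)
Your proposal is correct and follows essentially the paper's argument: permutation invariance of $\mathsf{L}_{n,m}$ and $\Pi^{\otimes m}$, combined with uniqueness of the minimiser. The only difference is that you apply uniqueness directly to each $(T_\sigma)_\# Q_{n,m}$, which has exactly the same objective value. The paper instead averages over all permutations and uses convexity of the Kullback--Leibler divergence to show the symmetrisation is also a minimiser, a step your version does not need.
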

\begin{proof}
    For a measurable map $f: \mathcal{X} \to \mathcal{Y}$ and a measure $\mu$ on  $\mathcal{X}$, the pushforward measure $f_\#\mu$ on $\mathcal{Y}$ is 
    defined by
    \begin{equation*}
        (f_\# \mu)(A) = \mu\{f^{-1}(A)\}
    \end{equation*}
    for all measurable $A \subseteq \mathcal{Y}$.
    Let $\mathfrak S_m$ denote the permutation group on $\{1,\dots,m\}$. 
    For $\sigma\in\mathfrak S_m$, define
    \begin{equation*}
    T_\sigma(\theta_1,\dots,\theta_m)
    =
    \{\theta_{\sigma(1)},\dots,\theta_{\sigma(m)}\},
    \qquad
    Q_m^{\mathrm{sym}}
    =
    \frac{1}{m!}\sum_{\sigma\in\mathfrak S_m} T_{\sigma\#}Q_m.
    \end{equation*}
    The map $\theta_{1:m}\mapsto \widehat Q_m(\theta_{1:m})$ is permutation invariant, hence so is $\Lnm(\theta_{1:m})$. 
    \begin{equation*}
    \int \Lnm(\theta_{1:m}) \dt Q_m^{\mathrm{sym}}(\theta_{1:m})
    =
    \int \Lnm(\theta_{1:m}) \dt Q_m(\theta_{1:m}).
    \end{equation*}
    Moreover, $\Pi^{\otimes m}$ is permutation invariant and the map $Q\mapsto \kld{Q}{\Pi^{\otimes m}}$ is convex in its first argument, so
    \begin{equation*}
    \kld{Q_m^{\mathrm{sym}}}{\Pi^{\otimes m}}
    \le
    \frac{1}{m!}\sum_{\sigma\in\mathfrak S_m}\kld{T_{\sigma\#}Q_m}{\Pi^{\otimes m}}
    =
    \kld{Q_m}{\Pi^{\otimes m}}.
    \end{equation*}
    Hence symmetrisation does not increase the objective in \cref{eq:particle-objective}. 
    Therefore, if $Q_m$ is a minimiser, then $Q_m^{\mathrm{sym}}$ is also a minimiser and is exchangeable. 
    Since $\Qnm$ is a minimiser of~\cref{eq:particle-objective} then its symmetrisation is also a minimiser.
    And since we know that the minimiser is unique, we know that $\Qnm$ is also its symmetrisation, and thus that it is exchangeable.
    This proves the claim.
\end{proof}
\subsection{Results for the logarithmic score}
\label{sec:log-score-beta-proof}
We first present the following technical condition.
\begin{assumption}\label{ass:log_score}
    There exists $Q^{\star}\in\arg\inf_{Q\in\mathcal{P}(\Theta)} d_{\mathrm{KL}}(P_0;\,P_Q),$ such that $\kld{Q^\star}{\Pi}<\infty$, and
    \begin{equation*}
        M(Q^\star)=\mathbb{E}_{0}\left\{ \mathbb{E}_{Q^\star}\left( \left[ \frac{p_{\theta}(X)}{\mathbb{E}_{Q^\star}\left\{ p_{\theta}(X)\right\}}\right]^4 \right)^{1/2}  \mathbb{E}_{Q^\star}\left( \log\left[ \frac{p_{\theta}(X)}{\mathbb{E}_{Q^\star}\left\{ p_{\theta}(X)\right\}}\right]^4 \right)^{1/4}+1\right\} < \infty.
    \end{equation*}
\end{assumption}
With this in place, we are ready to control the finite-particle bias in the case of the logarithmic score.
\begin{lemma}\label{lemma:log-score-beta}
    For the logarithmic scoring rule, under~\Cref{ass:log_score},
        $
        \mathbb{E}_0\{\beta_{n,m}(Q^\star)\}\lesssim 1/m.
        $
\end{lemma}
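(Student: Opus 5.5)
Under the logarithmic score, the finite-particle bias at $Q^\star$ is
\begin{equation*}
\beta_{n,m}(Q^\star)=\frac1n\sum_{i=1}^n \int \log\frac{p_{Q^\star}(x_i)}{\overline p_{\theta_{1:m}}(x_i)}\,\dt {Q^\star}^{\otimes m}(\theta_{1:m}).
\end{equation*}
Since the $x_i$ are i.i.d.\ under $P_0$, its expectation is $\mathbb{E}_0\,\mathbb{E}_{{Q^\star}^{\otimes m}}\{-\log \bar W_m(X)\}$. Here
\begin{equation*}
W_j(X)=\frac{p_{\theta_j}(X)}{\mathbb{E}_{Q^\star}\{p_\theta(X)\}},\qquad \bar W_m=\frac1m\sum_{j=1}^m W_j .
\end{equation*}
Conditional on $X$, the $W_j$ are i.i.d.\ under ${Q^\star}^{\otimes m}$ with mean one. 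The plan is therefore to bound $\mathbb{E}\{-\log\bar W_m\}$ for fixed $X$ by a constant times $m^{-1}$, with the constant being the integrand appearing in $M(Q^\star)$. Integrating over $X\sim P_0$ and invoking \Cref{ass:log_score} then gives $\mathbb{E}_0\{\beta_{n,m}(Q^\star)\}\lesssim M(Q^\star)/m$.

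\textbf{Pointwise expansion.} For fixed $X$, write $\bar W_m-1=\Delta$, so that $\mathbb{E}\Delta=0$. Use the exact identity
\begin{equation*}
-\log(1+\Delta)=-\Delta+\frac{\Delta^2}{1+\Delta}\,h(\Delta),
\end{equation*}
where $h$ is a bounded remainder factor. A cleaner alternative is the elementary inequality
\begin{equation*}
-\log w\le -(w-1)+\frac{(w-1)^2}{w}\qquad (w>0),
\end{equation*}
which follows from $\log w\ge 1-1/w$ applied appropriately. I would rather use the more robust inequality
\begin{equation*}
-\log w\le 1-w+(w-1)^2\,\Bigl(1+|\log w|\Bigr).
\end{equation*}
Taking expectations kills the linear term. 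It remains to control $\mathbb{E}\bigl[(\bar W_m-1)^2(1+|\log\bar W_m|)\bigr]$. By Cauchy--Schwarz this is at most
\begin{equation*}
\mathbb{E}\bigl[(\bar W_m-1)^4\bigr]^{1/2}\,\mathbb{E}\bigl[(1+|\log \bar W_m|)^2\bigr]^{1/2}.
\end{equation*}

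\textbf{Moment bounds.} A Rosenthal/Marcinkiewicz--Zygmund bound for i.i.d.\ centred terms gives $\mathbb{E}(\bar W_m-1)^4\lesssim \mathbb{E}W^4/m^2$, so the first factor is $\lesssim (\mathbb{E} W^4)^{1/2}/m$. This is the source of the $1/m$ rate and matches the $\mathbb{E}_{Q^\star}(W^4)^{1/2}$ factor in $M(Q^\star)$.

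For the logarithmic factor, use that $\log$ is concave, so $\log\bar W_m\ge \frac1m\sum_j\log W_j$. Together with $\log\bar W_m\le \bar W_m-1$, this gives
\begin{equation*}
|\log \bar W_m|\le \frac1m\sum_j|\log W_j|+|\bar W_m-1|.
\end{equation*}
By Minkowski, $\mathbb{E}[(1+|\log\bar W_m|)^2]^{1/2}$ is bounded by $1+\mathbb{E}(\log W)^{4\cdot(1/2)}$-type terms plus $\mathbb{E}W^2$. The fourth-root log moment in the assumption dominates these via Lyapunov's inequality. Multiplying the two factors yields a pointwise bound of the form
\begin{equation*}
\frac{C}{m}\,\Bigl\{\mathbb{E}_{Q^\star}(W^4)^{1/2}\,\mathbb{E}_{Q^\star}\bigl((\log W)^4\bigr)^{1/4}+1\Bigr\}.
\end{equation*}
Lower-order cross terms are absorbed using $\mathbb{E}W=1$ and Jensen, since $\mathbb{E}W^4\ge1$.

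\textbf{Main obstacle.} The delicate step is the behaviour near $\bar W_m\to0$, where $-\log\bar W_m$ blows up. This is why a pure Taylor argument fails and a log-moment is needed. If the inequality above with the $|\log w|$ factor does not close cleanly, I would split on the event $\{\bar W_m\ge 1/2\}$. On that event a quadratic Taylor bound gives $\lesssim \mathbb{E}(\bar W_m-1)^2\le \mathbb{E}W^2/m$. On the complement, apply Hölder to bound $\mathbb{E}[|\log\bar W_m|\mathbf 1\{\bar W_m<1/2\}]$ by
\begin{equation*}
\|\log\bar W_m\|_{4}\,\mathbb{P}(|\bar W_m-1|>1/2)^{3/4},
\end{equation*}
with Markov's inequality on the fourth moment giving $\mathbb{P}(|\bar W_m-1|>1/2)\lesssim \mathbb{E}W^4/m^2$. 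This again yields $O(1/m)$, and the exponents line up with those in $M(Q^\star)$. The final step integrates over $P_0$ and uses $M(Q^\star)<\infty$.
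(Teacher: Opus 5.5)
Your route is genuinely different from the paper's. The paper uses $u-1-\log u\le (u-1)\log u$ (Lemma~5 of Ch\'erief-Abdellatif et al.) and Cauchy--Schwarz to reduce to $\|\bar W_m-1\|_2\,\|\log\bar W_m\|_2$. It computes the first factor exactly as $m^{-1/2}\mathrm{var}_{Q^\star}(W)^{1/2}$, and gets $\|\log\bar W_m\|_2=O(m^{-1/2})$ by citing their Lemma~6(ii). That cited lemma is where the singularity of $\log$ at zero is actually handled. You instead put the whole rate into $\|\bar W_m-1\|_4^2\lesssim(\mathbb{E}W^4)^{1/2}/m$, so the logarithmic factor only needs to be $O(1)$, which concavity and Minkowski give for free. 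This makes the proof self-contained and avoids the external lemma. However, it changes which moments of $W$ appear, and your bookkeeping at the end does not work.

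Write $A=\mathbb{E}_{Q^\star}W^4$ and $B=\mathbb{E}_{Q^\star}(\log W)^4$. Your main line actually yields $Cm^{-1}A^{1/2}\{1+B^{1/4}+m^{-1/2}\mathrm{var}(W)^{1/2}\}$. The claim that the extra terms are ``absorbed using $\mathbb{E}W=1$ and Jensen'' is false. Those facts only give lower bounds such as $A\ge1$, which let you absorb constants into $A^{1/2}$, not $A^{1/2}$ into $A^{1/2}B^{1/4}+1$. A rare spike, $W=e^{a}$ with probability $e^{-2a}$, has $B\to0$ while $A\to\infty$. So $P_0$-integrability of your bound does not follow from $M(Q^\star)<\infty$. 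The cross term $A^{1/2}\mathrm{var}(W)^{1/2}$ is not dominated even by $A^{1/2}(1+B^{1/4})$.

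Your fallback has the same defect: H\"older with exponents $(4,4/3)$ produces $A^{3/4}$, so the exponents do not line up with $M(Q^\star)$ as you claim. The repair is to run your split with Cauchy--Schwarz.
\begin{itemize}
\item On $\{\bar W_m\ge1/2\}$, use $w-1-\log w\le(w-1)^2/w\le2(w-1)^2$, which gives $2\mathrm{var}(W)/m$. From $|w-1|\le|\log w|\max\{w,1\}$ and H\"older, $\mathrm{var}(W)\le\|\log W\|_4\,\|\max\{W,1\}\|_4\,\|W-1\|_2\lesssim A^{1/2}B^{1/4}$.
\item On $\{\bar W_m<1/2\}$, bound $\bar W_m-1-\log\bar W_m$ by $(\log\bar W_m)_-\le m^{-1}\sum_j(\log W_j)_-$. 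Its contribution is then at most $\|\log W\|_2\,\mathbb{P}(|\bar W_m-1|>1/2)^{1/2}\lesssim B^{1/4}A^{1/2}/m$, using your fourth-moment Markov bound.
\end{itemize}
Together these give the pointwise bound $Cm^{-1}A^{1/2}B^{1/4}$, which integrates under \Cref{ass:log_score} exactly as stated.

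In your main line, you could instead replace $|\log w|$ by $(\log w)_-=\max\{-\log w,0\}$. This is legitimate because $w-1-\log w\le(w-1)^2/2$ for $w\ge1$, and it removes the cross term. But the bare $A^{1/2}/m$ coming from the ``$1+$'' factor remains. The paper's corresponding term $\mathrm{var}(W)^{1/2}A^{1/4}$ does not have this problem, since by the same mean-value bound it is $\lesssim A^{1/2}B^{1/4}$.
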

\begin{proof}
    First note that we can rewrite the bias as 
    \begin{flalign*}
        \beta_{n,m}(Q)&=-\frac{1}{n} \sum_{i=1}^n \int \left\{\log \frac{1}{m}\sum_{j=1}^{m}p_{\theta_j}(x_i) -\log p_Q(x_i)\right\}\dt Q^{\otimes m}(\theta_{1:m})\\
        &=-\frac{1}{n} \sum_{i=1}^n \int \left\{\log \frac{\frac{1}{m}\sum_{j=1}^{m}p_{\theta_j}(x_i)}{p_Q(x_i)}\right\}\dt Q^{\otimes m}(\theta_{1:m})\\
        &=\frac{1}{n} \sum_{i=1}^n \int\left\{\left(\frac{\frac{1}{m}\sum_{j=1}^{m}p_{\theta_j}(x_i)-p_Q(x_i)}{p_Q(x_i)}\right)-\log\left( \frac{\frac{1}{m}\sum_{j=1}^{m}p_{\theta_j}(x_i)}{p_Q(x_i)}\right)\right\}\dt Q^{\otimes m}(\theta_{1:m}),
    \end{flalign*}
    where the last line has added zero by noting that for all $x_i$ 
    \begin{equation*}
        \frac{1}{m}\int\sum_{j=1}^{m}p_{\theta_j}(x_i) \dt Q^{\otimes m}(\theta_{1:m})=\int p_\theta(x_i) \dt Q(\theta)=p_Q(x_i).
    \end{equation*}
    Define $u_i(\theta_{1:m})=\frac{1}{m}\sum_{j=1}^{m}p_{\theta_j}(x_i)/p_Q(x_i)$, and note that
    \begin{IEEEeqnarray*}{rCl}
        \beta_{n,m}(Q)&=&\frac{1}{n}\sum_{i=1}^n \int\left[\left\{\frac{\frac{1}{m}\sum_{j=1}^{m}p_{\theta_j}(x_i)-p_Q(x_i)}{p_Q(x_i)}\right\}-\log\left\{ \frac{\frac{1}{m}\sum_{j=1}^{m}p_{\theta_j}(x_i)}{p_Q(x_i)}\right\} \right] \dt Q^{\otimes m}(\theta_{1:m})\\
        &=&
        \frac{1}{n}\sum_{i=1}^n\int\left[\left\{u_i(\theta_{1:m})-1\right\}-\log\left\{u_i(\theta_{1:m})\right\}\right] \dt Q^{\otimes m}(\theta_{1:m})
        \\
        &=&
        -\frac{1}{n}\sum_{i=1}^n\int\left[\log\left\{u_i(\theta_{1:m})\right\} - \left\{u_i(\theta_{1:m})-1\right\}\right] \dt Q^{\otimes m}(\theta_{1:m})
        \\
        &\le& \frac{1}{n}\sum_{i=1}^n\int \left|\log\left\{ u_i(\theta_{1:m})\right\} - \{u_i(\theta_{1:m})-1\}\right| \dt Q^{\otimes m}(\theta_{1:m})
    \end{IEEEeqnarray*}
    Now, we apply Lemma 5 of \citet{cherief-abdellatif_asymptotics_2025} to the final equation, to obtain 
    \begin{flalign*}
        \beta_{n,m}(Q)&\le \frac{1}{n}\sum_{i=1}^n \int \left|\left\{u_i(\theta_{1:m})-1\right\}\cdot\log\left\{ u_i(\theta_{1:m})\right\}\right|\dt Q^{\otimes m}(\theta_{1:m})
    \end{flalign*}
    Applying Cauchy--Schwartz across each index $i$ we obtain
    \begin{multline}\label{eq:cauchy-schwartz}
        \beta_{n,m}(Q)\le \frac{1}{n}\sum_{i=1}^n\Bigg(\left[\int \left\{u_i(\theta_{1:m})-1\right\}^2 \dt Q^{\otimes m}(\theta_{1:m})\right]^{1/2} \\
        \times\left[\int \log\left\{u_i(\theta_{1:m})\right\}^2 \dt Q^{\otimes m}(\theta_{1:m})\right]^{1/2}\Bigg).
    \end{multline}
    We first consider the first term in the summand above.
    Recalling from the proof of~\Cref{lem:beta-nonnegative} that $\E[Q^{\otimes m}]{\widehat Q_m} = Q$, we have that $\Ecurl[Q^{\otimes m}]{u_i(\theta_{1:m})} = 1$ for each $i$, and thus 
    \begin{IEEEeqnarray*}{rCl}
        \Esqr[Q^{\otimes m}]{\{u_i(\theta_{1:m}) - 1\}^{2}} &=& \E[Q^{\otimes m}]{\left[u_i(\theta_{1:m}) - \Ecurl[Q^{\otimes m}]{u_i(\theta_{1:m})}\right]^{2}} \\
        &=& \mathrm{var}_{Q^{\otimes m}}\left\{u_i(\theta_{1:m})\right\} \\
        &=& \mathrm{var}_{Q^{\otimes m}}\left\{\frac{\frac{1}{m}\sum_{j=1}^{m}p_{\theta_j}(x_i)}{p_Q(x_i)}\right\} \\
        &=& \frac{1}{m^2}\sum_{j=1}^{m}\mathrm{var}_{Q}\left\{\frac{p_{\theta_j}(x_i)}{p_Q(x_i)}\right\} \\
        &=& \frac{1}{m}\mathrm{var}_{Q}\left\{\frac{p_{\theta}(x_i)}{p_Q(x_i)}\right\}.
    \end{IEEEeqnarray*}
    Plugging this back into~\cref{eq:cauchy-schwartz} yields
    \begin{equation*}
        \beta_{n,m}(Q) \leq \frac{1}{n}\sum_{i=1}^n\frac{1}{\sqrt{m}}\left[\mathrm{var}_{Q}\left\{\frac{p_{\theta}(x_i)}{p_Q(x_i)}\right\}\right]^{1/2} \left[\int \log\left\{u_i(\theta_{1:m})\right\}^2 \dt Q^{\otimes m}(\theta_{1:m})\right]^{1/2}.
    \end{equation*}
    Now, apply Lemma 6(ii) of \citet{cherief-abdellatif_asymptotics_2025} to the second term in the summand above, which gives, for some finite $K>0$, 
    \begin{multline*}
        \left[\int \log\left\{u_i(\theta_{1:m})\right\}^2 \dt Q^{\otimes m}(\theta_{1:m})\right]^{1/2} \\ \le \frac{K}{\sqrt{m}}\left[\mathbb{E}_{\theta\sim Q}\left\{\frac{p_\theta(x_i)}{p_Q(x_i)}\right\}^4 \right]^{1/4}\left\{1+\left(\mathbb{E}_{\theta\sim Q}\left[\left\{\log\frac{p_\theta(x_i)}{p_Q(x_i)}\right\}^4\right]\right)^{1/4}\right\}
    \end{multline*}
    Hence, we obtain, for $Q=Q^\star$, 
    \begin{IEEEeqnarray*}{rCl}
        \beta_{n,m}(Q^\star)&\leq&\frac{1}{n}\sum_{i=1}^n\Bigg(\frac{1}{\sqrt m}\left[\mathrm{var}_{Q^\star}\left\{\frac{p_{\theta}(x_i)}{p_{Q^\star}(x_i)}\right\}\right]^{1/2} \\
        &&\qquad \times \frac{K}{\sqrt{m}}\left[\mathbb{E}_{\theta\sim Q^\star}\left\{\frac{p_\theta(x_i)}{p_{Q^\star}(x_i)}\right\}^4 \right]^{1/4}\left\{1+\left(\mathbb{E}_{\theta\sim Q^\star}\left[\left\{\log\frac{p_\theta(x_i)}{p_{Q^\star}(x_i)}\right\}^4\right]\right)^{1/4}\right\}\Bigg)
     \end{IEEEeqnarray*}
    Take expectations with respect to $P_0$ on both sides and apply~\Cref{ass:log_score} to obtain the stated result.
\end{proof}
\subsection{Results for kernel scoring rules}\label{sec:kernel-scores}
General kernel scoring rules can be written as an expectation over a product measure in $\Theta$ for any $x\in\mathcal{X}$ \citep[Lemma 2 of][]{mclatchie_predictively_2025}, so that the ideal predictive score for any $Q\in\mathcal{P}(\Theta)$ is
\begin{equation}\label{eq:di-pred-score}
    S_n(P_Q) = \iint \left\{S_n(P_\theta) - \eta_n(\theta,\theta^\prime)\right\} \dt Q(\theta)\dt Q(\theta^\prime),
\end{equation}
where the function $\eta_n : \mathcal{X}\times\Theta^2\to\mathbb{R}_+$ satisfies $\eta_n(\theta,\theta^\prime)\ge 0$ and $\eta_n(\theta,\theta)=0$ for all $\theta,\theta^\prime\in\Theta$.

For $x,\,x^\prime\in\mathcal{X}$, let $\kappa(x,\,x^\prime)$ denote a positive-definite kernel with feature map $\phi:\,\mathcal{X}\to\mathcal{H}$, such that $\kappa(x,\,x^\prime) = \langle \phi(x),\,\phi(x^\prime) \rangle_{\mathcal{H}}$.
And for $\theta,\,\theta^\prime\in\Theta$, define $\mu(P_\theta)=\int \phi(x)\dt P_\theta(x)\in\mathcal{H}$ the kernel mean embedding of $P_\theta$, and the inner product $\langle\mu(P_\theta),\,\mu (P_{\theta^\prime})\rangle_{\mathcal{H}} = \iint \kappa(x,\,x^\prime)\dt P_{\theta}(x)\dt P_{\theta^\prime}(x^\prime)$.
Then the finite-particle  loss takes the form 
\begin{multline*}
    \Lnm^{(\kappa)}(\theta_{1:m}) \\ = \frac{1}{m^2} \sum_{j=1}^m\sum_{\ell=1}^m \left\{ \langle \mu(P_{\theta_j}), \mu(P_{\theta_\ell}) \rangle_{\mathcal{H}} - \langle \mu(P_{\theta_\ell}), \mu(\delta_{\{x_{1:n}\}}) \rangle_{\mathcal{H}} - \langle \mu(P_{\theta_j}), \mu(\delta_{\{x_{1:n}\}}) \rangle_{\mathcal{H}} \right\},
\end{multline*}
where $\delta_{\{x_{1:n}\}} = n^{-1}\sum_{i=1}\delta_{\{x_{i}\}}$.
In the case of (bounded) kernel scoring rules, we can exactly characterise the constant $M_n(Q)$ in~\Cref{ass:non-trivial-misspecification-exact}.
\begin{lemma}\label{lemma:di-beta}
    For general kernel scoring rules based on bounded kernels,
    \begin{equation*}
        M_n(Q) = \iint \eta_n(\theta,\theta^\prime) \dt Q(\theta) \dt Q(\theta^\prime);
    \end{equation*}
    and subsequently the particle bias of~\cref{eq:beta-defn} becomes
    \begin{equation*}
        \beta_{n,m}(Q) = \frac{1}{m}\iint \eta_n(\theta,\theta^\prime) \dt Q(\theta) \dt Q(\theta^\prime) \lesssim \frac{1}{m}.
    \end{equation*}
\end{lemma}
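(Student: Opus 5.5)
Starting from the representation \cref{eq:di-pred-score}, I will compute $\beta_{n,m}(Q)$ exactly. The idea is to apply \cref{eq:di-pred-score} twice: once at $Q$ and once at the empirical measure $\widehat Q_m(\theta_{1:m})$. The second application is legitimate because \cref{eq:di-pred-score} holds for any probability measure on $\Theta$. Since $P_{\widehat Q_m(\theta_{1:m})}=m^{-1}\sum_j P_{\theta_j}$, it gives
\begin{equation*}
    \Lnm(\theta_{1:m}) = \frac{1}{m^2}\sum_{j=1}^m\sum_{\ell=1}^m \left\{S_n(P_{\theta_j}) - \eta_n(\theta_j,\theta_\ell)\right\}
    = \frac{1}{m}\sum_{j=1}^m S_n(P_{\theta_j}) - \frac{1}{m^2}\sum_{j\neq \ell}\eta_n(\theta_j,\theta_\ell).
\end{equation*}
The diagonal terms of the double sum vanish because $\eta_n(\theta,\theta)=0$. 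This is consistent with the mean-embedding form of $\Lnm^{(\kappa)}$ given above.

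Next I will integrate against $Q^{\otimes m}$. The first term gives $\int S_n(P_\theta)\dt Q(\theta)$. For the second, each of the $m(m-1)$ off-diagonal pairs involves independent coordinates, so each contributes $\iint \eta_n\dt Q\dt Q$. Hence
\begin{equation*}
    \int \Lnm\dt Q^{\otimes m} = \int S_n(P_\theta)\dt Q(\theta) - \frac{m-1}{m}\iint \eta_n(\theta,\theta')\dt Q(\theta)\dt Q(\theta').
\end{equation*}
Subtracting $S_n(P_Q)=\int S_n(P_\theta)\dt Q - \iint \eta_n\dt Q\dt Q$, which is \cref{eq:di-pred-score} at $Q$, cancels the linear term. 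The remaining $\eta_n$-terms combine as $\left(1-\frac{m-1}{m}\right)\iint\eta_n\dt Q\dt Q$, so $\beta_{n,m}(Q)=m^{-1}\iint\eta_n\,\dt Q\,\dt Q$. This identifies $M_n(Q)$ and, in passing, recovers \Cref{lem:beta-nonnegative} since $\eta_n\ge0$.

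For the final bound $\lesssim 1/m$, I need $\eta_n$ to be uniformly bounded. For a kernel score, $\eta_n(\theta,\theta')$ is, up to a constant factor, $\|\mu(P_\theta)-\mu(P_{\theta'})\|_{\mathcal H}^2$. This equals $\iint\kappa\,\dt P_\theta\dt P_\theta+\iint\kappa\,\dt P_{\theta'}\dt P_{\theta'}-2\iint\kappa\,\dt P_\theta\dt P_{\theta'}$, which is at most $4\sup|\kappa|$, a bound uniform in $n$, $Q$ and the data. The only real obstacle is this step: I must confirm the exact form of $\eta_n$ from Lemma 2 of \citet{mclatchie_predictively_2025}. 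I would verify it by expanding the mean-embedding formula for $\Lnm^{(\kappa)}$, where the data-dependent cross terms are linear in the particles and therefore cancel inside $\beta_{n,m}$. That leaves only the quadratic $\langle\mu(P_{\theta_j}),\mu(P_{\theta_\ell})\rangle$ terms, whose diagonal-versus-off-diagonal discrepancy yields exactly the squared-embedding distance.

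Integrability is not an issue, since a bounded kernel makes every term bounded, so Fubini applies throughout.
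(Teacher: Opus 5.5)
Your proposal is correct and follows the paper's proof essentially line for line. You expand $\Lnm$ via \cref{eq:di-pred-score} at the empirical measure, drop the vanishing diagonal $\eta_n$ terms, integrate against $Q^{\otimes m}$ to obtain the factor $m(m-1)/m^2$, and subtract \cref{eq:di-pred-score} at $Q$. Your last step is slightly tighter than the paper's: the paper only invokes pointwise finiteness of $\eta_n$, whereas your uniform bound $\eta_n\lesssim\sup|\kappa|$ via the squared mean-embedding distance is what actually makes $\iint\eta_n\,\dt Q\,\dt Q$ finite and bounded independently of $Q$.
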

\begin{proof}
    By the form of the score in~\cref{eq:di-pred-score}
    \begin{IEEEeqnarray*}{rCl}
    \int \Lnm(\theta_{1:m}) \dt Q^{\otimes m}(\theta_{1:m})
    &=&
    \int S_n(P_\theta) \dt Q(\theta)
    -
    \frac{1}{m^2}\sum_{j,\ell=1}^m
    \int \eta_n(\theta_j,\theta_\ell) \dt Q^{\otimes m}(\theta_{1:m}) \\
    &=&
    \int S_n(P_\theta) \dt Q(\theta)
    -
    \frac{m(m-1)}{m^2}
    \iint \eta_n(\theta,\theta') \dt Q(\theta) \dt Q(\theta'),
    \end{IEEEeqnarray*}
    because the diagonal terms vanish. 
    On the other hand,
    \begin{equation*}
    S_n(P_Q)
    =
    \int S_n(P_\theta) \dt Q(\theta)
    -
    \iint \eta_n(\theta,\theta') \dt Q(\theta) \dt Q(\theta').
    \end{equation*}
    Subtracting the last two displays yields the formula.
    The final inequality follows since $\eta_n(\theta,\theta^\prime)<\infty$ for all $\theta,\theta^\prime\in\Theta$ under bounded kernels.
\end{proof}
\subsection{Extensions of~\Cref{lemma:kl-closeness}}\label{sec:kl-closeness-supp}
\begin{lemma}\label{prop:particle-free-energy-decomposition}
    Under the conditions of~\Cref{prop:entropy-curvature}, every \(Q_m\in\mathcal{P}(\Theta^m)\) such that \(Q_m\ll\Qn^{\otimes m}\) and the terms below are finite satisfies
    \begin{multline}\label{eq:particle-free-energy-decomposition}
        \left\{
        \int_{\Theta^m}\Lnm(\theta_{1:m})\dt Q_m(\theta_{1:m})
        +
        \frac{\kld{Q_m}{\Pi^{\otimes m}}}{m\lambda_n}
        \right\}
        -
        \left\{
        S_n(P_{\Qn})
        +
        \frac{\kld{\Qn}{\Pi}}{\lambda_n}
        \right\}
        \\
        =
        \frac{\kld{Q_m}{\Qn^{\otimes m}}}{m\lambda_n}
        +
        \int_{\Theta^m}
        \mathcal{B}_n\{\widehat Q_m(\theta_{1:m}),\Qn\}
        \dt Q_m(\theta_{1:m}).
    \end{multline}
    Consequently,
    \begin{equation}\label{eq:particle-free-energy-bound}
        \frac{\kld{\Qnm}{\Qn^{\otimes m}}}{m\lambda_n}
        +
        \int_{\Theta^m}
        \mathcal{B}_n\{\widehat Q_m(\theta_{1:m}),\Qn\}
        \dt\Qnm(\theta_{1:m})
        \leq
        \beta_{n,m}(\Qn).
    \end{equation}
\end{lemma}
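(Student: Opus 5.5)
The plan mirrors the proof of \Cref{prop:entropy-curvature}, now on the product space $\Theta^m$. First, use the fixed-point representation \cref{eq:pro-fixed-point-log-density}, $\log(\dt\Qn/\dt\Pi)(\theta)=-\lambda_n g_n(\theta)-c_n$, and sum it over the $m$ coordinates. This gives
\begin{equation*}
    \log\frac{\dt\Qn^{\otimes m}}{\dt\Pi^{\otimes m}}(\theta_{1:m})=-\lambda_n\sum_{j=1}^m g_n(\theta_j)-mc_n .
\end{equation*}
For $Q_m\ll\Qn^{\otimes m}$ (hence $Q_m\ll\Pi^{\otimes m}$, by mutual absolute continuity of $\Qn$ and $\Pi$), the chain rule for densities then yields
\begin{equation*}
    \kld{Q_m}{\Pi^{\otimes m}}=\kld{Q_m}{\Qn^{\otimes m}}-\lambda_n\int\sum_{j=1}^m g_n(\theta_j)\dt Q_m-mc_n .
\end{equation*}
Specialising the same identity to $Q_m=\Qn^{\otimes m}$ gives $\kld{\Qn}{\Pi}=-\lambda_n\int g_n\dt\Qn-c_n$, which is how $c_n$ is eliminated.

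Dividing by $m\lambda_n$ and using $m^{-1}\sum_j g_n(\theta_j)=\int g_n\dt\widehat Q_m(\theta_{1:m})$, we get
\begin{equation*}
    \frac{\kld{Q_m}{\Pi^{\otimes m}}}{m\lambda_n}-\frac{\kld{\Qn}{\Pi}}{\lambda_n}=\frac{\kld{Q_m}{\Qn^{\otimes m}}}{m\lambda_n}-\int\!\!\int g_n\dt\bigl(\widehat Q_m(\theta_{1:m})-\Qn\bigr)\dt Q_m(\theta_{1:m}).
\end{equation*}
Next, add $\int\Lnm\dt Q_m-S_n(P_{\Qn})$ to both sides, where $\Lnm(\theta_{1:m})=S_n\{P_{\widehat Q_m(\theta_{1:m})}\}$. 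Integrating the definition \cref{eq:pro-score-bregman} of $\mathcal{B}_n\{\widehat Q_m(\theta_{1:m}),\Qn\}$ against $Q_m$ recombines the right-hand side exactly into $\int\mathcal{B}_n\{\widehat Q_m,\Qn\}\dt Q_m$, which gives \cref{eq:particle-free-energy-decomposition}.

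For \cref{eq:particle-free-energy-bound}, I will take $Q_m=\Qnm$. Since $\Qnm$ minimises the objective \cref{eq:particle-objective}, its value there is at most the value at $\Qn^{\otimes m}$. That value equals $\int\Lnm\dt\Qn^{\otimes m}+\kld{\Qn}{\Pi}/\lambda_n=S_n(P_{\Qn})+\kld{\Qn}{\Pi}/\lambda_n+\beta_{n,m}(\Qn)$, using tensorisation of the Kullback--Leibler divergence and the definition \cref{eq:beta-defn}. So the left-hand side of \cref{eq:particle-free-energy-decomposition} at $\Qnm$ is at most $\beta_{n,m}(\Qn)$.

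The main obstacle is justifying the application to $\Qnm$. We need $\Qnm\ll\Qn^{\otimes m}$: this follows from $\Qnm\ll\Pi^{\otimes m}$ (clear from its Gibbs density \cref{eq:particle-gibbs-density}) together with $\Pi\ll\Qn$, which holds because $g_n$ is finite $\Pi$-a.s. We also need the relevant integrals to be finite so that the rearrangements are legitimate. If $\beta_{n,m}(\Qn)=\infty$ the bound is trivial; otherwise, finiteness of the objective at $\Qnm$ follows from the minimisation. I would also check integrability of $g_n$ under $\widehat Q_m$ and $Q_m$, or handle it by truncation, since this is needed to split the integrals.
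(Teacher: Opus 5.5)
Your proposal is correct and follows essentially the same route as the paper: change the reference measure from $\Pi^{\otimes m}$ to $Q_n^{\otimes m}$ using the fixed-point log-density, cancel $c_n$, add the score difference to recombine into $\int\mathcal{B}_n\,\mathrm{d}Q_m$, and then compare $Q_{n,m}$ against the product competitor $Q_n^{\otimes m}$, whose excess objective equals $\beta_{n,m}(Q_n)$. The paper reaches that last identity through the vanishing linear term of $\mathcal{B}_n$, while you use tensorisation of the Kullback--Leibler divergence and the definition of $\beta_{n,m}$, which is the same computation; your remarks on $Q_{n,m}\ll Q_n^{\otimes m}$ and on integrability of $g_n$ are, if anything, more careful than the paper, which leaves these points implicit.
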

\begin{proof}
    Write
    \begin{equation*}
        h_n(\theta)
        =
        \log\frac{\dt\Qn}{\dt\Pi}(\theta).
    \end{equation*}
    A change of reference measure gives
    \begin{multline*}
        \frac{\kld{Q_m}{\Pi^{\otimes m}}}{m\lambda_n}
        -
        \frac{\kld{\Qn}{\Pi}}{\lambda_n}
        \\
        =
        \frac{\kld{Q_m}{\Qn^{\otimes m}}}{m\lambda_n}
        +
        \frac{1}{\lambda_n}
        \int_{\Theta^m}
        \left\{
        \int_\Theta h_n(\theta)\dt\widehat Q_m(\theta_{1:m})(\theta)
        -
        \int_\Theta h_n(\theta)\dt\Qn(\theta)
        \right\}
        \dt Q_m(\theta_{1:m}).
    \end{multline*}
    By~\cref{eq:pro-fixed-point-log-density}, \(h_n=-\lambda_n g_n-c_n\).
    Since both \(\widehat Q_m(\theta_{1:m})\) and \(\Qn\) are probability measures, the constant cancels, and the preceding display becomes
    \begin{multline*}
        \frac{\kld{Q_m}{\Pi^{\otimes m}}}{m\lambda_n}
        -
        \frac{\kld{\Qn}{\Pi}}{\lambda_n}
        \\
        =
        \frac{\kld{Q_m}{\Qn^{\otimes m}}}{m\lambda_n}
        -
        \int_{\Theta^m}
        \int_\Theta g_n(\theta)
        \dt\{\widehat Q_m(\theta_{1:m})-\Qn\}(\theta)
        \dt Q_m(\theta_{1:m}).
    \end{multline*}
    By adding
    \begin{equation*}
        \int_{\Theta^m}
        \left\{S_n(P_{\widehat Q_m(\theta_{1:m})})-S_n(P_{\Qn})\right\}
        \dt Q_m(\theta_{1:m})
    \end{equation*}
    to both sides, we arrive at~\cref{eq:particle-free-energy-decomposition}.
    By the optimality of \(\Qnm\) in~\cref{eq:particle-objective}, we have that
    \begin{multline*}
        \int_{\Theta^m}\Lnm(\theta_{1:m})\dt\Qnm(\theta_{1:m})
        +
        \frac{\kld{\Qnm}{\Pi^{\otimes m}}}{m\lambda_n}
        \\
        \leq
        \int_{\Theta^m}\Lnm(\theta_{1:m})\dt\Qn^{\otimes m}(\theta_{1:m})
        +
        \frac{\kld{\Qn^{\otimes m}}{\Pi^{\otimes m}}}{m\lambda_n}.
    \end{multline*}
    Applying~\cref{eq:particle-free-energy-decomposition} to the product competitor and using the fact that
    \begin{equation*}
        \int_{\Theta^m}\widehat Q_m(\theta_{1:m})\dt\Qn^{\otimes m}(\theta_{1:m})
        =
        \Qn
    \end{equation*}
    gives
    \begin{IEEEeqnarray*}{rCl}
        &&
        \int_{\Theta^m}\Lnm(\theta_{1:m})\dt\Qn^{\otimes m}(\theta_{1:m})
        +
        \frac{\kld{\Qn^{\otimes m}}{\Pi^{\otimes m}}}{m\lambda_n}
        -
        S_n(P_{\Qn})
        -
        \frac{\kld{\Qn}{\Pi}}{\lambda_n}
        \\
        &=&
        \int_{\Theta^m}
        \mathcal{B}_n\{\widehat Q_m(\theta_{1:m}),\Qn\}
        \dt\Qn^{\otimes m}(\theta_{1:m})
        =
        \beta_{n,m}(\Qn),
    \end{IEEEeqnarray*}
    where the final equality follows because the linear term in~\cref{eq:pro-score-bregman} has expectation zero under \(\Qn^{\otimes m}\).
    Combining the last two displays with~\cref{eq:particle-free-energy-decomposition} evaluated at \(Q_m=\Qnm\) proves~\cref{eq:particle-free-energy-bound}.
    In particular,
    \begin{equation}\label{eq:particle-full-kl-bound}
        \kld{\Qnm}{\Qn^{\otimes m}}
        \leq
        m\lambda_n\beta_{n,m}(\Qn).
    \end{equation}
\end{proof}

\begin{corollary}
    Let \(\Qnm^{(1:k)}\) denote any \(k\)-particle marginal of the exchangeable measure \(\Qnm\).
    Then, under the conditions of~\Cref{prop:entropy-curvature}, whenever \(\beta_{n,m}(\Qn)<\infty\), for every \(1\leq k\leq m\),
    \begin{equation}\label{eq:particle-k-marginal-kl}
        \kldcurl{\Qnm^{(1:k)}}{\Qn^{\otimes k}}
        \leq
        \frac{m}{\lfloor m/k\rfloor}
        \lambda_n\beta_{n,m}(\Qn),
    \end{equation}
    and
    \begin{equation}\label{eq:particle-k-marginal-tv}
        \tvdcurl{\Qnm^{(1:k)}}{\Qn^{\otimes k}}
        \leq
        \sqrt{
        \frac{m\lambda_n\beta_{n,m}(\Qn)}
        {2\lfloor m/k\rfloor}
        }.
    \end{equation}
    In particular, if \(m\geq2k\), then
    \begin{equation}\label{eq:particle-k-marginal-simple}
        \kldcurl{\Qnm^{(1:k)}}{\Qn^{\otimes k}}
        \leq
        2k\lambda_n\beta_{n,m}(\Qn),
        \qquad
        \tvdcurl{\Qnm^{(1:k)}}{\Qn^{\otimes k}}
        \leq
        \sqrt{k\lambda_n\beta_{n,m}(\Qn)}.
    \end{equation}
    If in addition \(\beta_{n,m}(\Qn)\leq C_n/m\) for some finite \(C_n\), then
    \begin{equation}\label{eq:particle-k-marginal-simple-version-2}
        \kldcurl{\Qnm^{(1:k)}}{\Qn^{\otimes k}}
        \leq
        \frac{2k\lambda_nC_n}{m},
        \qquad
        \tvdcurl{\Qnm^{(1:k)}}{\Qn^{\otimes k}}
        \leq
        \sqrt{\frac{k\lambda_nC_n}{m}}.
    \end{equation}
\end{corollary}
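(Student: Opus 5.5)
The plan is to start from the full-measure bound \cref{eq:particle-full-kl-bound} of \Cref{prop:particle-free-energy-decomposition}, namely $\kld{\Qnm}{\Qn^{\otimes m}}\leq m\lambda_n\beta_{n,m}(\Qn)$, and transfer it to $k$-marginals using exchangeability (\Cref{lemma:exchangeable-marginals}) together with a block version of the subadditivity argument already used in the proof of \Cref{lemma:bound}.

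Set $L=\lfloor m/k\rfloor$ and partition the first $Lk$ coordinates into $L$ disjoint blocks $B_1,\ldots,B_L$, each of size $k$. The remaining $m-Lk$ coordinates form a leftover block $B_0$, possibly empty.

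The first step is a chain-rule inequality. The reference measure $\Qn^{\otimes m}$ is a product across blocks, so
\begin{equation*}
\kld{\Qnm}{\Qn^{\otimes m}} = \kldcurl{\Qnm}{\textstyle\bigotimes_{b}\Qnm^{(B_b)}} + \sum_{b=0}^{L}\kldcurl{\Qnm^{(B_b)}}{\Qn^{\otimes |B_b|}},
\end{equation*}
which is the same identity used in \cref{eq:kl-bound}, with blocks in place of single coordinates. The first term is a mutual information and hence non-negative. The $b=0$ term is also non-negative. Dropping both gives
\begin{equation*}
\sum_{b=1}^{L}\kldcurl{\Qnm^{(B_b)}}{\Qn^{\otimes k}}\leq \kld{\Qnm}{\Qn^{\otimes m}}.
\end{equation*}

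The second step uses exchangeability. Since $\Qnm$ is exchangeable, every block marginal $\Qnm^{(B_b)}$ equals the common $k$-marginal $\Qnm^{(1:k)}$. The sum above is therefore $L\,\kldcurl{\Qnm^{(1:k)}}{\Qn^{\otimes k}}$. Dividing by $L$ and applying \cref{eq:particle-full-kl-bound} yields \cref{eq:particle-k-marginal-kl}. One point needs care here: \Cref{lemma:exchangeable-marginals} is stated only for one-particle marginals. Its proof, however, shows $\Qnm=T_{\sigma\#}\Qnm$ for every permutation $\sigma$, and this gives equality of all $k$-marginals. I would note this explicitly. This is the only step I expect to require any care, together with checking the absolute continuity $\Qnm\ll\Qn^{\otimes m}$ needed to invoke \Cref{prop:particle-free-energy-decomposition}. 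That absolute continuity follows because $\Qnm\ll\Pi^{\otimes m}$ and $\Qn$ is mutually absolutely continuous with $\Pi$, as established in the proof of \Cref{prop:entropy-curvature}. Finiteness of $\beta_{n,m}(\Qn)$ keeps every quantity finite.

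The remaining claims are direct consequences.
\begin{itemize}
\item Pinsker's inequality, $\mathrm{TV}\leq\sqrt{\mathrm{KL}/2}$, gives \cref{eq:particle-k-marginal-tv}.
\item If $m\geq 2k$, then $\lfloor m/k\rfloor\geq m/k-1\geq m/(2k)$, so $m/\lfloor m/k\rfloor\leq 2k$. Substituting this gives \cref{eq:particle-k-marginal-simple}, where the total variation bound is $\sqrt{2k\lambda_n\beta_{n,m}(\Qn)/2}$.
\item Inserting $\beta_{n,m}(\Qn)\leq C_n/m$ into these bounds gives \cref{eq:particle-k-marginal-simple-version-2}.
\end{itemize}
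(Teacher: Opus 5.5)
Your proposal is correct and follows essentially the same route as the paper: start from \cref{eq:particle-full-kl-bound}, apply the block chain rule for relative entropy against the product reference and drop the non-negative cross term, use exchangeability to identify every block marginal with $\Qnm^{(1:k)}$, and finish with Pinsker's inequality and $m/\lfloor m/k\rfloor\le 2k$. The only difference is cosmetic. The paper first discards the leftover $m-k\lfloor m/k\rfloor$ coordinates via the data processing inequality, whereas you keep them as a block $B_0$ and drop its non-negative term. Your explicit checks of full permutation invariance of $\Qnm$ and of $\Qnm\ll\Qn^{\otimes m}$ fill in details the paper leaves implicit.
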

\begin{proof}
    Set \(b_{m,k}=\lfloor m/k\rfloor\).
    By the data processing inequality,
    \begin{equation*}
        \kld{\Qnm}{\Qn^{\otimes m}}
        \geq
        \kldcurl{\Qnm^{(1:kb_{m,k})}}{\Qn^{\otimes kb_{m,k}}}.
    \end{equation*}
    Partition the first \(kb_{m,k}\) coordinates into \(b_{m,k}\) disjoint blocks of size \(k\), and let \(R_j\) denote the marginal law of block \(j\).
    The chain rule for relative entropy gives
    \begin{equation}
        \kldcurl{\Qnm^{(1:kb_{m,k})}}{\Qn^{\otimes kb_{m,k}}}
        =
        \kldcurl{\Qnm^{(1:kb_{m,k})}}{\otimes_{j=1}^{b_{m,k}}R_j}
        +
        \sum_{j=1}^{b_{m,k}}\kld{R_j}{\Qn^{\otimes k}}
        \geq
        \sum_{j=1}^{b_{m,k}}\kld{R_j}{\Qn^{\otimes k}}.
    \end{equation}
    Exchangeability gives \(R_j=\Qnm^{(1:k)}\) for every \(j\), and therefore
    \begin{equation*}
        \kld{\Qnm}{\Qn^{\otimes m}}
        \geq
        b_{m,k}\kldcurl{\Qnm^{(1:k)}}{\Qn^{\otimes k}}.
    \end{equation*}
    Combining this display with~\cref{eq:particle-full-kl-bound} proves~\cref{eq:particle-k-marginal-kl}, and Pinsker's inequality gives~\cref{eq:particle-k-marginal-tv}.
    If \(m\geq2k\), then \(m/\lfloor m/k\rfloor\leq2k\), which yields~\cref{eq:particle-k-marginal-simple}.
\end{proof}
\subsection{Posterior concentration}\label{sec:posterior-concentration}
In order to translate the previous results on the predictive distribution back to the posterior itself, we assume some further continuity conditions on the predictive which resemble those of \citet[Assumption 6]{mclatchie_predictive_2025}.
\begin{assumption}\label{ass:concentration}
    \begin{enumerate*}[
        label=(\roman*), 
        ref=\Cref{ass:concentration}(\roman*)
    ]
        \item The map $\theta\mapsto P_\theta$ is continuous and identifiable, and there exists $Q^\star\in\mathcal{P}(\Theta)$ such that $Q^\star\ll\Pi$ and $\int P_\theta \dt Q^\star(\theta) = P_0$;\label{ass:concentration-i}
        \item there exists a constant $C<\infty$ and $\varepsilon>0$ such that, for any $Q,Q^\prime\in\mathcal{P}(\Theta)$, if $\mathsf{d}_2(P_{Q},P_{Q^\prime})\le \varepsilon$, then  
        $
        \mathsf{d}_1(Q,Q^\prime)\le C\mathsf{d}_2(P_{Q},P_{Q^\prime})
        $; and \label{ass:concentration-ii}
        \item for some $a\ge1$ and a constant $K>0$, $\mathsf{d}_2(P_Q,P_{Q^\prime})^a\le K d_{\mathcal{S}}(P_Q,P_{Q^\prime})$ when applying~\Cref{thm:predictive-concentration}, or $\mathsf{d}_2(P_Q,P_{Q^\prime})^a\le K d_{\alpha}(P_Q,P_{Q^\prime})$ when applying~\Cref{thm:predictive-concentration-kl}.
        \label{ass:concentration-iii}
    \end{enumerate*}
\end{assumption}
\ref{ass:concentration-i} ensures that there exists an optimal posterior distribution whose predictive recovers the data-generating process exactly.
\ref{ass:concentration-ii} represents a continuity assumption on the map $Q\mapsto P_Q$.
\ref{ass:concentration-iii} states that the relevant predictive divergence is at least as strong as $\mathsf{d}_2$, at least up to some fixed distortion, and allows us to control $\mathsf{d}_2$ through the relevant predictive bound.
\begin{theorem}\label{thm:posterior-concentration} 
    If the assumptions of either~\Cref{thm:predictive-concentration} or \Cref{thm:predictive-concentration-kl} hold as well as~\Cref{ass:concentration}, then for any sequence $M_n\rightarrow\infty$ such that $M_n\gamma_n^{1/a}=o(1)$,   
    \begin{equation*}
        P_0\left\{\mathsf{d}_1(\barQnm,Q^\star) 
        \le 
        M_n\gamma_n^{1/a}\right\}\ge1-o(1)
    \end{equation*}
    where $\gamma_n\asymp m^{-1}+ \max\left\{{\log(n)n^{-1/2}}, r_n \right\}$ in the case of~\Cref{thm:predictive-concentration}, and $\gamma_n \asymp n^{-1/2}$ in the case of~\Cref{thm:predictive-concentration-kl}.
\end{theorem}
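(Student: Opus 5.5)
The plan is to transfer the predictive bounds of \Cref{thm:predictive-concentration} or \Cref{thm:predictive-concentration-kl} to the parameter space through \Cref{ass:concentration}, with a Markov inequality supplying the probability statement. Write $D_n$ for the relevant predictive discrepancy: $D_n = d_{\mathcal{S}}(P_{\barQnm},P_0)$ in the first case and $D_n = d_\alpha(P_{\barQnm^{(\alpha)}};P_0)$ in the second.

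The first step uses \ref{ass:concentration-i}. It gives $P_{Q^\star}=P_0$, so the misspecification term vanishes in both theorems: $d_{\mathcal{S}}(P_{Q^\star},P_0)=0$ and $\kld{P_0}{P_{Q^\star}}=0$. The predictive bounds then become $\mathbb{E}_0 D_n\lesssim \gamma_n$.
\begin{itemize}
\item In the case of \Cref{thm:predictive-concentration}, $\gamma_n\asymp m^{-1}+\max\{\log(n)n^{-1/2},r_n\}$.
\item In the case of \Cref{thm:predictive-concentration-kl}, the bound is $\frac{\alpha}{1-\alpha}\frac{C}{m}+\frac{m\,\kld{Q^\star}{\Pi}}{n(1-\alpha)}$. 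Choosing $m\asymp n^{1/2}$, as suggested after that theorem, makes this of order $n^{-1/2}=\gamma_n$.
\end{itemize}
Here I will need to be careful that the $Q^\star$ of \ref{ass:concentration-i} is an admissible optimiser in \Cref{ass:concentration_existence} or \Cref{ass:log_score}. This holds because it attains divergence zero, provided the moment conditions are imposed for this $Q^\star$.

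The second step is Markov's inequality. For any $M_n\to\infty$,
\begin{equation*}
P_0\{D_n> M_n^a\gamma_n\}\le \mathbb{E}_0 D_n/(M_n^a\gamma_n)\lesssim M_n^{-a}=o(1).
\end{equation*}
On the complementary event, \ref{ass:concentration-iii} applied with $Q=\barQnm$ and $Q'=Q^\star$ gives
\begin{equation*}
\mathsf{d}_2(P_{\barQnm},P_{Q^\star})\le (K D_n)^{1/a}\le K^{1/a}M_n\gamma_n^{1/a}.
\end{equation*}
I am reading the divergence there as $d_{\mathcal{S}}(P_{\barQnm},P_0)$ or $d_\alpha(P_{\barQnm};P_0)$, which is legitimate since $P_0=P_{Q^\star}$.

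The third step uses the rate condition. Since $M_n\gamma_n^{1/a}=o(1)$, for $n$ large enough this quantity is at most $\varepsilon$. The local inverse-continuity \ref{ass:concentration-ii} then yields
\begin{equation*}
\mathsf{d}_1(\barQnm,Q^\star)\le CK^{1/a}M_n\gamma_n^{1/a}.
\end{equation*}
The constant $CK^{1/a}$ can be absorbed by replacing $M_n$ with $M_n/(CK^{1/a})$, which still diverges, or equivalently by running the argument with a slower divergent sequence. This gives the stated probability bound $1-o(1)$.

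The main obstacle is the mismatch in the mode of convergence. The paper's ``$\lesssim$'' for random quantities is almost-sure, while the theorems bound an expectation, so I will use the expectation bounds directly with Markov rather than any almost-sure statement. A second, minor point is the $\alpha$-case: there the dependence on $m$ must be handled by the choice $m\asymp n^{1/2}$, and I will state this explicitly.
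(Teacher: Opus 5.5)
Your proof is correct and follows the paper's argument. Condition (i) of \Cref{ass:concentration} removes the misspecification term, Markov's inequality gives the probability statement, conditions (iii) and then (ii) carry the bound over to $\mathsf{d}_1$, and the constant is absorbed into $M_n$. The differences are only in ordering and explicitness: you apply Markov directly to the predictive divergence at level $M_n^a\gamma_n$ and then use (iii) pointwise on the good event, whereas the paper first bounds $\mathbb{E}_0\,\mathsf{d}_2$ via (iii) and Jensen's inequality (concavity of $x\mapsto x^{1/a}$) before applying Markov to $\mathsf{d}_2$; you also state explicitly the choice $m\asymp n^{1/2}$ needed for $\gamma_n\asymp n^{-1/2}$ in the R\'enyi case, which the paper leaves implicit.
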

\begin{proof}
    Begin by noting that by~\ref{ass:concentration-i} there exists $Q^\star\in\mathcal{P}(\Theta)$ such that $P_{Q^\star} = P_0$, so that the first term in~\Cref{thm:predictive-concentration,thm:predictive-concentration-kl} is zero.
    Then, by~\ref{ass:concentration-iii} and Jensen's inequality, either
    \begin{IEEEeqnarray}{rCl}
        \Ecurl[0]{\mathsf{d}_2(P_{\barQnm},\,P_{Q^\star})} 
        &\leq&
        K^{1/a}\Ecurl[0]{d_{\mathcal{S}}(P_{\barQnm},\,P_{Q^\star})^{1/a}} \nonumber \\
        &\leq&
        K^{1/a}\left[\Ecurl[0]{d_{\mathcal{S}}(P_{\barQnm},\,P_{Q^\star})}\right]^{1/a}, \label{eq:posterior-jensen}
    \end{IEEEeqnarray}
    or
    \begin{IEEEeqnarray*}{rCl}
        \Esqr[0]{\mathsf{d}_2\{P_{\barQnm^{(\alpha)}},\,P_{Q^\star}\}}
        &\leq&
        K^{1/a}\Esqr[0]{d_{\alpha}\{P_{\barQnm^{(\alpha)}},\,P_{Q^\star}\}^{1/a}} \\
        &\leq&
        K^{1/a}\left(\Esqr[0]{d_{\alpha}\{P_{\barQnm^{(\alpha)}},\,P_{Q^\star}\}}\right)^{1/a},
    \end{IEEEeqnarray*}
    where the second inequality in each line follows from the concavity of the map $x\mapsto x^{1 / a}$ for $a \ge 1$.
    For simplicity, we shall henceforth just argue in the first case; all arguments apply verbatim to the second.
    For a sequence $M_n\rightarrow\infty$ such that $M_n\gamma_n^{1/a}=o(1)$, by Markov's inequality
    \begin{flalign*}
    P_0\left\{\mathsf{d}_2(P_{\barQnm},\,P_{Q^\star})>M_n\gamma_n^{1/a}\right\}
    &\le \frac{\Ecurl[0]{\mathsf{d}_2(P_{\barQnm},\,P_{Q^\star})}}{M_n\gamma_n^{1/a}}\\
    &\le \frac{K^{1/a}\left[\Ecurl[0]{d_{\mathcal{S}}(P_{\barQnm},\,P_{Q^\star})}\right]^{1 / a}}{M_n\gamma_n^{1/a}} \lesssim \frac{1}{M_n},
    \end{flalign*}
    where the second inequality follows from~\cref{eq:posterior-jensen}, and the final inequality from~\Cref{thm:predictive-concentration,thm:predictive-concentration-kl}. 
    Since $M_n\rightarrow\infty$, taking the complementary event yields 
    \begin{equation*}
        P_0\left\{\mathsf{d}_2(P_{\barQnm},\,P_{Q^\star})\le M_n\gamma_n^{1/a}\right\}\ge 1-o(1).
    \end{equation*}
    Since $M_n\gamma_n^{1/a}=o(1)$, for all $n$ sufficiently large we have $M_n\gamma_n^{1/a}\leq\varepsilon$, where $\varepsilon$ is the constant in~\ref{ass:concentration-ii}.
    Then, by~\ref{ass:concentration-ii},
    \begin{equation*}
        P_0\{\mathsf{d}_1({\barQnm},\,{Q^\star})\leq CM_n\gamma_n^{1/a}\}
        \geq 
        P_0\{\mathsf{d}_2(P_{\barQnm},\,P_{Q^\star})\leq M_n\gamma_n^{1/a}\}
        \geq 1-o(1).
    \end{equation*}
    Since $M_n$ is arbitrary up to a fixed multiplicative constant, the stated result follows.
\end{proof}
This now states that the common marginal of the finite-particle \ac{pro} posterior converges towards the predictively-optimal \textit{mixture} distribution  at the predictive rate, raised to the power $1 / a$. 
The exponent $a \geq 1$ measures how much weaker the discrepancy $\mathsf{d}_2$ on parameters is than the score divergence $d_{\mathcal{S}}$ that drives the predictive bound:
$a = 1$ transfers the predictive rate verbatim, while $a > 1$ slows it. 
As noted by \citet{mclatchie_predictively_2025}, $a = 2$ in the case of general kernel and log scoring rules, and this distortion is an artefact of the proof technique, not of the posterior itself.
\subsection{Exponential family parallel updates}\label{sec:exponential-family-updates}
For an exponential family model $P_\theta$ equipped with a conjugate prior from which we can draw samples, the particle conditionals in~\cref{eq:replica-component-conditional} are also closed-form densities. 
Parametrise the model density by its natural parameter $\eta$, sufficient statistic $T(x)$, carrier density $h(x)$, and log partition function $A(\eta)$, and suppose that
\begin{equation*}
    p_\eta(x)
    =
    h(x)\exp\left\{\eta\cdot T(x)-A(\eta)\right\},
    \quad
    \dt\Pi(\eta)
    \propto
    \exp\left\{\eta\cdot T_0-n_0A(\eta)\right\},
\end{equation*}
where $T_0$ and $n_0$ characterise the conjugate prior.
Writing $N_j=\sum_{i=1}^n c_{i,j}$, the conditional distribution of the $j$-th particle is then the same conjugate family with updated hyper-parameters
\begin{equation*}
    T_j
    =
    T_0+\sum_{i=1}^n c_{ij}T(x_i),
    \quad
    n_j
    =
    n_0+N_j.
\end{equation*}
Thus any such conjugate exponential family model yields exact particle updates that remain conditionally independent and parallelisable.
\section{Existence of predictively oriented posteriors and their finite-particle approximations}\label{sec:existence-uniqueness-app}
\subsection{Implicit Gibbs form of the predictively oriented posterior}
We begin by presenting in full the conditions necessary for the \ac{pro} posterior objective of~\cref{eq:pro-posterior-defn} to admit a unique minimiser.
\begin{lemma}\label{lemma:existence-uniqueness-full-pro}
    Suppose that Assumption~\ref{ass:convex} holds, that the score is continuous and bounded from below, and that the first variation of the predictive score with respect to the posterior exists.
    Assume further that the prior $\Pi$ admits a density with respect to the Lebesgue measure $\dt\Pi(\theta)\propto \exp\{-U(\theta)\}$ for a smooth function $U:\Theta\to\mathbb{R}$ such that:
    \begin{enumerate}[(i),nosep]
        \item there exist constants $c > 0$ and $c^\prime \in \mathbb{R}$ such that for all $\theta\in\Theta$,
        $
            \nabla_\theta U(\theta)\cdot\theta \geq c\lVert\theta\rVert^2 + c^\prime,
        $
        and,\label{ass:prior-coercivity}
        \item $\nabla_\theta U$ is Lipschitz continuous.\label{ass:prior-lipz}
    \end{enumerate}
    Then there exists a unique minimiser to~\cref{eq:pro-posterior-defn} characterised by the implicit equation
    \begin{equation}\label{eq:pro-self-gibbs}
        q_n
        (\theta)\propto 
        \exp\left\{\left. - \frac{\lambda_n}{n}\sum_{i=1}^n \frac{\delta S(P_Q, x_i)}{\delta Q}\right|_{Q = Q_n}(\theta) \right\} 
        \pi 
        (\theta),
    \end{equation}
    where $\{\delta S(P_Q, x)/\delta Q\}|_{Q = Q_n}$ denotes the functional derivative of $S(P_Q, x)$ with respect to $Q$ evaluated at $Q = {Q_n}$, which we assume exists.
\end{lemma}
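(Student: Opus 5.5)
The plan is to treat $\mathcal{F}_n(Q)=S_n(P_Q)+\kld{Q}{\Pi}/\lambda_n$ with the direct method of the calculus of variations. Uniqueness will follow from strict convexity. The Gibbs characterisation \cref{eq:pro-self-gibbs} will come from a first-order optimality argument against a linearised objective. The role of the conditions on $U$ is to make $\Pi$ a proper, light-tailed reference measure: coercivity \ref{ass:prior-coercivity} gives Gaussian-type tails and finite second moments. Together with \ref{ass:prior-lipz}, this should let every normalising constant and first-variation integral below be finite, and it matches the standard setting of \citet{hu_mean-field_2021} for mean-field Langevin dynamics.

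\emph{Existence.} Since $S$ is bounded below, $\mathcal{F}_n$ is bounded below, so a minimising sequence $\{Q_k\}$ has $\sup_k\kld{Q_k}{\Pi}<\infty$. Sublevel sets of $\kld{\cdot}{\Pi}$ are weakly compact (tight and uniformly integrable with respect to $\Pi$), so a subsequence converges weakly to some $Q_n$. The relative entropy is weakly lower semicontinuous. The map $Q\mapsto P_Q$ is linear and weakly continuous when $\theta\mapsto P_\theta$ is continuous. Continuity of the score, together with its lower bound and Fatou's lemma, then gives lower semicontinuity of $Q\mapsto S_n(P_Q)$. Hence $Q_n$ attains the infimum. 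Showing that $Q\mapsto S_n(P_Q)$ is lower semicontinuous along weakly convergent sequences is the step most dependent on the precise meaning of ``continuous score'', and it may need the moment control supplied by coercivity.

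\emph{Uniqueness.} Under \Cref{ass:convex}, $P\mapsto S_n(P)$ is convex and $Q\mapsto P_Q$ is linear, so $Q\mapsto S_n(P_Q)$ is convex. The map $Q\mapsto\kld{Q}{\Pi}$ is strictly convex on its finite domain. Therefore $\mathcal{F}_n$ is strictly convex, and two distinct minimisers would give a strictly smaller value at their midpoint, a contradiction.

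\emph{Gibbs characterisation.} Let $g_n$ be the averaged first variation at $Q_n$, as in \Cref{prop:entropy-curvature}, and define the candidate Gibbs measure $\dt G_n\propto\exp\{-\lambda_n g_n\}\dt\Pi$. Normalisability needs $g_n$ bounded below or of controlled growth, which I would obtain from the lower bound on $S$ and the tails of $\Pi$. By the Donsker--Varadhan (Gibbs) variational principle, $G_n$ is the unique minimiser of the linearised functional
\[
\mathcal{L}_n(Q)=\int g_n\,\dt Q+\kld{Q}{\Pi}/\lambda_n .
\]
I will argue by contradiction that $Q_n=G_n$. Suppose $\mathcal{L}_n(G_n)<\mathcal{L}_n(Q_n)$, and set $Q_\varepsilon=(1-\varepsilon)Q_n+\varepsilon G_n$. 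The definition of the first variation gives $S_n(P_{Q_\varepsilon})=S_n(P_{Q_n})+\varepsilon\int g_n\,\dt(G_n-Q_n)+o(\varepsilon)$. Convexity of the relative entropy gives $\kld{Q_\varepsilon}{\Pi}\le(1-\varepsilon)\kld{Q_n}{\Pi}+\varepsilon\kld{G_n}{\Pi}$. Combining the two,
\[
\mathcal{F}_n(Q_\varepsilon)\le\mathcal{F}_n(Q_n)+\varepsilon\{\mathcal{L}_n(G_n)-\mathcal{L}_n(Q_n)\}+o(\varepsilon),
\]
which is strictly below $\mathcal{F}_n(Q_n)$ for small $\varepsilon$. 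This contradicts minimality. Hence $Q_n$ minimises $\mathcal{L}_n$, and uniqueness in the Gibbs variational principle yields $Q_n=G_n$, which is exactly \cref{eq:pro-self-gibbs}.

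The main obstacle is this last step: justifying the first-order expansion of $S_n(P_{Q_\varepsilon})$ with an $o(\varepsilon)$ remainder, and checking that $\int g_n\,\dt G_n$ and $\int g_n\,\dt Q_n$ are finite, which is needed to exchange differentiation and integration. I would first handle this by using convexity to bound the difference quotient monotonically in $\varepsilon$, and then apply monotone convergence. The assumed existence of the first variation and the tail control coming from $U$ are what I expect to close this gap.
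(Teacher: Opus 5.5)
Your outline is sound, but it takes a different route from the paper. The paper proves neither existence nor uniqueness directly. After observing, as you do, that $Q\mapsto S_n(P_Q)$ is convex, it cites Proposition~2.5 of \citet{hu_mean-field_2021}, whose hypotheses on $U$ are exactly (i)--(ii). It then obtains \cref{eq:pro-self-gibbs} by a formal Euler--Lagrange step: it adds the first variation $1+\log(\mathrm{d}Q/\mathrm{d}\Pi)$ of the entropy to $(\lambda_n/n)\sum_i\delta S(P_Q,x_i)/\delta Q$, sets the sum to zero and exponentiates.

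You instead reprove the content of the cited result, using weak compactness of relative-entropy sublevel sets, lower semicontinuity and strict convexity. You also replace ``set the derivative to zero'' by a one-sided perturbation towards the candidate Gibbs measure, combined with uniqueness in the Donsker--Varadhan principle. Your route gives an actual proof of the fixed-point characterisation. That includes the equivalence $Q_n\sim\Pi$ used later in \Cref{prop:entropy-curvature}, which a stationarity computation takes for granted. The paper's route is short, but it silently imports the regularity built into the $C^1$ hypothesis of the cited result, namely a bounded, continuous first variation. That same regularity closes every loose end you flag:
\begin{itemize}
\item weak continuity of $Q\mapsto S_n(P_Q)$;
\item finiteness of $\int g_n\,\mathrm{d}Q_n$ and $\int g_n\,\mathrm{d}G_n$;
\item normalisability of $G_n$;
\item the $o(\varepsilon)$ remainder, by dominated convergence.
\end{itemize}

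One correction concerns normalisability. A lower bound on $S$ does not control $g_n$ from below. For $S(P,x)=(x-\mathbb{E}_PY)^2$ with $\mathbb{E}_{P_\theta}Y=\theta$, which is non-negative and convex in $P$, $g_n$ is linear in $\theta$ with non-zero slope unless $\int\theta\,\mathrm{d}Q_n=n^{-1}\sum_i x_i$. So you should either assume a bounded first variation, as the cited result does, or obtain normalisability a posteriori. For the latter, run your contradiction argument against the truncations $G_n^K\propto e^{-\lambda_n g_n}\mathbf{1}\{g_n\ge -K\}\,\mathrm{d}\Pi$, with normaliser $Z_K$, for which $\mathcal{L}_n(G_n^K)=-\lambda_n^{-1}\log Z_K$. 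This forces $\mathcal{L}_n(Q_n)\le-\lambda_n^{-1}\log Z_K$ for every $K$. Since $\mathcal{L}_n(Q_n)$ is finite, monotone convergence gives $Z_K\uparrow Z<\infty$, and the rest of your argument goes through unchanged.
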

\begin{proof}
    We have by Assumption~\ref{ass:convex} that the score $S(\cdot,\,x)$ is convex in its first argument.
    In combination with the fact that the mapping $Q\mapsto P_Q$ is linear, we have the mapping $Q\mapsto S(P_Q,\,x)$ is convex.
    The existence and uniqueness of a minimiser is then provided under the same conditions in Proposition 2.5 of \citet{hu_mean-field_2021}, and we point the reader to their Section 4 for a proof. 
    
    To characterise the form of this minimiser, we take the functional derivative of the objective of~\cref{eq:pro-posterior-defn} with respect to $Q$, set it equal to zero, and solve for $Q$.
    We begin by noting that the functional derivative of the Kullback--Leibler regulariser evaluated at $\theta$ is
    \begin{equation}\label{eq:func-deriv-regulariser}
        1 + \log\left\{\frac{\dt Q}{\dt\Pi}(\theta)\right\},
    \end{equation}
    while that of the score term is
    \begin{equation}\label{eq:func-deriv-score-wrt-Q} 
        \frac{\lambda_n}{n} \sum_{i=1}^n \frac{\delta S(P_Q, x_i)}{\delta Q}(\theta).
    \end{equation}
    By the linearity of functional derivatives, setting the sum of~\cref{eq:func-deriv-regulariser,eq:func-deriv-score-wrt-Q} equal to zero and removing constant terms, we find that $Q_n$ must satisfy
    \begin{equation*}
        0 = \frac{\lambda_n}{n} \sum_{i=1}^n \left.\frac{\delta S(P_Q, x_i)}{\delta Q}\right|_{Q = Q_n}(\theta) + \log\left\{\frac{\dt Q_n}{\dt\Pi}(\theta)\right\}
    \end{equation*}
    whence rearranging yields the stated result.
\end{proof}
The conditions of~\ref{ass:prior-coercivity} and~\ref{ass:prior-lipz} are easily satisfied, for instance, by choosing the prior to be a multivariate standard Gaussian \citep{hu_mean-field_2021}.
And the condition that the score be bounded from below is immediate, for example, with kernel scoring rules under bounded kernels \citep[see also][Theorem 1]{shen_prediction-centric_2025}.
In the case of the log score, we refer the reader to \citet[Proposition 1]{liu_detecting_2025} who establish the necessary conditions.
\subsection{A demonstrative example}
Below we provide an example which shows that the finite-particle \ac{pro} posterior can exist, even when the limiting \ac{pro} posterior does not.
Let $\Theta=\mathbb R$, $\Pi=\mathcal{N}(0,1^2)$, and suppose there is a single observation $x_1=0$.
We then consider the model $P_\theta = \normal(0,e^{-2\theta})$
with, in particular, 
\begin{equation*}
    p_\theta(0)
    =
    \frac{1}{\sqrt{2\pi e^{-2\theta}}}
    =
    \frac{e^\theta}{\sqrt{2\pi}}.
\end{equation*}
Then the prior predictive density at the observed point is finite and positive:
\begin{equation*}
    \int p_\theta(0)\dt\Pi(\theta)
    =
    \frac{1}{\sqrt{2\pi}} \int e^\theta \dt \Pi(\theta)%
    =
    \frac{e^{1/2}}{\sqrt{2\pi}}.
\end{equation*}
For a learning rate $\lambda>0$, the ideal log-score objective is
\begin{equation}\label{eq:eu-counterexample-mean-field-app}
    -\log\left\{\int p_\theta(0)\dt Q(\theta)\right\}
    +
    \frac{\kld{Q}{\Pi}}{\lambda}.
\end{equation}
We presently show that this objective is unbounded from below, leading to an ill-specified programme with no minimiser.

We begin by showing that 
the log predictive term
is unbounded below:
for $R>0$, let $A_R=[R,R+1]$ and $\varepsilon_R=e^{-R/2}$, and define the measure
\begin{equation*}
    Q_R
    =
    (1-\varepsilon_R)\Pi
    +
    \varepsilon_R\Pi(\cdot\mid A_R).
\end{equation*}
Since for all $\theta\in A_R$ we have that $\theta\geq R$, it follows that
\begin{IEEEeqnarray*}{rCl}
    \int_{\mathbb R}p_\theta(0)\dt Q_R(\theta)
    &\geq&
    \varepsilon_R
    \int_{A_R}p_\theta(0)\dt\Pi(\theta\mid A_R) \\
    &=&
    \frac{\varepsilon_R}{\sqrt{2\pi}}
    \int_{A_R}e^\theta\dt\Pi(\theta\mid A_R) \\
    &\geq&
    \frac{\varepsilon_R e^R}{\sqrt{2\pi}} \\
    &=&
    \frac{e^{R/2}}{\sqrt{2\pi}}.
\end{IEEEeqnarray*}
Taking the negative of the logarithm of both sides, 
\begin{equation}\label{eq:eu-counterexample-score-bound-app}
    -\log\left\{\int_{\mathbb R}p_\theta(0)\dt Q_R(\theta)\right\}
    \leq
    -\frac{R}{2}+\frac12\log(2\pi).
\end{equation}
As $R\to\infty$ the left-hand side tends to $-\infty$ linearly.

We now bound the entropic regularisation term.
The conditional law $\Pi(\cdot\mid A_R)$ is absolutely continuous with respect to $\Pi$. Moreover, 
\begin{equation*}
    \kldcurl{\Pi(\cdot\mid A_R)}{\Pi}
    =
    -\log\{\Pi(A_R)\}
\end{equation*}
so that by the joint convexity of the Kullback--Leibler divergence,
\begin{equation}\label{eq:eu-counterexample-entropy-mix-app}
    \kld{Q_R}{\Pi}
    \leq
    -\varepsilon_R
    \log\{\Pi(A_R)\}.
\end{equation}
For the standard Gaussian prior and $R\geq0$,
\begin{equation*}
    \Pi(A_R)
    =
    \frac{1}{\sqrt{2\pi}}\int_R^{R+1}e^{-\theta^2/2}\dt \theta
    \geq
    \frac{1}{\sqrt{2\pi}}e^{-(R+1)^2/2},
\end{equation*}
because $e^{-\theta^2/2}\geq e^{-(R+1)^2/2}$ for all $\theta\in[R,R+1]$.
It follows then that
\begin{equation*}
    -\log\{\Pi(A_R)\}
    \leq
    \frac{(R+1)^2}{2}+\frac12\log(2\pi)
    =
    O(R^2),
\end{equation*}
so that in turn, due to~\cref{eq:eu-counterexample-entropy-mix-app},
\begin{equation}\label{eq:eu-counterexample-entropy-vanishes-app}
    \kld{Q_R}{\Pi}
    \leq
    e^{-R/2}O(R^2)
    \to0.
\end{equation}
Combining~\cref{eq:eu-counterexample-score-bound-app,eq:eu-counterexample-entropy-vanishes-app},
\begin{equation*}
    -\log\left\{\int p_\theta(0)\dt Q_R(\theta)\right\}
    +
    \frac{\kld{Q_R}{\Pi}}{\lambda}
    \leq
    -\frac{R}{2}+O(1)+\frac{e^{-R/2}O(R^2)}{\lambda}
    \to
    -\infty,
\end{equation*}
in the limit $R\to\infty$.
Thus, the ideal \ac{pro} objective of~\cref{eq:pro-posterior-defn} is unbounded below, and has no minimiser.

We now turn our attention to the finite-particle objective for the same model.
Fix $m\geq1$ and define
\begin{equation*}
    \mathsf{L}_m(\theta_{1:m})
    =
    -\log\left\{
        \frac{1}{m}\sum_{j=1}^{m}p_{\theta_j}(0)
    \right\}
    =
    \frac12\log(2\pi)
    -
    \log\left(\frac{1}{m}\sum_{j=1}^{m}e^{\theta_j}\right).
\end{equation*}
The finite-particle variational objective is
\begin{equation*}
    \inf_{Q_m\in\mathcal P(\mathbb R^m)}
    \left\{
        \int\mathsf{L}_m(\theta_{1:m})\dt Q_m(\theta_{1:m})
        +
        \frac{\kld{Q_m}{\Pi^{\otimes m}}}{m\lambda}
    \right\},
\end{equation*}
and the normalising constant of its minimiser is
\begin{IEEEeqnarray*}{rCl}
    Z_{m,\lambda}
    &=&
    \int
    \exp\{-m\lambda\mathsf{L}_m(\theta_{1:m})\}
    \dt\Pi^{\otimes m}(\theta_{1:m}) \\
    &=&(2\pi)^{-(m\lambda)/2} \int \left(\frac{1}{m}\sum_{j=1}^{m}e^{\theta_j}\right)^{m\lambda} \dt \Pi^{\otimes m}(\theta_{1:m}),
\end{IEEEeqnarray*}
where the last line is finite for every fixed $m$ and $\lambda$.
Indeed, for all $a_1,\ldots,a_m\geq0$ and every $m\lambda>0$, there is a finite constant $C_{m,\lambda}$ such that
\begin{equation*}
    \left(\frac{1}{m}\sum_{j=1}^{m}a_j\right)^{m\lambda}
    \leq
    C_{m,\lambda}\sum_{j=1}^{m}a_j^{m\lambda}.
\end{equation*}
Applying this inequality with $a_j=e^{\theta_j}$ gives
\begin{IEEEeqnarray*}{rCl}
    \int \left(\frac{1}{m}\sum_{j=1}^{m}e^{\theta_j}\right)^{m\lambda} \dt \Pi^{\otimes m}(\theta_{1:m}) 
    &\leq& 
    C_{m,\lambda}\sum_{j=1}^{m}\int e^{m\lambda\theta_j} \dt \Pi^{\otimes m}(\theta_{1:m}) \\
    &=&
    C_{m,\lambda}m e^{(m\lambda)^2/2}
\end{IEEEeqnarray*}
which is in turn finite.
Since the integrand is strictly positive and finite, so too is the normalising constant, $0<Z_{m,\lambda}<\infty$, and thus the particle \ac{pro} posterior exists.
\section{Comparison with propagation of chaos assumptions}\label{sec:poc-supp}
\subsection{A demonstrative result}
Suppose that 
the prior admits a density 
\(
\dt\Pi(\theta)
    \propto
    \exp\{-U(\theta)\}
\)
with respect to Lebesgue measure.
Then the assumptions required for the propagation of chaos results are as follows.
\begin{assumption} \label{ass:poc}
    Let $\mathcal{P}_{2}(\Theta)$ denote the probability measures on $\Theta$ with finite second moment and a Lebesgue density.
    The following conditions hold:
    \begin{enumerate}[
        nosep,
        label=(\roman*), 
        ref=\Cref{ass:poc}(\roman*)
    ]
        \label{ass:poc-convex}
        \item the map $Q\mapsto S_n(P_Q)$ is Wasserstein smooth:
        there exists $L_n<\infty$ such that, for all $\theta,\theta^\prime\in\Theta$ and all $Q,Q^\prime\in\mathcal{P}_{2}(\Theta)$,
        \begin{equation*}
            \left\|
            \nabla_{\mathrm{W}_2}S_n(P_Q)(\theta)
            -
            \nabla_{\mathrm{W}_2}S_n(P_{Q^\prime})(\theta^\prime)
            \right\|
            \leq
            L_n
            \left\{
                \|\theta-\theta'\|
                +
                d_{\mathrm{W}_1}(Q,Q^\prime)
            \right\}.
        \end{equation*}
        Here $\nabla_{\mathrm{W}_2}$ denotes the Wasserstein-2 gradient, $d_{\mathrm{W}_1}$ is the Wasserstein-1 distance, and $\|\cdot\|$ is the Euclidean norm. 
        \label{ass:poc-smooth}
        \item The prior-tilted proximal Gibbs measures 
        satisfy a log-Sobolev inequality with a constant uniform in the proximal centre.
        More precisely, for each $\mu\in\mathcal{P}_{2}(\Theta)$ define
        \begin{equation*}%
            { q_n^\mu }
            (\theta)
            \propto
            \exp\left\{
                -\lambda_n
                \frac{\delta S_n(P_Q)}{\delta Q}\bigg|_{Q = \mu}(\theta)
            \right\}
            { \pi }
            (\theta).
        \end{equation*}
        There exists $\bar C_{\mathrm{LSI},n}<\infty$ such that
        \(
            \max\left\{C_{\mathrm{LSI}}(\Qn)
            ,
            \sup_{\mu\in\mathcal{P}_{2}(\Theta)}
            C_{\mathrm{LSI}}(Q_n^\mu)
            \right\}
            \leq
            \bar C_{\mathrm{LSI},n}.
        \)
        \label{ass:poc-lsi}
    \end{enumerate}
\end{assumption}
Under these conditions one obtains the following posterior-level comparison for the finite-particle target, similar to our~\Cref{lemma:kl-closeness}, by the propagation of chaos route.
\begin{proposition} \label{prop:poc-reference-route}
    Suppose that~\Cref{ass:convex,ass:poc} holds. Then, for all $m$ sufficiently large,
    \begin{equation*}
        \kld{\bar Q_{n,m}}{\Qn}
        \lesssim
        \frac{
            L_n\bar C_{\mathrm{LSI},n}d\lambda_n
        }{
            m
        }.
    \end{equation*}
\end{proposition}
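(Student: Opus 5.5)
The plan is to follow the standard uniform-in-time propagation of chaos argument for mean-field Langevin dynamics, in the spirit of \citet{chen2025uniform} and \citet{kook2024sampling}, and then convert the resulting $N$-particle estimate into a bound on the one-particle marginal. The object to control is the relative entropy $\kld{\Qnm}{\Qn^{\otimes m}}$. Once this is bounded by $O(L_n\bar C_{\mathrm{LSI},n}d\lambda_n)$, uniformly in $m$, the result follows as in the proof of \Cref{lemma:bound}. There, convexity of the Kullback--Leibler divergence in its first argument together with the chain rule (subadditivity) gives
\begin{equation*}
m\,\kld{\barQnm}{\Qn}\le\sum_{j=1}^m \kld{\Qnm^{(j)}}{\Qn}\le \kld{\Qnm}{\Qn^{\otimes m}}.
\end{equation*}
Dividing by $m$ then yields the stated $1/m$ rate.

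To bound $\kld{\Qnm}{\Qn^{\otimes m}}$, I would use the free-energy identity \cref{eq:particle-free-energy-decomposition} of \Cref{prop:particle-free-energy-decomposition}. Working instead through the stationary measures of the dynamics would lead to the same quantity. The identity reduces the problem to controlling the $m\lambda_n$-scaled particle free-energy gap,
\begin{equation*}
m\lambda_n\Bigl[\int \Lnm\,\dt\Qnm+\tfrac{\kld{\Qnm}{\Pi^{\otimes m}}}{m\lambda_n}-S_n(P_{\Qn})-\tfrac{\kld{\Qn}{\Pi}}{\lambda_n}\Bigr].
\end{equation*}
Following Chen--Ren--Wang, this gap is bounded from the entropy-production side rather than by plugging in a product competitor, which would only recover \Cref{lemma:kl-closeness}. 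The steps are as follows.

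First, let $q_n^{\widehat Q_m(\theta_{1:m})}$ denote the proximal Gibbs measure centred at the empirical measure, as in \ref{ass:poc-lsi}. Using convexity of $Q\mapsto S_n(P_Q)$, the gap is dominated by $\int \kld{\widehat Q_m}{q_n^{\widehat Q_m}}$-type terms. These are handled through the ``entropy sandwich'' of \citet{chizat_mean-field_2022}.

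Second, apply the uniform log-Sobolev inequality \ref{ass:poc-lsi}, conditionally on the other $m-1$ particles, to the conditional law of each particle. This bounds those terms by $\bar C_{\mathrm{LSI},n}$ times a relative Fisher information between $\Qnm$ and the tensorised proximal measure.

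Third, at stationarity, $\Qnm$ is the exact Gibbs measure for the potential $m\lambda_n\Lnm-\sum_j\log\pi$. Its score therefore coincides with the particle drift, so the Fisher information reduces to
\begin{equation*}
\lambda_n^2\,\mathbb E_{\Qnm}\bigl\|\nabla_{\mathrm W_2}S_n(P_{\widehat Q_m})(\theta_j)-\nabla_{\mathrm W_2}S_n(P_{\widehat Q_m^{-j}})(\theta_j)\bigr\|^2,
\end{equation*}
that is, the effect of replacing $\widehat Q_m$ by the leave-one-out measure.

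Fourth, by the Wasserstein smoothness in \ref{ass:poc-smooth}, this difference is at most $L_n\, d_{\mathrm W_1}(\widehat Q_m,\widehat Q_m^{-j})\lesssim L_n\|\theta_j-\bar\theta\|/m$. Its second moment is therefore $O(L_n^2 d/m^2)$, provided second moments of $\Qnm$ are $O(d)$. Such a moment bound should follow from the log-Sobolev inequality via Herbst's argument, or from the prior coercivity in \Cref{lemma:existence-uniqueness-full-pro}. Summing over the $m$ particles and absorbing one factor of $L_n\lambda_n$ through the smoothness-entropy trade-off gives a total of order $L_n\bar C_{\mathrm{LSI},n}d\lambda_n$. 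This is uniform in $m$ once $m$ is large enough that a small multiple of the relative entropy can be absorbed into the left-hand side, which is where the condition ``$m$ sufficiently large'' enters.

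I expect the main obstacle to be the third and fourth steps: getting exactly one power each of $L_n$ and $\lambda_n$, with no $L_n^2$, and closing the absorption argument. The naive Fisher-information bound produces $\lambda_n^2L_n^2$. To avoid this, I would first try the defective log-Sobolev inequality approach of \citet{chen2025uniform}, which interpolates entropy and Fisher information and requires $m\gtrsim L_n\lambda_n\bar C_{\mathrm{LSI},n}$. Since the statement is only up to $\lesssim$, I would state the constants loosely and defer to their Theorem for the precise bookkeeping.
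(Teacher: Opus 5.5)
Your proposal follows the paper's route: the paper simply invokes the stationary, uniform-in-$m$ propagation-of-chaos bound of \citet[Theorem~24]{kook2024sampling}, and your passage from $\kld{\Qnm}{\Qn^{\otimes m}}$ to $m\,\kld{\barQnm}{\Qn}$ by convexity and the chain rule is the step that turns such a joint bound into the stated one. One caveat on your heuristic sketch, which does not affect the conclusion since you defer the constants to the cited theorem as the paper does: the $m$-uniform term of order $L_n\lambda_n\bar C_{\mathrm{LSI},n}d$ comes from the leave-one-out error incurred when linearising $Q\mapsto S_n(P_Q)$ at $\widehat Q_m$ rather than at $\widehat Q_m^{-j}$, which your first step omits (as written, your chain would give the false conclusion $\kld{\Qnm}{\Qn^{\otimes m}}=O(1/m)$), whereas the $\lambda_n^2L_n^2$ Fisher-information term you worry about carries an extra factor $1/m$ and is dominated once $m\gtrsim L_n\lambda_n\bar C_{\mathrm{LSI},n}$.
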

This is essentially Theorem~24 of \citet{kook2024sampling}, specialised to identical one-particle potentials $V_n=U/\lambda_n$ and $\sigma_n^2=2/\lambda_n$.
It is instructive to compare~\Cref{ass:poc} with the assumptions used in our prediction-centric analysis.
Verifying~\ref{ass:poc-smooth} reduces to showing that the first variation of the score has a gradient with respect to the parameter that is jointly Lipschitz in both $\theta$ and $Q$.
This condition is relatively mild, but does not follow immediately from~\Cref{ass:convex} (see the following subsection for discussion on the applicability to the scores considered above).
The log-Sobolev condition of~\ref{ass:poc-lsi}, however, does not generally follow from our assumptions.
One sufficient route is a bounded-oscillation perturbation argument:
if the prior $\Pi$ satisfies a log-Sobolev inequality and $S_n(P_Q)$ satisfies a bounded-oscillation perturbation inequality
then the Holley--Stroock perturbation principle allows us to transport the log-Sobolev constant of $\Pi$ to an upper bound on the log-Sobolev constant of $Q_n^\mu$ \citep{holley_logarithmic_1987}.
This condition holds for bounded kernels on compact parameter spaces, but may yield a worsening log-Sobolev constant when the learning rate $\lambda_n$ grows with $n$.
Moreover, for the logarithmic score, this condition typically cannot hold without additional restrictions, particularly on noncompact parameter spaces.
\subsection{On the assumptions of~\Cref{prop:poc-reference-route}}
Consider, for example, the squared maximum mean discrepancy. 
Then, up to the normalisation convention used for $S_n$, the first variation is
\begin{equation*}
    \frac{\delta S_n(P_Q)}{\delta Q}(\theta)
    =
    2
    \left\langle
        \mu(P_\theta),
        \mu(P_Q)-\mu(\delta_{\{x_{1:n}\}})
    \right\rangle_{\mathcal H}.
\end{equation*}
If $\theta\mapsto \mu(P_\theta)$ has uniformly bounded and Lipschitz first derivative as a map into $\mathcal H$, then this function has the required smoothness.
This is not an unreasonable condition for smooth simulators and smooth bounded kernels, but it excludes the non-differentiable forward maps which are permitted by our approach.

For the logarithmic score, the same condition is rather more restrictive.
In this case, the first variation is
\begin{equation*}
    \frac{\delta S_n(P_Q)}{\delta Q}(\theta)
    =
    -\frac{1}{n}\sum_{i=1}^{n}
    \frac{p_\theta(x_i)}{p_Q(x_i)}.
\end{equation*}
Thus, uniform Lipschitzness requires, at minimum, smooth likelihoods and lower-envelope control of the mixture densities $p_Q(x_i)$ over the relevant class of mixing distributions.
Such assumptions can be imposed in restricted models, but they were not required for the finite-particle bias argument in~\Cref{lemma:log-score-beta}.
They also fail immediately in examples where the forward map is non-differentiable in $\theta$, such as the nuclear-physics application in~\Cref{sec:physics}.
\section{Further experimental details}
\subsection{Gaussian location model}\label{sec:numerical-demonstration-supp}
The underlying data generating measure is the bimodal Gaussian $P_0 = 0.5\mathcal N(-1.5, \sigma^2) + 0.5\mathcal N(1.5, \sigma^2)$, from which we draw $n = 100$ observations $x_{1:n}$;
the likelihood is $P_\theta = \mathcal N(\theta, \sigma^2)$ with $\sigma^2 = 0.2^2$;
and the prior is $\Pi = \mathcal N(0, \tau^2)$ with $\tau^2 = 5^2$.
We use a learning rate $\lambda_n = n$ and $m = 2$ particles.

We evaluate $\dt \barQnm$ on a uniform grid $\mathcal G = \{g_1, \dots, g_G\}$ of $G = 10{,}000$ points
on $[-6, 6]$, approximating the integral over the remaining particles $g_{2:m}$ by a sum:
\begin{equation*}
  \overline q_{n,m} 
  (\theta)
  \propto
  \sum_{g_{2:m} \in \mathcal G^{m-1}}\left(
  \left\{ \prod_{j=2}^{m} 
  { \pi }
  (g_j) \right\}
  \prod_{i=1}^{n}
  \left[
    \tfrac1m \left\{\, p_\theta(y_i) + \textstyle\sum_{j=2}^{m} p_{g_j}(y_i)\,\right\}
  \right]^{2}
  \right)
   \pi 
  (\theta),
\end{equation*}
normalised so that $\sum_{\theta\in\mathcal G} \overline q_{n,m} (\theta) = 1$.
Because the summand is invariant to permutations of $g_{2:m}$, we sum over multisets of $\{g_2,\dots,g_m\}$ rather than ordered tuples, weighting each by its multinomial multiplicity;
this reduces the cost from $G^{m-1}$ to $\binom{G + m - 2}{m-1}$ terms while leaving the value unchanged.

Initial particles for the \ac{wgf} and Metropolis adjusted Langevin were drawn from the prior, and use $4{,}000$ warm-up and $4{,}000$ retained iterations.
Both used a step size of $\dt t = \sigma^2/n = 4\times 10^{-4}$, and $67\%$ of proposals were accepted by the Metropolis step.
For the blocked Gibbs sampler, $\alpha = \lambda_n m / n = 2$, and again $4{,}000$ draws were used for warm-up with a further $4{,}000$ retained.
\subsection{Compartmental models}\label{sec:sir-supp}
The total population is assumed fixed at $N = 1000$ with initial state $(S_0, I_0, R_0) = (999, 1, 0)$.
The underlying differential equation is integrated from $t_0 = 0$ to $t_1 = 60$ with solver step $\mathrm{d}t = 0.1$.
Only the infected compartment is observed under additional Gaussian noise with standard deviation $\sigma = 3.0$.
The prior is placed on the log of the reproduction number and the log of the recovery rate: 
$\log R_0 = \log\beta - \log\gamma \sim \mathcal{N}(0.92, 0.1^2)$ and $\log\gamma \sim \mathcal{N}(-2.3, 0.2^2)$, so that a prior on $\beta$ is implied by the relationship $\beta = R_0\gamma$.
The Gibbs and \ac{pro} posteriors were fit with the maximum mean discrepancy estimated with $64$ Monte Carlo samples, where the kernel lengthscale was set to the plug-in estimate $\hat\sigma^2$ discussed in~\Cref{sec:sir} and $\lambda_n = 8$.
The \ac{pro} posterior was fit with $m=16$ particles. 
All posteriors are sampled with Hamiltonian Monte Carlo: 
the Bayes and Gibbs posteriors used $4$ chains with $4{,}000$ warm-up and $1{,}000$ retained draws each, and the \ac{pro} posterior used a single chain with $1{,}000$ warm-up and $1{,}000$ retained draws.
\Cref{fig:sir-lockdown-r0} shows the Bayes, Gibbs, and \ac{pro} posterior distributions over the reproduction number.
\begin{figure}
    \centering
    \includegraphics[width=\linewidth]{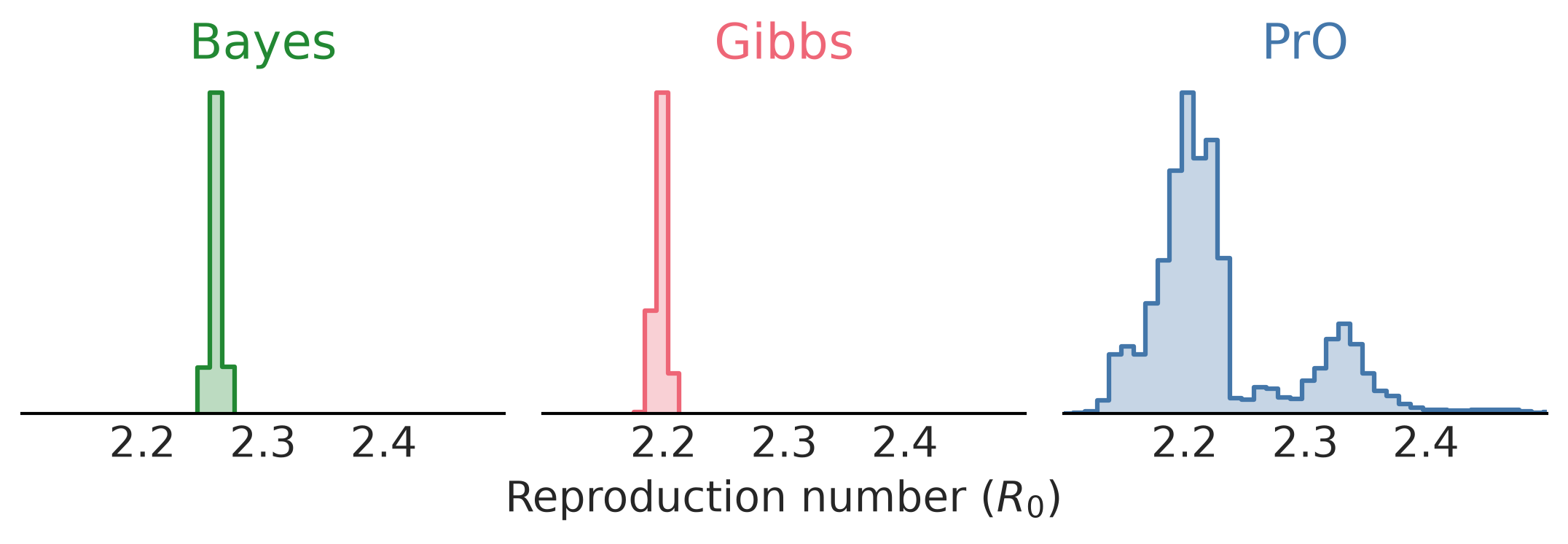}
    \caption{\textit{Compartmental model with unmodelled infection rate shift}\/ (\ref{dgp:lockdown}). 
    \textcolor{bayescolour}{\bfseries\sffamily Bayes}, \textcolor{gibbscolour}{\bfseries\sffamily Gibbs}, and \textcolor{procolour}{\bfseries\sffamily \ac{pro}} posterior reproduction number $R_0 = \beta / \gamma$: 
    the true data-generating process is a mixture of $R_0^{\rm{pre}} = 2.5$ pre-lockdown and $R_0^{\rm{post}} = 2.0$ post-lockdown.}
    \label{fig:sir-lockdown-r0}
\end{figure}
\subsection{Spatial crime modelling}\label{sec:guerry-supp}
The spatial-lag Poisson model is fit with independent standard Gaussian priors on all coefficients, and learning rate $\lambda_n = n$.
Both posteriors are evaluated by leave-one-d\'epartement-out cross-validation over the $85$ folds, refitting the model on each training fold. 
The Bayes posterior is sampled with Hamiltonian Monte Carlo using four chains of $1{,}000$ warm-up and $1{,}000$ retained draws each; 
the PrO posterior uses a single chain of $1{,}000$ warm-up and $4{,}000$ retained draws.

The middle panel of~\Cref{fig:guerry} reports the difference between the leave-one-region-out cross-validation score of the \ac{pro} posterior and the Bayes posterior.
That is, for $\overline{Q}_{-i,m}$ the \ac{pro} posterior fit with $m$ particles and the $i$-th region left out, and $\Pi_{-i}$ the leave-one-out Bayes posterior, we compute
\begin{equation*}
    \Delta\mathrm{elpd}_i(m) = \left\{\log p_{\overline{Q}_{-i,m}}(y_i\mid y_{-i}) - \log p_{\Pi_{-i}}(y_i\mid y_{-i})\right\}.
\end{equation*}
We then report the mean $\sum_{i=1}^n\Delta\mathrm{elpd}_i(m)$ and standard error $\sqrt{n\,\mathrm{var}\{\Delta\mathrm{elpd}_i(m)\}}$.
The dashed green line at zero denotes equivalent performance to the Bayes posterior; 
values above it indicate better held-out predictive performance than Bayes.
\subsection{Nuclear physics}\label{sec:physics-supp}
Parameters that are constrained to be strictly positive were transformed onto the log scale.
The (possibly transformed) optical-model parameters are given Gaussian priors centred at the point-estimates derived by \citet{koning_local_2003}, and scaled by a factor of $0.4$ of that mean value.
Gibbs and \ac{pro} posteriors are fit with a learning rate of $\lambda_n = 14$. 
All posteriors are sampled with random walk Metropolis--Hastings where the step size is adapted with Robbins--Monro stochastic approximation \citep{andrieu_tutorial_2008} to target an acceptance rate of 0.234. 
The Bayes posterior uses $4$ chains with a total of $32{,}000$ samples, of which the last $16{,}000$ are retained; 
the Gibbs posterior uses $4$ chains with a total of $96{,}000$ samples of which the last $64{,}000$ are retained.
The \ac{pro} posterior uses a single chain with $32{,}000$ total samples of which the last $16{,}000$ are retained.
Both the Gibbs and \ac{pro} posteriors are fit using the maximum mean discrepancy which is estimated from $128$ Monte Carlo draws.
The observations exist across several magnitudes, and thus a single kernel with a single lengthscale can fail to deliver reliable inference.
Instead, we fit those posteriors using an equally-weighted mixture-of-kernels \citep[see, e.g., Section 4.1 of][for a review]{briol_statistical_2019} approach so that $\kappa(x,y) = \sum_{i=1}^n \kappa(x,y;\,s_i^2)$ where the component kernel lengthscales are the squared statistical errors at each angle $s_i^2$.
\end{document}